\definecolor{newcolor}{rgb}{.8,.349,.1}
\journal{ }
\newtheorem{proposition}{Proposition}
\newtheorem{remark}{Remark}
\newtheorem{theorem}{Theorem}
\newtheorem{definition}{Definition}
\newtheorem{lemma}{Lemma}
\def\dd{\text{d}}
\begin{document}


\begin{frontmatter}

\title{Stochastic filtering for multiscale stochastic reaction networks based on hybrid approximations}%
\tnotetext[tnote1]{This work was funded by the Swiss National Science Foundation under grant number 182653. This work will be presented in part at the 59th IEEE Conference on Decision and Control, Jeju Island, Republic of Korea, 2020.}

\author[1]{Zhou {Fang}}
\author[1]{Ankit {Gupta}}
\author[1]{Mustafa {Khammash}\corref{cor1}}
\cortext[cor1]{Corresponding author: }
\ead{mustafa.khammash@bsse.ethz.ch}

\address[1]{Department of Biosystems Science and Engineering, ETH Zurich, Mattenstrasse 26, 4058 Basel, Switzerland.}


\begin{abstract}
In the past few decades, the development of fluorescent technologies and microscopic techniques has greatly improved scientists' ability to observe real-time single-cell activities.
In this paper, we consider the filtering problem associate with these advanced technologies, i.e., how to estimate latent dynamic states of an intracellular multiscale stochastic reaction network from time-course measurements of fluorescent reporters.
A good solution to this problem can further improve scientists' ability to extract information about intracellular systems from time-course experiments.

A straightforward approach to this filtering problem is to use a particle filter where particles are generated by simulation of the full model and weighted according to observations.
However, the exact simulation of the full dynamic model usually takes an impractical amount of computational time and prevents this type of particle filters from being used for real-time applications, such as transcription regulation networks. 
Inspired by the recent development of hybrid approximations to multiscale chemical reaction networks, we approach the filtering problem in an alternative way. 
We first prove that accurate solutions to the filtering problem can be constructed by solving the filtering problem for a reduced model that represents the dynamics as a hybrid process. 
The model reduction is based on exploiting the time-scale separations in the original network and, therefore, can greatly reduce the computational effort required to simulate the dynamics.
As a result, we are able to develop efficient particle filters to solve the filtering problem for the original model by applying particle filters to the reduced model. 
We illustrate the accuracy and the computational efficiency of our approach using several numerical examples.
\end{abstract}

\begin{keyword}
\MSC \\ 60J22 \\ 62M20 \\ 65C05 \\ 92-08 \\ 93E11
\KWD \\ Multiscale networks \\ Stochastic chemical reaction networks \\ Filtering theory \\ Particle filters
\end{keyword}

\end{frontmatter}



\section{Introduction}

In the past few decades, scientists' ability to look into the dynamic behaviors of a living cell has been greatly improved by the fast development of fluorescent technologies \cite{zhang2002creating} and advances in microscopic techniques \cite{stephens2003light,vonesch2006colored}. 
Despite this big success, light intensity signals observed in a microscope can only report the dynamics of a small number of species in a cell, such as fluorescent proteins and mRNAs, and, therefore, leave other dynamic states of interest, e.g., gene (on/off) state, transcription factor abundance, or enzyme levels, indirectly observable.
Consequently, it is important to establish efficient stochastic filters to estimate latent states of intracellular biochemical reaction systems from these partial observations.

The filtering theory is a topic of active research in both control and statistics communities, which dates back to early works by Wiener \cite{wiener1949extrapolation} and Kalman \cite{kalman1960new}.
It is acknowledged that apart from some particular systems, e.g., linear systems with Gaussian noise and Markov chains with finite states, most nonlinear systems result in infinite-dimensional filters \cite{bain2008fundamentals}; therefore, obtaining numerical solutions to the filtering problem for nonlinear systems is a difficult task.
A powerful algorithm, called the particle filter (or sequential Monte Carlo method), was introduced in \cite{gordon1993novel}, which represent posterior distributions using a population of samples (also called particles) generated by importance sampling and resampling (see also \cite{doucet2009tutorial}).
Thanks to the ever-increasing computational power, a particle filter can solve the filtering problem efficiently in nonlinear non-Gaussian scenarios, and its convergence to the exact solution of the filtering problem is guaranteed under some mild conditions \cite{crisan2001particle,crisan2002survey,del2000branching,chopin2004central,hu2008basic,hu2011general,crisan2014particle,le2004stability}.

An intracellular reaction system is better modeled by a continuous-time Markov chain rather than an ordinary differential equation due to the presence of low copy number species, and it usually results in an infinite-dimensional filter which is not easy to calculate in a straightforward fashion.
In this scenario, the particle filter is a good candidate for solving the filtering problem.
The recent work in \cite{rathinam2020state} is good evidence of this point, where a particle filter was established based on exact partial state observations and shown to work well for many biological processes.
Despite the success of particle filters in solving the filtering problem for intracellular systems, the required heavy computational effort to simulate the system in the sampling step usually prevents it from being used for real-time applications.
Especially for multi-scale reaction systems where reactions fire at different timescales, the computational complexity of exact simulation methods (e.g., Gillespie stochastic simulation algorithm \cite{gillespie1976general,gillespie1977exact} and the next reaction method \cite{gibson2000efficient}) is proportional to the rate of the fastest reaction, and, hence, the algorithm takes a very long time to output an estimate.
A typical way to mitigate this type of problems is to replace the underlying system by a more tractable approximation so that a computationally efficient filter can be constructed based on the reduced model while the filter's accuracy is still preserved (see \cite[Section 2]{calzolari2006approximation}).
Following this idea, a method was proposed to obtain computationally efficient filters for biological systems by approximating the underlying Markov jump process by a diffusion process \cite{sun2008extended,chuang2013robust,liu2012state,calderazzo2019filtering}.
This method was shown to be efficient for broad classes of biochemical reaction systems; however, its drawback is also clear --- the diffusion process strategy loses its validity in low copy number scenarios.
Unfortunately, multi-scale reaction systems, which are common in systems biology \cite{kang2013separation}, usually involve species with low copy numbers; this fact precludes the usage of the diffusion approximation to these multi-scale systems and requires researchers to build new solutions. 

In this paper, we propose another strategy to obtain efficient particle filters based on hybrid approximations of multi-scale stochastic reaction networks.
The hybrid approximation technique treats the firing of fast reactions as continuous processes (see \cite{kang2013separation}) or removes them by quasi-stationary approximations \cite{cao2005slow,weinan2007nested}, as a consequence of which the computational complexity can be greatly reduced.
A systematic approach to constructing such an approximation was provided in \cite{kang2013separation}, and strategies for efficient simulation of these  approximations were introduced in \cite{crudu2009hybrid,hepp2015adaptive,duncan2016hybrid}.
Such a reduced model is called a hybrid approximation as some species exhibit jumping processes due to their low copy numbers, while others evolve continuously.
Therefore, this model is also called a piecewise deterministic Markov process as it is Markovian and governed by a deterministic differential equation between two neighboring jumping points.
Besides its application in running simulations, the hybrid approximation method can also be used to do sensitivity analysis for multi-scale stochastic reaction networks \cite{gupta2019sensitivity,gupta2014sensitivity}, which implies that this method has potential for broad applications. 
In this paper, we prove that the solution to the filtering problem for the original system can be constructed by solving the filtering problem for the hybrid approximation if some mild conditions are satisfied.
Notably, while the result seems intuitive, the mathematical analysis is non-trivial because the filtering problem focuses on the conditional probability rather than the {unconditional probability (i.e., the probability of the full system)}, and the convergence of the latter does not necessarily imply the convergence of the former. 
The obtained result then enables us to develop efficient particle filters to infer system parameters (reaction constants) and latent dynamic states of the original model instantaneously by applying particle filters to the reduced model. 

Compared with our previous work \cite{fang2020stochastic}, where system parameters are known and only discrete-time observations are considered, here we additionally consider the parameter uncertainty and the case of continuous-time observations.
We show that in the presence of parameter uncertainties, the proposed filter can efficiently solve the filtering problem in both the continuous-time observation case and the discrete-time observation case. 
Also, several numerical examples of gene expression networks are presented to illustrate our approach.

It is worth noting that the idea of applying time-scale separation techniques to developing computationally manageable particle filters has already been proposed in \cite{park2008problem,park2010dimensional,park2011particle} for diffusion type stochastic models, and its efficiency is proven in the literature \cite{imkeller2013dimensional}.
Compared with these works, our paper considers a different type of underlying models, the Markov jump process, and provides a much-simplified proof for the convergence of the approximate filters (based on the framework by \cite{calzolari2006approximation}).
All these references and our paper show the efficiency of applying the time-scale separation technique to the filtering problem for multi-scale systems. 

The rest of this paper is organized as follows.
In \Cref{Sec. preliminary}, we first introduce the filtering problem for a chemical reaction network system and, also, review the hybrid model reduction technique and the theory of particle filters. 
Then, our main results are presented in \Cref{Sec. main result}, stating that the filtering problem for a multi-scale reaction system can be accurately and computationally efficiently solved by a particle filter to its hybrid approximation.
Also, we provide the workflow of our approach in \Cref{Sec. main result} to guide practitioners to solve their specific problems.
To improve the readability of this paper, we put proofs of our main results in the {appendices}. 
In \Cref{Sec. numerical examples}, two numerical examples are presented to illustrate our approach.
Finally, \Cref{Sec conclusion} concludes this paper.

\section{Preliminary} {\label{Sec. preliminary}}

\subsection{Notations}
In this paper, we denote the natural filtered probability space by $( \Omega,  \mathcal F, \{\mathcal F_{t}\}_{t\geq 0}, \mathbb P)$, where $\Omega$ is the sample space, $\mathcal F$ is the $\sigma$-algebra, $\{\mathcal F_{t}\}_{t\geq 0}$ is the filtration, and $\mathbb P$ is the natural probability.
Also, we term $\mathbb N_{>0}$ as the set of positive integers, $\mathbb R^{n}$ with $n$ being a positive integer as the space of $n$-dimensional real vectors, $\|\cdot\|$ as the Euclidean norm, $|\cdot|$ as the absolute value notation, and $s\wedge t$ (for any $s,t \in \mathbb{R}$) as $\min(s,t)$. 
For any positive integer $n$ and any $t>0$, we term $\mathbb D_{ \mathbb R^{ n}}[0,t]$ as the Skorokhod space that consists of all $\mathbb R^n$ valued cadlag functions on $[0,t]$. {Here, cadlag functions refer to the functions that are right continuous and have a left limit.}

\subsection{Chemical reaction networks and their filtering problems}{\label{subsection chemical reaction network}}

In this paper, we consider an intracellular system undergoing reactions
\begin{equation*}
v_{1,j} S_1 + \dots + v_{n,j} S_n \ce{ ->[$k_{j}$]} v'_{1,j} S_1 + \dots + v'_{n,j} S_n, \quad j=1,\dots, r,
\end{equation*}
where $S_i$ ($i=1,\dots,n$) are distinguished species in the systems, $v_{i,j}$ and $v'_{i,j}$ are non-negative integers, called stoichiometric coefficients, $k_{j}$ are the reaction constants, and $r$ is the number of reactions.
Also, we name the linear combination of species (e.g, $v_{1,j} S_1 + \dots + v_{n,j} S_n$) as a complex.
Let $X(t)=\left(X_1(t),X_2(t), \dots, X_n(t)\right)^{\top}$ be the numbers of molecules of these species at time $t$, then the system's dynamics following mass-action kinetics can be expressed as \cite{anderson2011continuous}
\begin{equation*}
X(t)= X(0)+\sum_{j=1}^{r} (v'_{\cdot j}-v_{\cdot j}) R_j\left( \int_0^t \lambda_j (X(s))\dd s \right)
\end{equation*}
where $R_{j}(t)$ ($j=1,\dots,r$) are mutually independent unit rate Poisson processes, $\lambda_j (x)\triangleq k_{j} \prod_{i=1}^n \frac{x_i!}{\left(x_i-v_{i,j}\right)!}\mathbbold 1_{\{x_i\geq v_{i,j}\}}$ are propensities with $\mathbbold{1}_{\{\cdot\}}$ the indicator function,
and the initial condition $X(0)$ has a particular known distribution.

In a practical biochemical reaction system, different species can vary a lot in abundance, and rate constants can also vary over several orders of magnitude. 
{To emphasize this phenomenon, we reformulate the dynamical equation as follows using the normalized quantities of the aforementioned variables together with their scales. 
Following \cite{kang2013separation}, we first choose a large number $N$ (viewed as the scaling factor) and then normalize all the quantities by a power of $N$.
Specifically, we term the variable $\alpha_i$ (for $i=1,\dots, n$) satisfying $X^{N}_i(t)\triangleq N^{-\alpha_i}X_{i}(t)= O(1)$ as the magnitude of species $S_{i}$, the variable $\beta_j$ (for $j=1,\dots,r$) satisfying $k'_j\triangleq k_j N^{-\beta_j}= O(1)$ as the magnitude of the reaction constant $k_{j}$, the variable $\gamma$ as the time scale of interest, and $X^{N,\gamma}(t) \triangleq X^{N}(tN^{\gamma})$ as the normalized state in the time scale of interest.
The choice of the scaling factor $N$ is problem-specific; a suitable $N$ should clearly separate the scales of species numbers and rate constants. 
Though such a scaling factor might not always exists, it often does for multiscale reaction systems.
Also, in this paper, we only consider problems in a finite time window where the scales of the species and rate constants do not change over time. 
}
Finally, in the normalized coordinate, the dynamical equation can be expressed as
\begin{align}{\label{eq. scaling stochastic dynamics}}
X^{N,\gamma}(t) =& X^{N,\gamma}(0)   +\sum_{j=1}^{r} \Lambda^{N} (v'_{\cdot j}-v_{\cdot j}) R_j\left( \int_0^{t} N^{\gamma+\tilde \rho_j} \lambda^N_{j}(X^{N,\gamma}(s)) \dd s \right) 
\end{align}
where $\Lambda^{N}\triangleq \text{diag}(N^{-\alpha_1},\dots,N^{-\alpha_s})$,  $\tilde \rho_j\triangleq\beta_j + \sum_{i=1}^{n}v_{ij} \alpha_i$ is the time scales of reactions, and 
$\lambda^{N}_j (x)\triangleq k'_j \prod_{i=1}^{n} \prod_{\ell=0}^{v_{i,j}-1}  (x_{i} - \ell N^{-\alpha_i}) \mathbbold 1_{\left\{x_i\geq N^{-\alpha_i}v_{i,j}\right\}} $ is the normalized propensity and has the constant order.
Usually, the precise values of the initial condition $ X^{N}(0) $ and reaction constants $\mathcal K\triangleq \left(k'_1,\dots, k'_r\right)$ are unknown to researchers due to the variability of these parameters from cell to cell.
Therefore, in this paper, we consider these parameters to be subject to some probability distributions. 
For fixed $\mathcal K$, the infinitesimal generator of \eqref{eq. scaling stochastic dynamics}, denoted as $\mathcal L_{\mathcal K}^{N,\gamma}$, satisfies\footnote{Note that the terms $\lambda^{N}_j (x)$ depend on the choice of $\mathcal K$.}
\begin{equation*}
\mathcal L^{N,\gamma}_{\mathcal K} f(x)= 
\sum_{j=0}^{r} \lambda^{N}_j (x) N^{\gamma+\tilde \rho_{j}} \left[f\left(x+\Lambda^{N} (v'_{\cdot j}-v_{\cdot j})\right) - f(x)\right] 
\end{equation*}
for any bounded continuous function $f$ on $\mathbb R^{n}$.

We assume that some species in the system are fluorescent reporters, and $m$ channels of light intensity signals of these reporters can be observed from a microscopic platform (e.g., the one in \cite{rullan2018optogenetic}).
Mainly, two types of observations, continuous-time observations and discrete-time observations, are present in practical experiments depending on hardware properties.
We denote the continuous-time observations by $\dot Y^{N,\gamma}_{c}(\cdot)$ and the discrete-time observations by $Y^{N,\gamma}_d(\cdot)$ and, moreover, assume them to satisfy 
\begin{align}
Y^{N,\gamma}_{c}(t)&= \int_0^t h\left(X^{N,\gamma}(s)\right) \dd s  + B(t) \qquad &&\forall t\in[0,+\infty) \label{eq. continuous time observation}\\
Y^{N,\gamma}_{d}(t_i)&= h\left(X^{N,\gamma}(t_i)\right)  + W(t_i) \qquad && \forall i\in \mathbb N_{>0}, \label{eq. discrete time observation}
\end{align}
where $h$ is an $m$-dimensional bounded Lipschitz continuous function indicating the relation between the observation and the reaction process,  $B(t)$ is an $m$-vector of independent standard Brownian motions, $\{t_i\}_{i\in\mathbb N_{>0}}$ is a strictly increasing sequence of time points at which the discrete observation comes, and  $\{W_{\ell}(t_i)\}_{i\in\mathbb N_{>0}}$ are mutually independent $m$-variate Gaussian random variables whose covariance matrices are equal to the identity matrix.
{In \eqref{eq. continuous time observation} and \eqref{eq. discrete time observation}, the derivative $\dot Y^{N,\gamma}_{c}(t)$ $\left(=  h\left(X^{N,\gamma}(t)\right)   + \dot B(t)\right)$ and $Y^{N,\gamma}_{d}(t_i)$ model the light signals measured by a microscope in the continuous-time setting and the discrete-time setting, respectively.
	Since $\dot B(t)$ is mathematically tricky, we write the continuous-time observations using the integral form in \eqref{eq. continuous time observation} and use the accumulative quantity $Y^{N,\gamma}_{c}(t)$ later for inference.}
For these observations, we also assume that these Brownian motions and Gaussian random variables are independent of the Poisson processes $R_{j} (\cdot)$ ($j=1,\dots,r$) and system parameters $\mathcal K$ and $X^{N}(0)$; in other words, the noise in observations is independent of the underlying reaction system.
Also, we term $\mathcal Y^{N,\gamma}_{c,t}$ as the filtration generated by the continuous observation $\{Y^{N,\gamma}_{c}(s)\}_{0<s\leq t}$ and $\mathcal Y^{N,\gamma}_{d,t_i}$ as the filtration generated by the discrete observation $\{Y^{N,\gamma}_{d}(t_j)\}_{0<j\leq i}$.

The goal of the filtering problem is to infer latent states of the underlying system, e.g., $\phi(\mathcal K, X^{N,\gamma}(t))$ with $\phi$ a known measurable function from $\phi:\mathbb R^{r}\times \mathbb R^{n}$ to $ \mathbb R$, based on the collected observations. 
Specifically, it requires one to compute the conditional expectation $\pi^{N,\gamma}_{c,t}(\phi)\triangleq \mathbb E_{\mathbb P}\left[\phi\left(\mathcal K, X^{N,\gamma}(t)\right) \left | \mathcal Y^{N,\gamma}_{c,t}\right.\right]$ based on continuous-time observations, or the conditional expectation $\pi^{N,\gamma}_{d,t_i}(\phi)\triangleq \mathbb E_{\mathbb P}\left[\phi\left(\mathcal K, X^{N,\gamma}(t_i)\right) \left | \mathcal Y^{N,\gamma}_{d,t_i}\right.\right]$ based on discrete-time observations.
Also, we call these conditional expectations as true filters of the chemical reaction network system.
For a well-behaved function $\phi$, the conditional expectation $\pi^{N,\gamma}_{c,t}(\phi)$ satisfies the Kushner-Stratonovich equation \cite{bain2008fundamentals}
\begin{align*}
	\pi^{N,\gamma}_{c,t}(\phi)=&\pi^{N,\gamma}_{c,0}(\phi)+\int_0^t \pi^{N,\gamma}_{c,s}\left(\mathcal L^{N,\gamma} \phi\right) \dd s 
	+ \int_0^t \left[\pi^{N,\gamma}_{c,s}\left(\phi h^{\top} \right)-\pi^{N,\gamma}_{c,s}\left(h^{\top} \right)\pi^{N,\gamma}_{c,s}\left(\phi \right)\right]
	\left(\dd Y^{N,\gamma}_c-\pi^{N,\gamma}_{c,s}\left(h \right) \dd s \right)
\end{align*}
where $\mathcal L^{N,\gamma} \phi (\kappa,x)\triangleq \mathcal L^{N,\gamma}_{\kappa} \phi_{\kappa}( x) $ with $\phi_{\kappa}( \cdot)=\phi(\kappa, \cdot)$.
The equation is usually infinite-dimensional and has no explicit solution. 
In the discrete-time scenario, the conditional distribution satisfies a recursive expression \cite[Proposition 10.6]{bain2008fundamentals}, which also has no explicit solution in most cases.
As a result, we intend to find efficient algorithms that can quickly and accurately solve these filtering problems.

Before introducing our main results, we review the setup of hybrid approximations of multi-scale reaction systems and the theory of particle filters.

\subsection{Hybrid approximations of multi-scale chemical reaction networks} {\label{subsec hybrid approximation}}

\subsubsection{Hybrid approximations at the first timescale}

{By fixing $\alpha_i$ ($i=1,\dots,n$) and $\beta_j$ ($j=1,\dots,r$) and changing the scaling factor $N$, we obtain a one-parameter family of models $X^{N,\gamma}(\cdot)$.
To derive reduced models, we first assume the initial conditions of these models satisfy}
\begin{equation}{\label{eq. assumption initial conditons}}
	``\lim_{N\to \infty} X^{N}(0) \text{~exists $\mathbb{ P}$-almost surely, and its $i$-th entry is positive if $\alpha_i>0$.}"
\end{equation}
{which can be simply achieved by setting their initial conditions to be the same.}

Following the notations in \cite{kang2013separation}, we term $\Gamma_i^{+}\triangleq \{ j | v'_{i,j}>v_{i,j} \}$ and $\Gamma_i^{-}\triangleq \{ j | v'_{i,j}<v_{i,j} \}$. 
Then, the constant $\alpha_i-\max_{j\in \Gamma_i^{+}\cup \Gamma_i^{-}} \left(\beta_j+v_{\cdot j}^{\top} \alpha \right)$ is the timescale of the $i$-th species, i.e., the minimum $\gamma$ such that $X^{N,\gamma}_i (t)$ starts to evolve at rate of $O(1)$.
Moreover, we term $\gamma_1\triangleq \min_i
\left( \alpha_i-\max_{j\in \Gamma_i^{+}\cup \Gamma_i^{-}} \tilde \rho_j \right)$ as the parameter of the fastest timescale of the system and
$D^{\tilde \alpha} \triangleq \text{diag}( \mathbbold 1_{\{\alpha_1=\tilde \alpha\}}, \dots,  \mathbbold 1_{\{\alpha_n=\tilde \alpha\}})$ as a diagonal matrix indicating whether a species is at the scale of $N^{\tilde \alpha}$.

By neglecting all slow reactions ($\tilde \rho_j+\gamma_1<0$) and approximate fast reactions ($\tilde \rho_j+\gamma_1>0$) by a continuous process, one can arrive at a simplified dynamic model as follows.
\begin{align}{\label{eq. reduced model at the first time scale}}
X^{\gamma_{1}}(t)=& \lim_{N\to \infty} X^{N}(0) + \sum_{j : \gamma_1+\tilde \rho_j>0} \int_0^t \lambda'_j( X^{\gamma_1}(s))  D^{\gamma_1+\tilde \rho_j}\left(v'_{\cdot j}-v_{\cdot j}\right)\dd s  
 +  \sum_{j : \gamma_1+\tilde \rho_j=0 }
R_{j}\left(\int_0^t \lambda'_j( X^{\gamma_1}(s)) \dd s \right) D^0\left(v'_{\cdot j}-v_{\cdot j}\right) 
\end{align}
where $\lambda'_j(x)=\lim_{N\to+\infty} \lambda^N_j (x)$.
For any fixed $\mathcal K$, the infinitesimal generator of this process, denoted as $\mathcal L^{\gamma_1}_{\mathcal K}$, satisfies\footnote{Note that the propensities $\lambda'_j (x)$ depend on the choice of $\mathcal K$.}
\begin{align*}
	\mathcal L^{\gamma_1}_{\mathcal K} f(x)= 
	&\sum_{j : \gamma_1+\tilde \rho_j=0 } \lambda'_j (x) \left[f\left(x+D^0(v'_{\cdot j}-v_{\cdot j})\right) - f(x)\right]   
	+  	\sum_{j : \gamma_1+\tilde \rho_j>0 } \lambda'_j (x) \left[D^{\gamma_1+\tilde \rho_j}\left(v'_{\cdot j}-v_{\cdot j}\right)\right]^{\top}\nabla f(x)
\end{align*}
for any bounded continuously differentiable function $f$ on $\mathbb R$.
If we further assume that both $X^{N,\gamma_1}(\cdot)$ and $X^{\gamma_1}(\cdot)$ are non-explosive, i.e., 
 \begin{equation}{\label{eq. assumption infinite explosion time gamma1}}
 \lim_{c\to\infty}\tau^{N,\gamma_1}_{c}=\infty  \quad \text{ and } \quad
 \lim_{c\to\infty} \tau^{\gamma_1}_{c}=\infty   \quad  \text{$\mathbb{ P}$-a.s.}
 \end{equation}
 where $\tau^{N,\gamma_1}_{c}\triangleq \inf \{t \big| \| X^{N,\gamma_1}(t) \|\geq c \}$ and $\tau^{\gamma_1}_{c}\triangleq \inf \{t \big| \| X^{\gamma_1}(t) \|\geq c \}$, 
then $X^{N,\gamma_1}(\cdot)$ converges to $X^{\gamma_1}(\cdot)$ in distribution on any finite interval.
\begin{proposition}[Adapted from \cite{kang2013separation}]{\label{prop Kang kurtz gamma1}}
	If conditions \eqref{eq. assumption initial conditons} and \eqref{eq. assumption infinite explosion time gamma1} hold, then on any finite interval $[0,T]$, there holds $\left(\mathcal K, X^{N,\gamma_1}(\cdot )\right)\Rightarrow \left(\mathcal K,X^{\gamma_{1}}(\cdot)\right)$ in the sense of the Skorokhod topology. 
\end{proposition}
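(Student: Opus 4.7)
The plan is to adapt the Kang--Kurtz convergence theorem to the present setting where $\mathcal K$ and $X^N(0)$ are random, by first conditioning on $\mathcal K$, then verifying generator convergence for the conditional processes, and finally unconditionalizing. The key observation enabling the first step is that $\mathcal K$ is the \emph{same} random variable for every $N$: writing the desired Skorokhod convergence against an arbitrary bounded continuous functional $F$ on $\mathbb R^r\times \mathbb D_{\mathbb R^n}[0,T]$ and using the tower property, it suffices to show that $\mathbb E\left[F(\kappa,X^{N,\gamma_1}(\cdot))\mid \mathcal K=\kappa\right]\to \mathbb E\left[F(\kappa,X^{\gamma_1}(\cdot))\mid \mathcal K=\kappa\right]$ for $\mathbb P^{\mathcal K}$-almost every $\kappa$, after which bounded convergence integrates out $\mathcal K$. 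The initial conditions converge $\mathbb P$-almost surely by \eqref{eq. assumption initial conditons}, so the conditional problem is a clean deterministic-parameter model.

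For each fixed $\kappa$, I would verify pointwise convergence of the generators $\mathcal L^{N,\gamma_1}_{\kappa} f \to \mathcal L^{\gamma_1}_{\kappa} f$, uniformly on compact subsets of $\mathbb R^n$, for a separating class of test functions (e.g.\ smooth compactly supported $f$). The sum defining $\mathcal L^{N,\gamma_1}_{\kappa}$ splits into three regimes. Slow reactions ($\gamma_1+\tilde\rho_j<0$) carry a prefactor $N^{\gamma_1+\tilde\rho_j}\to 0$ multiplying a bounded term and vanish. Critical reactions ($\gamma_1+\tilde\rho_j=0$) produce the jump terms in $\mathcal L^{\gamma_1}_{\kappa}$, because $\Lambda^N(v'_{\cdot j}-v_{\cdot j})\to D^0(v'_{\cdot j}-v_{\cdot j})$ entrywise (entries with $\alpha_i>0$ decay at rate $N^{-\alpha_i}$) and $f$ is continuous. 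Fast reactions ($\gamma_1+\tilde\rho_j>0$) produce the drift terms via a first-order Taylor expansion: by the very definition of $\gamma_1$, entries $i$ of $N^{\gamma_1+\tilde\rho_j}\Lambda^N(v'_{\cdot j}-v_{\cdot j})$ with $\alpha_i>\gamma_1+\tilde\rho_j$ vanish, entries with $\alpha_i=\gamma_1+\tilde\rho_j$ equal $v'_{ij}-v_{ij}$, and entries with $\alpha_i<\gamma_1+\tilde\rho_j$ must have $v'_{ij}=v_{ij}$ (else $\gamma_1$ would not be the fastest scale), so the first-order term converges to $[D^{\gamma_1+\tilde\rho_j}(v'_{\cdot j}-v_{\cdot j})]^{\top}\nabla f(x)$. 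The second-order Taylor remainder carries an extra negative power of $N$ and hence is negligible. Together with the uniform convergence $\lambda^N_j(x)\to \lambda'_j(x)$ on compacts, this establishes the required generator convergence.

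With generator convergence in hand, I would invoke a standard Markov-process convergence theorem for martingale problems (e.g.\ Ethier--Kurtz, Theorem~4.8.10 or Corollary~4.8.16), localized by the stopping times $\tau^{N,\gamma_1}_c$ and $\tau^{\gamma_1}_c$. Since $\mathcal L^{\gamma_1}_{\kappa}$ is the generator of a piecewise-deterministic Markov process with bounded-on-compacts characteristics, the limit martingale problem is well-posed on each $[0,T\wedge \tau^{\gamma_1}_c]$, which identifies the subsequential limits of the stopped processes as $X^{\gamma_1}(\cdot)$ stopped at $\tau^{\gamma_1}_c$. Letting $c\to\infty$ and using the non-explosion assumption \eqref{eq. assumption infinite explosion time gamma1} promotes this to convergence on the full interval $[0,T]$, and integrating back over $\kappa$ yields the joint Skorokhod convergence.

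The main obstacle I anticipate is twofold. First, one must keep the Taylor-remainder estimate for the fast reactions truly uniform on compact sets and show it vanishes at the correct rate; this is delicate because the prefactor $N^{\gamma_1+\tilde\rho_j}$ grows while $\Lambda^N(v'_{\cdot j}-v_{\cdot j})$ only decays entrywise at species-specific rates, so the bookkeeping hinges on exactly the scaling inequalities used to define $\gamma_1$. Second, making the measurability and almost-sure statements in $\kappa$ rigorous, so that the conditional convergence integrates cleanly against $\mathbb P^{\mathcal K}$, requires verifying that both generator convergence and the non-explosion assumption \eqref{eq. assumption infinite explosion time gamma1} are compatible with the joint law of $(\mathcal K, X^N(0))$ rather than merely holding for deterministic $\kappa$.
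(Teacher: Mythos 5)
Your proposal is correct, but it takes a more self-contained route than the paper, which disposes of this proposition in one line by invoking Theorem~4.1 of Kang and Kurtz directly. What you actually reconstruct is (i) the content of that cited theorem, via generator convergence for each fixed $\kappa$ (your three-regime split of the reactions and the scaling bookkeeping showing $N^{\gamma_1+\tilde\rho_j}\Lambda^N(v'_{\cdot j}-v_{\cdot j})\to D^{\gamma_1+\tilde\rho_j}(v'_{\cdot j}-v_{\cdot j})$ are exactly right, including the observation that $\alpha_i<\gamma_1+\tilde\rho_j$ forces $v'_{ij}=v_{ij}$ by the definition of $\gamma_1$), followed by a localized martingale-problem limit theorem; and (ii) the genuinely new ingredient relative to the citation, namely the disintegration over $\mathcal K$. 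The paper leaves step (ii) entirely implicit, even though it is the only part that is not literally in Kang--Kurtz (their result is for deterministic rate constants), so your conditioning-plus-bounded-convergence argument is a useful explication of what ``adapted from'' means here; to make it airtight you should fix a countable convergence-determining class of functionals on $\mathbb R^r\times\mathbb D_{\mathbb R^n}[0,T]$ so that the $\kappa$-null sets can be collected into one, and note that \eqref{eq. assumption initial conditons} and \eqref{eq. assumption infinite explosion time gamma1}, being $\mathbb P$-a.s.\ statements, hold conditionally on $\mathcal K=\kappa$ for $\mathbb P^{\mathcal K}$-a.e.\ $\kappa$. The trade-off between the two routes is the usual one: the paper's citation is shorter but hides the adaptation; your argument is longer but makes the dependence on the scaling inequalities and on the well-posedness of the limiting piecewise-deterministic martingale problem explicit, and it would extend more transparently to settings where $\mathcal K$ is allowed to vary with $N$ (where a naive citation would fail). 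Note also that Kang and Kurtz themselves argue via the random time-change representation and relative-compactness/averaging rather than via generator convergence; for the first timescale, where no averaging is needed, your generator route is an equally valid and arguably more elementary alternative.
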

\begin{proof}
	The proof follows easily from \cite[Theorem 4.1]{kang2013separation}
\end{proof}

Note that the convergence result does not always hold on the infinite time interval;
one intuitive explanation is that the limit model may preclude bi-stability or some other phenomenon that the full model has (see \cite[Section VI]{gillespie2000chemical}).

\begin{remark}{\label{remark computation efficiency at the first time scale}}	
	The computational complexity to simulate $X^{\gamma_1}(\cdot)$ can be greatly lower than the complexity to simulate $X^{N, \gamma_1}(\cdot)$, because the former avoids the exact simulation of fast reactions ($\gamma+\beta_{j}+v^{\top}_{\cdot j} \alpha >0$), which consume a lot of computational resources to update the system at a rate proportional to  $N^{\gamma_1+\tilde \rho_j}$.
\end{remark}

\subsubsection{Hybrid model at the second timescale}
The second level of model reduction corresponds to the balance of complexes. 
Similar to the first timescale case, for any $\theta \in \mathbb{R}^{n}$, we term $\Gamma_\theta^{+}\triangleq \{j: \theta^{\top}(v'_{\cdot i}-v_{\cdot i})>0\}$ 
and $\Gamma_{\theta}^-\triangleq \{j: \theta^{\top}(v'_{\cdot i}-v_{\cdot i})<0\}$. 
The timescale of the complex $\sum_{i=1}^{n} \theta_i S_i$ is denoted by $\gamma_{\theta} \triangleq  \max_{i: \theta_i>0} \alpha_i -\max_{j\in \Gamma^-_{\theta} \cap \Gamma^+_{\theta}} \left(\beta_j+v_{\cdot j}^{\top} \alpha \right)$.
To avoid the degeneracy in the limit, we only consider those cases satisfying
\begin{align}{\label{eq. condition for the second time scale 1}}
	\max_{j\in \Gamma^-_{\theta} } \left(\beta_{j}+v_{\cdot j}^{\top} \alpha \right)= 	\max_{j\in \Gamma^+_{\theta} } \left(\beta_{j}+v_{\cdot j}^{\top} \alpha \right), 
	\quad \text{or} \quad
	\gamma \leq \gamma_{\theta}, \quad \forall \theta \in \mathbb{R}^{n}_{\geq 0},
\end{align}
which prevents the concentration of the complex from exploding or diminishing to zero as the system scale grows \cite{kang2013separation}.
Notably, the above condition always holds for  $\gamma=\gamma_1$ (see \cite{kang2013separation}).
If there exists a constant, $\gamma_2$, such that
\begin{equation}{\label{eq. assumption for the second time scale 2}}
	\gamma_2\triangleq \inf_{\theta\in \mathbb R^n_{\geq 0}} \{\gamma_{\theta}| \gamma_\theta >\gamma_1 \} >\gamma_1, \text{~and~} \gamma_2 \leq \sup\{\gamma | \eqref{eq. condition for the second time scale 1} \text{ holds} \},
\end{equation}
then we term  $\gamma_2$ as the second timescale for the system.

Let $e_{i}$ be a unit vector of $\mathbb{R}^{n}$ with $i$-th element being 1 and the rest being 0.
We denote $\mathbb L_1$ as the space spanned by $\mathbb S_1\triangleq \left\{e_{i} \big| \exists j \text{ s.t. } e_{j}^{\top} D^{r_1+\tilde \rho_j}(v'_{\cdot j}-v_{\cdot j})\neq 0\right\}$
and $\mathbb{L}_2$ as the space spanned by $\mathbb S_2=\left\{\theta \in \mathbb{R}^{n}_{\geq 0}\big| \theta ^{\top} D^{r_1+\tilde \rho_j} (v'_{\cdot j}-v_{\cdot j})= 0 \right\}$. 
Note that the subspaces $\mathbb L_1$ and $\mathbb L_2$ are not necessarily orthogonal, because their intersection can contain non-zero vectors.
At the second timescale, one can easily see that the species correspond to $\mathbb S_1$ fluctuate dramatically due to the reactions of the first timescale (i.e., $D^{\gamma_1+\tilde \rho_j}$ being non-zero), however, these reactions do not influence the slow complex corresponding to $\mathbb S_2$. 

We further denote 
the projection operator onto $\mathbb{L}_2$ by $\Pi_2$, and the identity operator by $I$.
For any fixed $\mathcal K\in\mathbb R^{r}_{>0}$ and $x_2\in\mathbb L_2$, a generator $\mathcal L^{\gamma_1}_{\mathcal K, x_2}$ with state space $\{x| x=(I-\Pi_2) \tilde x, ~ \Pi_2 \tilde x=x_2,~\tilde x\in\mathbb R^n  \}$ is defined by 
\begin{equation*}
	\mathcal L^{\gamma_1}_{\mathcal K, x_2} f(x_1)=  \mathcal L^{\gamma_1}_{\mathcal K} f(x_1+x_2)
\end{equation*}
for any bounded continuously differentiable function $f$ on $\mathbb R^{n}$.
Also, we assume that for any positive $x_2\in\mathbb L_2$ and almost every $\mathcal K$ with respect to its probability measure, there holds the condition that 
\begin{equation}{\label{eq. derivative of V1}}
	\text{  $\mathcal L^{\gamma_1}_{\mathcal K, x_2}$ has a unique stationary distribution $\bar V_1^{\mathcal K, x_2}$.}
\end{equation}
This assumption requires the fast varying species to be stable and, therefore, makes it possible to conclude their dynamic effects using a random measure. 
We further term a random measure $V^{\mathcal K}_1 (\dd x_1,\dd s)\triangleq \bar V_1^{\mathcal K, X^{\gamma_2}(s)}(\dd x_1) \dd s$.
Then, at the second timescale, a reduced model on the state space $\mathbb L_2$ can be derived as follows.
\begin{align}{\label{eq. reduced models at the second time scale}}
	X^{\gamma_2}(t)
	=&\lim_{N\to\infty} \Pi_2 X^{N}(0) 
    + \Pi_2\sum_{j : \gamma_2  +\tilde \rho_j>0} \left(\int_{\mathbb L_1 \times [0,t]} \lambda'_j\left( X^{\gamma_2}(s)+x_1\right) V^{\mathcal K}_1(\dd x_1, \dd s) \right) D^{\gamma_2  +\tilde \rho_j}(v'_{\cdot j}-v_{\cdot j}) \\
	&+ \Pi_2\sum_{j : \gamma_2  +\tilde \rho_j=0 } R_{j} \left(\int_{\mathbb{L}_1 \times[0,t]} \lambda'_j\left( X^{\gamma_2}(s)+x_1\right) V^{\mathcal K}_1(\dd x_1, \dd s) \right)
	D^{0}(v'_{\cdot j}-v_{\cdot j}). \notag
\end{align}

\begin{remark}{\label{remark computation efficiency at the second time scale}}
	Compared with the original model \eqref{eq. scaling stochastic dynamics}, the reduced model \eqref{eq. reduced models at the second time scale} not only replaces some fast reactions ($\gamma_2+\tilde \rho_j >0$ and $D^{\gamma_2  +\tilde \rho_j}\neq \mathbbold{0}_{n\times n}$\footnote{Here, $\mathbbold{0}_n$ means an $n\times n$ zero matrix.}) with a continuous process but also removes some fast reactions  ($\gamma_2+\tilde \rho_j >0$ and $D^{\gamma_2  +\tilde \rho_j}= \mathbbold{0}_{n\times n}$) from the model.
	As a result, the reduced model \eqref{eq. reduced models at the second time scale} can further accelerate the simulation of the target system.
\end{remark}

For the validity of the hybrid model at the second time scale (see \ref{eq. reduced models at the second time scale}), we need to introduce some additional assumptions. 
The first one is an ergodicity assumption on the fast-varying subsystem.
For a fixed $\mathcal K$, we define a random measure on $\mathbb{L}_1 \times [0,\infty)$ by 
\begin{equation*}
	V_1^{N,\gamma_2,\mathcal K}(A\times [0,t])=\int_0^t \mathbbold  1_{A}\left((I-\Pi_2)X^{N,\gamma_2}(s)\right) \dd s
\end{equation*}
where $\mathbbold  1_{A}$ is the indicator function. Also, we require that
\begin{equation}{\label{eq. assumption egordicity}}
	V_{1}^{N,\gamma_2,\mathcal K} \Rightarrow V^{\mathcal K}_1 ~ \text{almost surely.}
\end{equation}
The above convergence of random measures means that for almost every fixed $\mathcal K$, any bounded continuous function $f(\cdot)$, and all $t>0$, there holds
\begin{equation*}
	\int_{\mathcal L_1 \times [0,t]} f(x) V^{N,\gamma_2,\mathcal K}_1(\dd x \times \dd s) 
	\Rightarrow
	\int_{\mathcal L_1 \times [0,t]} f(x) V^{\mathcal K}_1(\dd x, \dd s).
\end{equation*}
Since the set of continuous functions with compact support is a subset of bounded functions, the above condition also suggests the convergence of these random measures in distribution (see \cite[Theorem 4.5]{kallenberg1974lectures}).
The other assumptions are technical and can be stated as follows.
Let $\tau^{N,\gamma_2}_{c}\triangleq \inf \{t \big| |\Pi_2 X^{N,\gamma_2}(t) |\geq c \}$ and 
$$l_{c}(y)\triangleq \sup \left\{\sum_{j\in \{\Pi_2 D^{\gamma_2+\tilde \rho_j} (v'_{\cdot j}-v_{\cdot j})\neq 0\}} \lambda^{N}_{j}(x) ~ \Bigg | ~|\Pi_2 x| \leq c, x-\Pi_2x=y \right\}.  $$
We assume that for each $c$ there exists a  function $\Psi_c : \mathbb R_{\geq 0} \to \mathbb R_{\geq 0}$ satisfying $\lim_{r\to \infty} r^{-1} \Psi_c(r)=\infty$ such that
\begin{align}{\label{eq. assumption technical 1}}
	&\left\{
	\int_{\mathbb{L}_1 \times[0,t\wedge \tau_c^{N,\gamma_{2}}]} \Psi_c(l_{c}(y))V_1^{N,\gamma_2,\mathcal K} (\dd y\times \dd s)
	\right\} \text{is stochastically bounded.} 
\end{align}
and
\begin{align}\label{eq. assumption technical 2}
	&\sum_{j} \left| N^{\gamma_2+\tilde \rho_j} \Lambda^{N} (I-D^{\gamma_1+\tilde \rho_j}-D^{\gamma_2+\tilde \rho_j})(v'_{\cdot j}-v_{\cdot j})\right|
     \times \int_{\mathbb{L}_1 \times[0,t\wedge \tau_c^{N,\gamma_{2}}]}
	\lambda^{N}_{j}\left(\Pi_2 X^{N,\gamma_2}(s)+y\right) V_{1}^{N,\gamma_2,\mathcal K}(\dd y \times \dd s) \to 0. 
\end{align}
Finally, under the above assumptions and non-explosivity of $X^{N,\gamma_2}$ and $X^{\gamma_2}$, i.e.,
\begin{equation}{\label{eq. assumption infinite explosion time gamma2}}
	\lim_{c\to\infty} \tau^{N,\gamma_2}_{c}=\infty, \quad \text{ and } \quad \lim_{c\to\infty}  \tau^{\gamma_2}_{c}=\infty, \quad \text{a.s.}
\end{equation}
where $\tau^{\gamma_2}_{c}\triangleq \inf \{t \big| \| X^{\gamma_2}(t) \|\geq c \}$, the process  $X^{N,\gamma_2}$ converges in distribution to $X^{\gamma_2}$ on any finite time interval.
The details are listed in the following proposition.

\begin{proposition}[Adapted from \cite{kang2013separation}]{\label{prop kang kurtz gamma2}}
	If conditions \eqref{eq. assumption initial conditons}, \eqref{eq. assumption for the second time scale 2}, \eqref{eq. derivative of V1}, \eqref{eq. assumption egordicity}, \eqref{eq. assumption technical 1}, \eqref{eq. assumption technical 2}, and \eqref{eq. assumption infinite explosion time gamma2} are satisfied, then on any finite time interval $[0,T]$, there holds  $\left(\mathcal K, \Pi_2 X^{N,\gamma_2}(\cdot) \right)\Rightarrow \left(\mathcal K, X^{\gamma_2}(\cdot) \right)$ in the sense of the Skorokhod topology. 
\end{proposition}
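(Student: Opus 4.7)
My plan is to reduce the statement to a direct invocation of \cite[Theorem 4.1]{kang2013separation}, which gives exactly this kind of Skorokhod-topology convergence at the second timescale but is stated for fixed (deterministic) reaction constants. The first step is to condition on $\mathcal{K} = \kappa$. By the almost-sure qualifiers already built into the assumptions, conditions \eqref{eq. derivative of V1}, \eqref{eq. assumption egordicity}, \eqref{eq. assumption technical 1}, and \eqref{eq. assumption technical 2} hold for $\mathbb{P}$-a.e.~$\kappa$, and together with \eqref{eq. assumption initial conditons}, \eqref{eq. assumption for the second time scale 2}, and \eqref{eq. assumption infinite explosion time gamma2} these are exactly the hypotheses of Kang--Kurtz applied to the model with rates frozen at $\kappa$. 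This yields the conditional convergence $\Pi_2 X^{N,\gamma_2}(\cdot) \Rightarrow X^{\gamma_2}(\cdot)$ in $\mathbb{D}_{\mathbb{R}^n}[0,T]$ for $\mathbb{P}$-a.e.~$\kappa$.

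The inner Kang--Kurtz argument, which I would only sketch, has three moves. First, project \eqref{eq. scaling stochastic dynamics} onto $\mathbb{L}_2$: by definition of $\mathbb{L}_2$, the leading-order fast reactions are annihilated and only those with $\gamma_2 + \tilde{\rho}_j \geq 0$ contribute, decomposed into a compensator (which limits to the Riemann integral in \eqref{eq. reduced models at the second time scale}) and a martingale part (which vanishes unless $\gamma_2 + \tilde{\rho}_j = 0$). Second, prove tightness of $\{\Pi_2 X^{N,\gamma_2}(\cdot)\}$ together with the occupation measures $V_1^{N,\gamma_2,\mathcal{K}}$ using \eqref{eq. assumption technical 1} and the non-explosivity \eqref{eq. assumption infinite explosion time gamma2}; condition \eqref{eq. assumption technical 2} handles the reactions of intermediate scale that lie neither in $\mathbb{L}_1$ nor in $\mathbb{L}_2$, showing they contribute negligibly in the limit. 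Third, identify the limit via \eqref{eq. assumption egordicity}: along any subsequential limit, $V_1^{N,\gamma_2,\mathcal{K}}$ converges to $V_1^{\mathcal{K}}$, whose time-slice is the stationary distribution $\bar{V}_1^{\mathcal{K},X^{\gamma_2}(s)}$ of the frozen fast generator $\mathcal{L}^{\gamma_1}_{\mathcal{K},X^{\gamma_2}(s)}$, so the scaled propensity integrals converge to the averaged integrals appearing in \eqref{eq. reduced models at the second time scale}.

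To upgrade the conditional convergence to the joint statement, I would exploit that $\mathcal{K}$ is the same random variable for every $N$. For bounded continuous test functions $g$ on $\mathbb{R}^r$ and $f$ on $\mathbb{D}_{\mathbb{R}^n}[0,T]$, one has $\mathbb{E}[g(\mathcal{K}) f(\Pi_2 X^{N,\gamma_2})] = \mathbb{E}\bigl[g(\mathcal{K})\,\mathbb{E}[f(\Pi_2 X^{N,\gamma_2}) \mid \mathcal{K}]\bigr]$, and the inner conditional expectation converges pointwise (in $\mathcal{K}$, almost surely) to $\mathbb{E}[f(X^{\gamma_2}) \mid \mathcal{K}]$ by the conditional weak convergence established above. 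Boundedness of $g$ and $f$ then lets dominated convergence pass to the limit, giving joint weak convergence $(\mathcal{K}, \Pi_2 X^{N,\gamma_2}) \Rightarrow (\mathcal{K}, X^{\gamma_2})$.

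The main obstacle is the averaging step internal to Kang--Kurtz: one has to verify that every subsequential limit of $V_1^{N,\gamma_2,\mathcal{K}}$ concentrates on the correct stationary measure of the frozen fast generator (not some other invariant measure), and that the resulting nonlinear fixed-point equation \eqref{eq. reduced models at the second time scale} pins down the law of $X^{\gamma_2}$ uniquely. Both points are delicate but already settled within the Kang--Kurtz framework under exactly the hypotheses above, so I would appeal to their uniqueness/identification conclusion rather than re-prove it.
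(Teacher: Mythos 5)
Your proposal is correct and takes essentially the same route as the paper, which simply cites Kang--Kurtz for this proposition; your extra steps (conditioning on $\mathcal K=\kappa$ to handle the random rate constants and then upgrading to joint convergence of $(\mathcal K,\Pi_2 X^{N,\gamma_2})$ via product test functions and dominated convergence) make explicit exactly what the paper's one-line proof leaves implicit. One small correction: the second-timescale convergence is the content of Theorems 4.4 and 5.1 of Kang--Kurtz (Theorem 4.1 is the first-timescale result used for \Cref{prop Kang kurtz gamma1}), though this does not affect the structure of your argument.
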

\begin{proof}
	The proof follows easily from \cite[Theorem 4.4 and Theorem 5.1]{kang2013separation}.
\end{proof}

Among all these assumptions in the above result, \eqref{eq. assumption egordicity}, \eqref{eq. assumption technical 1}, and \eqref{eq. assumption technical 2} are the most demanding ones.
Verifying them usually requires one to construct proper Lyapunov functions for the fast dynamics (see \cite[Remark 4.19]{gupta2014sensitivity}) and the full dynamics (see \cite[Lemma 5.3]{kang2013separation}).
Moreover, since system parameters are random (but fixed over time) in our problem, the  construction of Lyapunov functions should be robust to the choice of system parameters so that the criteria can work for almost every $\mathcal K$.
The method proposed in \cite{gupta2014scalable} where the vector norm serves as the Lyapunov function can be a candidate for solving such a problem.

\subsubsection{Hybrid approximations at a higher timescale}
The key to generating a hybrid approximation of a multiscale system is to assume quasi-equilibrium for the fast sub-network \cite[Section 2.3]{gupta2014sensitivity}.
Therefore, following this rule, one can construct a hierarchy of reduced models to depict dynamic behaviors at different timescales.
For our main results, we assume ourselves in the situation of Proposition \ref{prop Kang kurtz gamma1} and Proposition \ref{prop kang kurtz gamma2} only, i.e., the first and second timescales.
However, these results can be straightforwardly extended to a higher timescale using the same proof scheme shown in the {appendices}.

\subsubsection{Observations for reduced models}

We can also define filtering problems for the reduced models \eqref{eq. reduced model at the first time scale} and \eqref{eq. reduced models at the second time scale} as follows.
We first define artificial readouts for these reduced models by 
\begin{align*}
& Y^{\gamma_\ell}_{c}(t)= \int_0^t h\left(X^{\gamma_\ell}(s)\right) \dd s  + B(t),
\quad\text{and}
\quad
 Y^{\gamma_\ell}_{d}(t_i)= h\left(X^{\gamma_\ell}(t_i)\right)  + W(t_i), 
&& \ell=1,2
\end{align*}
and term 
$\mathcal Y^{\gamma_\ell}_{c,t}$ and $\mathcal Y^{\gamma_\ell}_{d,t_i}$ ($\ell=1,2$)
as the filtrations generated by $\{Y^{\gamma_\ell}_{c}(s)\}_{0<s\leq t}$ and $\{Y^{\gamma_\ell}_{d}(t_j)\}_{0<j\leq i}$,
respectively.
Then, for reduced models \eqref{eq. reduced model at the first time scale} and \eqref{eq. reduced models at the second time scale}, the filtering problem requires one to compute the conditional expectations 
\begin{equation*}
\pi^{\gamma_\ell}_{c,t}(\phi)\triangleq \mathbb E_{\mathbb P}\left[\phi\left(\mathcal K, X^{\gamma_\ell}(t)\right) \left | \mathcal Y^{\gamma_\ell}_{c,t}\right.\right]
\quad \text{and} \quad
\pi^{\gamma_\ell}_{d,t_i}(\phi)\triangleq \mathbb E_{\mathbb P}\left[\phi\left(\mathcal K, X^{\gamma_\ell}(t_i)\right) \left | \mathcal Y^{\gamma_\ell}_{d,t_i}\right.\right]
\end{equation*}
for $\ell=1,2$ based on the corresponding continuous-time and discrete-time observations, respectively.

Notably, since the reduced model does not exist in reality, one can never collect these ``imaginary" readouts $Y^{\gamma_1}_{c}(\cdot)$, $Y^{\gamma_1}_{d}(\cdot)$, $Y^{\gamma_2}_{c}(\cdot)$, or $Y^{\gamma_2}_{d}(\cdot)$ in real experiments.
Therefore, solving the filtering problems for the reduced models has no much practical meaning.
The value of these artificial filtering problems lies in helping us construct an efficient particle filter to solve the filtering problems for the original model, which will be discussed in-depth in \Cref{Sec. main result}.

\subsection{Particle filters}{\label{subsection observations}}
The particle filter, also known as the sequential Monte Carlo method, is a protocol used to infer latent dynamic states and system parameters by generating samples that mimic dynamic behaviors of the underlying systems. 
In the sequel, we separately review continuous-time particle filters and discrete-time ones, including their derivations and specific algorithms.

\subsubsection{Continuous-time particle filters}
By denoting a Girsanov random variable $$Z^{N,\gamma}_{c}  (t)\triangleq \exp\left( \int_{0}^{t} h^{\top}(X^{N,\gamma}(s)) \dd Y^{N,\gamma}_{c}(s) - \frac{1}{2} \int_0^{t} \|h(X^{N,\gamma}(s))\|^2 \dd s \right),$$ whose reciprocal is a martingale under $\mathbb P$ \cite[Lemma 3.9]{bain2008fundamentals}, and a reference probability $\left.\frac{\dd \mathbb P_c ^{N,\gamma}}{\dd \mathbb P}\right|_{\mathcal F_{t}}\triangleq \left(Z^{N,\gamma}_c(t)\right)^{-1}$, one can arrive at the Kallianpur-Striebel formula \cite[Proposition 3.16]{bain2008fundamentals} (also see \cite{kallianpur1968estimation})
\begin{equation}{\label{eq. Kallianpur-Striebel formula for the full model continuous observation}}
	\pi_{c,t}^{N,\gamma}(\phi) \triangleq 
	{\mathbb E_{\mathbb P^{N,\gamma}_{c}} \left[ Z^{N,\gamma}_{c}(t) \phi\left(\mathcal K, X^{N,\gamma}(t)\right) \left| \mathcal Y^{N,\gamma}_{c,t} \right.\right]}
	~\Bigg/~{\mathbb E_{\mathbb P^{N,\gamma}_{c}} \left[ Z^{N,\gamma}_{c}(t) \left| \mathcal Y^{N,\gamma}_{c,t} \right.\right]}
	\qquad
	\mathbb P\text{-a.s}~\text{and}~\mathbb P^{N,\gamma}_c\text{-a.s},
\end{equation}
which transforms the filtering problem under the natural probability into the filtering problem under the reference probability.
Since under the reference probability, the observations are independent of the underlying system and become an $m$-vector of independent standard Brownian motions \cite[Proposition 3.13]{bain2008fundamentals}, the computation of conditional expectations under the reference probability is much simpler than the computation under the original probability $\mathbb P$.
We can also construct reference probabilities for reduced models, which orthogonalizes\footnote{Here, ``orthogonalize" means making two random variables independent of each other.} the observations and the underlying systems and provide Kallianpur-Striebel formulas for them.
The details are shown in \ref{Sec change of measure methods}.

Based on \eqref{eq. Kallianpur-Striebel formula for the full model continuous observation}, a continuous-time sequential importance resampling (SIR) particle filter can be constructed as \Cref{alg continuous time regularized particle filters}, which generates samples under the reference probability and uses Monte Carlo method to solve the filtering problem.
Specifically, in \Cref{alg continuous time regularized particle filters}, the weights $w_j(t)$ (for $j=1,\dots,M$) mimic normalized $Z^{N,\gamma}_c(t)$, and particles $(\kappa_j(t), x_j(t))$ mimic the state $(\mathcal K, X^{N,\gamma}(t))$; thus, the empirical sum of $\phi(\kappa_j(t), x_j(t))$ with respect to weights $w_j(t)$, approximates the true filter by the Kallianpur-Striebel formula \eqref{eq. Kallianpur-Striebel formula for the full model continuous observation}.
Notably, resampling is executed in each iteration to delete non-significant particles and mitigate the long-term sample impoverishment at the cost of adding additional noise at the current step \cite{doucet2009tutorial}.
The residual resampling scheme is usually preferred in the resampling step, as it generally outperforms other schemes, e.g., multinomial resampling scheme, in the sense of asymptotic variances \cite{chopin2004central}.

\begin{algorithm}
	\caption{The continuous-time SIR particle filter \cite[Chapter 9]{bain2008fundamentals}}
	\label{alg continuous time regularized particle filters}
	
	\begin{algorithmic}[1]
		\STATE  Input continuous-time observations $Y(\cdot)$, the model of the underlying system, and a joint distribution of the initial states and reaction constants; set $\delta$ to be a positive constant indicating the length of sub-time intervals.
		\STATE  Initialization: Sample $M$ particles $(\bar \kappa_1(0), \bar x_1(0)),\dots, (\bar \kappa_M(0), \bar x_M(0))$ from the input distribution, and set $i=1$.
		\WHILE{$i\delta$ does not exceed the terminal time of the observations}
		\STATE  \textit{Sampling:} Simulate 
		$x_j(\cdot)$ ($j=1,\dots,M$) in the time interval $((i-1)\delta, i\delta]$ according to the underlying model with parameters $\bar \kappa_j((i-1)\delta) $ and initial conditions $\bar x_j ((i-1)\delta)$;
		Set $\kappa_j(t)=\bar \kappa_j ((i-1)\delta)$ for $t\in((i-1)\delta, i\delta] $;
		Calculate weights $w_j(t)\propto z_j(t) \triangleq \exp\left( \int_{(i-1)\delta}^t h^{\top} \left( x_j(s) \right) \dd Y(s) - \frac{1}{2}\int_{(i-1)\delta}^t \| h(x_j(s))\|^2 \dd s \right).$
		\STATE  \textit{Output the filter:} 
		$\bar \pi_{M,c,t} (\phi)  = {\sum_{j=1}^{M} w_j(t) \phi(\kappa_j, x_j(t))}$ for $t\in ((i-1)\delta, i\delta]$.
		\STATE  \textit{Resampling:} Resample
		$\{ w_{j}(i\delta), (\kappa_j(i\delta), x_j(i\delta))\}$ to obtain M equally weighted samples $\{{1}/{M}, (\bar \kappa_j(i\delta), \bar x_j(i\delta))\}$; 
		\ENDWHILE
	\end{algorithmic}
\end{algorithm}

\subsubsection{Discrete-time particle filters}
For discrete-time observations, we can also define a random variable 
$Z^{N,\gamma}_d(t_i)\triangleq \prod_{j=1}^{i} H\left(X^{N,\gamma}(t_j),Y^{N,\gamma}_d(t_j)\right)$, in which $H(x,y)= \exp\left(h^{\top}(x) y-\frac{1}{2}\|h(x)\|^2\right)$, and whose reciprocal is a martingale under $\mathbb P$.\footnote{The martingale property can be easily checked by calculating the characteristic function of the random variable.} 
Then, a reference probability, $ \mathbb P_d ^{N,\gamma}$, can be constructed by $\left.\frac{\dd \mathbb P_d ^{N,\gamma}}{\dd \mathbb P}\right|_{\mathcal F_{t}}\triangleq \left(Z^{N,\gamma}_c(t)\right)^{-1}$, under which the underlying system and the discrete-time observation are independent of each other, and $\{Y^{N,\gamma}_d(t_i)\}_{i\in \mathbb N_{>0}}$ are mutually independent $m$-variate Gaussian random variables with covariance matrices being the identity matrix.
Using the Kallianpur-Striebel formula \cite{kallianpur1968estimation}, we can express the filter like the following
\begin{equation}{\label{eq. Kallianpur-Striebel formula for the full model discrete observation}}
\pi_{d,t_i}^{N,\gamma}(\phi) \triangleq 
{\mathbb E_{\mathbb P^{N,\gamma}_{d}} \left[ Z^{N,\gamma}_{d}(t) \phi\left(\mathcal K, X^{N,\gamma}(t_i)\right) \left| \mathcal Y^{N,\gamma}_{d,t_i} \right.\right]}
~\Bigg/~{\mathbb E_{\mathbb P^{N,\gamma}_{d}} \left[ Z^{N,\gamma}_{d}(t_i) \left| \mathcal Y^{N,\gamma}_{d,t_i} \right.\right]} \qquad \mathbb P\text{-a.s}~\text{and}~\mathbb P^{N,\gamma}_d\text{-a.s},
\end{equation}
which transforms the original filtering problem into the filtering problem under the reference probability.
For the reduced models, we can also construct such reference probabilities and Kallianpur-Striebel formulas, which is discussed in detail in \ref{Sec change of measure methods}.

Based on \eqref{eq. Kallianpur-Striebel formula for the full model discrete observation}, a discrete-time SIR particle filter can be constructed as \Cref{alg discrete time regularized particle filters} to solve the filtering problem numerically.
\Cref{alg discrete time regularized particle filters} can be viewed as a discrete-time analog of \Cref{alg continuous time regularized particle filters}, where the only difference is substituting the discrete-time arguments in \Cref{alg discrete time regularized particle filters} for the continuous ones in \Cref{alg continuous time regularized particle filters}.

\begin{algorithm}
	\caption{The discrete-time SIR particle filter \cite{gordon1993novel,doucet2009tutorial}}
	\label{alg discrete time regularized particle filters}
	
	\begin{algorithmic}[1]
		\STATE  Input observations $\{Y(t_i)\}_{i\in\mathbb N_{>0}}$, a dynamical model, and a initial distribution;
		\STATE  \textit{Initialization:} The same as the initialization step in \Cref{alg continuous time regularized particle filters}; Also, $t_0=0$.
		\WHILE{$t_i$ does not exceed the terminal time of the observations}
		\STATE  \textit{Sampling:} simulate $x_j(\cdot)$ ($j=1,\dots,M$) from time $t_{i-1}$ to $t_i$ according to the underlying model with parameters $\bar \kappa_j(t_{i-1}) $ and initial conditions $\bar x_j (t_{i-1})$;  Set $\kappa_j (t_{i})=\bar \kappa_j (t_{i-1})$; Calculate weights $w_j(t_i)\propto z_j(t_i)\triangleq  H(x_{j}(t_i), Y(t_i))$.
		\STATE  \textit{Output the filter:}  $\bar \pi_{M,d,t_i} (\phi)  = {\sum_{j=1}^{M} w_j(t_i) \phi(\kappa_j, x_j(t_i))}$.
		\STATE  \textit{Resampling:} 
		The same as the resampling in \Cref{alg continuous time regularized particle filters} (change $i\delta$ to $t_i$);
		\ENDWHILE
	\end{algorithmic}
\end{algorithm}

\section{Main results}\label{Sec. main result}

The goal of this paper is to construct computationally efficient algorithms to solve filtering problems for multi-scale reaction network systems, i.e., calculating $\pi^{N,\gamma}_{c,t}(\phi)$ and $\pi^{N,\gamma}_{d,t_i}(\phi)$. 
A straightforward idea to solve this problem is to use the particle filters {that utilize the full dynamic model \eqref{eq. scaling stochastic dynamics}}. However, the computational cost of simulating \eqref{eq. scaling stochastic dynamics} is extremely expensive (see \Cref{remark computation efficiency at the first time scale}), and, therefore, these particle filters are computationally inefficient.
Consequently, we need to figure out a smarter way to approach this problem. 

We first demonstrate that solutions to the filtering problems for original models can be constructed by the solutions to the filtering problems for the reduced model.
Note that $\pi^{\gamma_1}_{c,t}(\phi)$, $\pi^{\gamma_1}_{d,t_i}(\phi)$, $\pi^{\gamma_2}_{c,t}(\phi)$ and $\pi^{\gamma_2}_{d,t_i}(\phi)$ are, respectively, $\mathcal Y^{\gamma_1}_{c,t}$, $\mathcal Y^{\gamma_1}_{d,t_i}$, $\mathcal Y^{\gamma_2}_{c,t}$, and $\mathcal Y^{\gamma_2}_{d,t_i}$ measurable;
hence, there exist measurable functions $\hat f^{\gamma_1}_{\phi,c,t}(\cdot)$, $\hat f^{\gamma_1}_{\phi, d,t_i}(\cdot)$, $\hat f^{\gamma_2}_{\phi,c,t}(\cdot)$, and $\hat f^{\gamma_2}_{\phi,d,t_i}(\cdot)$ such that 
\begin{align*}
	&\pi^{\gamma_\ell}_{c,t}(\phi)= \hat f^{\gamma_\ell}_{\phi,c,t}(Y^{\gamma_\ell}_{c,0:t})
	&\text{and}&& \pi^{\gamma_\ell}_{d,t_i}(\phi)= \hat f^{\gamma_\ell}_{\phi,d,t_i}(Y^{\gamma_\ell}_{d,1:i}), 
	&& \mathbb P\text{-a.s.},
	&& \forall \ell\in\{1,2\},
\end{align*}
where $Y^{\gamma_\ell}_{c,0:t}$ ($\ell=1,2$) is the trajectory of $Y^{\gamma_{\ell}}_{c}(\cdot)$ from time 0 to $t$, and $Y^{\gamma_\ell}_{d,1:i}\triangleq \left( Y^{\gamma_\ell}_d(t_1),\dots, Y^{\gamma_\ell}_d(t_i) \right)$ for $\ell=1,2$.
Similarly, there also exist measurable functions $\hat f^{N, \gamma}_{\phi,c,t}(\cdot)$ and $\hat f^{N, \gamma}_{\phi, d,t_i}(\cdot)$ (for $\gamma\in\mathbb R$) such that
\begin{align*}
&\pi^{N, \gamma}_{c,t}(\phi)= \hat f^{N, \gamma}_{\phi,c,t}\left(Y^{N, \gamma}_{c,0:t}\right)
&\text{and}&& \pi^{N, \gamma}_{d,t_i}(\phi)= \hat f^{N, \gamma}_{\phi,d,t_i}\left(Y^{N, \gamma}_{d,1:i}\right), 
&& \mathbb P\text{-a.s.},
\end{align*}
where $Y^{N,\gamma}_{c,0:t}$ is the trajectory of $Y^{N,\gamma}_{c}(\cdot)$ from time 0 to $t$, and $Y^{N,\gamma}_{d,1:i}\triangleq \left( Y^{N,\gamma}_d(t_1),\dots, Y^{N,\gamma}_d(t_i) \right)$.
In this paper, we name these functions filtering maps, as they map the observations to the solutions of filtering problems.
Based on these functions, we define artificial filters like the following
\begin{align*}
	&\tilde \pi^{N,\gamma_\ell}_{c,t} (\phi) \triangleq  \hat f^{\gamma_\ell}_{\phi,c,t}\left(Y^{N, \gamma_\ell}_{c,0:t}\right)
	&& \text{and}
	&& \tilde \pi^{N,\gamma_\ell}_{d,t_i} (\phi) \triangleq  \hat f^{\gamma_\ell}_{\phi,d,t_i}\left(Y^{N, \gamma_\ell}_{d,1:i}\right) 
	&& \forall \ell\in\{1,2\},
\end{align*}
which is constructed by plugging the observations of the full model into the filter maps of the reduced models. 
We can show that under some mild conditions, these artificial filters can be arbitrarily close to the exact filters of the original model as $N$ goes to infinity.
The rigorous statement of this result is presented as follows.

\begin{theorem}{\label{thm convergence of artificial filters}}
	~
	\begin{enumerate}
		\item Assume that \eqref{eq. assumption initial conditons} and \eqref{eq. assumption infinite explosion time gamma1} are satisfied. Then, for and any bounded continuous function $\phi$, there hold
		\begin{equation*}
			\hat f^{N,\gamma_1}_{\phi,c,t}\left(Y^{N, \gamma_1}_{c,0:t}\right)-\hat f^{\gamma_1}_{\phi,c,t}\left(Y^{N, \gamma_1}_{c,0:t}\right)
			\stackrel{\mathbb P}{\to} 0 \qquad \text{as} \quad N\to \infty,
			\qquad \forall t>0,
		\end{equation*}
		and 
		\begin{equation*}
			\hat f^{N,\gamma_1}_{\phi,d,t_i}\left(Y^{N, \gamma_1}_{d,1:i}\right)-\hat f^{\gamma_1}_{\phi,d,t_i}\left(Y^{N, \gamma_1}_{d,1:i}\right)
			\stackrel{\mathbb P}{\to} 0 \qquad \text{as} \quad N\to \infty,
			\qquad \forall i\in\mathbb N_{>0}.
		\end{equation*}
		\item Assume that conditions \eqref{eq. assumption initial conditons}, \eqref{eq. assumption for the second time scale 2}, \eqref{eq. derivative of V1}, \eqref{eq. assumption egordicity}, \eqref{eq. assumption technical 1}, \eqref{eq. assumption technical 2}, and \eqref{eq. assumption infinite explosion time gamma2} are satisfied, and, moreover, $h(x)=h(\Pi_2 x)$ $\forall x\in\mathbb R^{n}$.
		Then, for any bounded continuous function $\phi$ such that $\phi(\kappa,x)=\phi(\kappa, \Pi_2 x)$ $\forall (\kappa,x)\in\mathbb R^{r}\times\mathbb R^{n}$,
		there hold
		\begin{equation*}
		  \hat f^{N,\gamma_2}_{\phi,c,t}\left(Y^{N, \gamma_2}_{c,0:t}\right)-\hat f^{\gamma_2}_{\phi,c,t}\left(Y^{N, \gamma_2}_{c,0:t}\right)
		  \stackrel{\mathbb P}{\to} 0 \qquad \text{as} \quad N\to \infty,
		   \qquad \forall t>0,
		\end{equation*}
		and
		\begin{equation*}
		\hat f^{N,\gamma_2}_{\phi,d,t_i}\left(Y^{N, \gamma_2}_{d,1:i}\right)-\hat f^{\gamma_2}_{\phi,d,t_i}\left(Y^{N, \gamma_2}_{d,1:i}\right)
		\stackrel{\mathbb P}{\to} 0 \qquad \text{as} \quad N\to \infty,
		\qquad \forall i\in\mathbb N_{>0}.
		\end{equation*}
 	\end{enumerate}
\end{theorem}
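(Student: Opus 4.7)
The plan is to work under the reference probability measures, where the observations become $N$-independent in distribution, establish pointwise convergence of the filter maps using the Skorokhod convergence supplied by Propositions~\ref{prop Kang kurtz gamma1} and \ref{prop kang kurtz gamma2}, and then transfer the result back to the physical measure $\mathbb P$ by a uniform second-moment bound on the Girsanov weights.

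I will treat the discrete-time case first. The Kallianpur--Striebel formula \eqref{eq. Kallianpur-Striebel formula for the full model discrete observation}, together with its analogue for the reduced model, represents the filter map explicitly as
\[
\hat f^{N,\gamma_\ell}_{\phi,d,t_i}(y_{1:i}) = \frac{\mathbb E\!\left[\phi(\mathcal K,X^{N,\gamma_\ell}(t_i))\prod_{j=1}^i H(X^{N,\gamma_\ell}(t_j),y_j)\right]}{\mathbb E\!\left[\prod_{j=1}^i H(X^{N,\gamma_\ell}(t_j),y_j)\right]},
\]
with an analogous expression for $\hat f^{\gamma_\ell}_\phi$. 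For each frozen $y_{1:i}\in\mathbb R^{mi}$ the integrand is a bounded functional of the trajectory $(\mathcal K,X^{N,\gamma_\ell}(\cdot))$ that is continuous in the Skorokhod topology at almost every realization of the limit process, since a Poisson-driven jump process has no fixed jump time at any $t_j$; in part~(2) the hypotheses $h(x)=h(\Pi_2 x)$ and $\phi(\kappa,x)=\phi(\kappa,\Pi_2 x)$ make the integrand depend only on the slow coordinate $\Pi_2 X^{N,\gamma_2}(\cdot)$, which is precisely the projection for which Proposition~\ref{prop kang kurtz gamma2} supplies convergence. The denominator is bounded below uniformly by $e^{-i\|h\|_\infty^2/2}>0$, so Propositions~\ref{prop Kang kurtz gamma1}/\ref{prop kang kurtz gamma2} give pointwise convergence $\hat f^{N,\gamma_\ell}_{\phi,d,t_i}(y_{1:i})\to\hat f^{\gamma_\ell}_{\phi,d,t_i}(y_{1:i})$. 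Both maps are uniformly bounded by $\|\phi\|_\infty$ and under $\mathbb P^{N,\gamma_\ell}_d$ the observation $Y^{N,\gamma_\ell}_{d,1:i}$ is i.i.d.\ $\mathcal N(0,I)$ with an $N$-independent law, so dominated convergence yields $\mathbb E_{\mathbb P^{N,\gamma_\ell}_d}\!\bigl[|\hat f^{N,\gamma_\ell}_\phi(Y^{N,\gamma_\ell}_{d,1:i})-\hat f^{\gamma_\ell}_\phi(Y^{N,\gamma_\ell}_{d,1:i})|\bigr]\to 0$. To lift this to convergence in probability under $\mathbb P$, I use $d\mathbb P/d\mathbb P^{N,\gamma_\ell}_d=Z^{N,\gamma_\ell}_d(t_i)$ together with Cauchy--Schwarz and the uniform second-moment bound $\sup_N\mathbb E_{\mathbb P^{N,\gamma_\ell}_d}[(Z^{N,\gamma_\ell}_d(t_i))^2]\le e^{i\|h\|_\infty^2}$, which follows from conditioning on $X^{N,\gamma_\ell}$, Gaussianity of $Y$ under the reference measure, and boundedness of $h$.

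The continuous-time case follows the same architecture, with $\prod_j H$ replaced by $Z^{N,\gamma_\ell}_c(t)$, the Gaussian observations replaced by a standard Brownian motion under $\mathbb P^{N,\gamma_\ell}_c$, and the second-moment bound $\sup_N\mathbb E_{\mathbb P^{N,\gamma_\ell}_c}[(Z^{N,\gamma_\ell}_c(t))^2]\le e^{t\|h\|_\infty^2}$ obtained from the exponential martingale estimate. The step I expect to be the main obstacle is the pointwise convergence of the unnormalized filter at a frozen continuous observation path $y(\cdot)$, since the Girsanov weight then carries the formal expression $\int_0^t h(X^{N,\gamma_\ell}(s))^\top dy(s)$ against an unbounded-variation path. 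My plan is to handle it by integration by parts, rewriting the integral as $h(X^{N,\gamma_\ell}(t))^\top y(t)-\int_0^t y(s)^\top dh(X^{N,\gamma_\ell}(s))$, where the second term is a pathwise Stieltjes integral of the continuous function $y$ against the finite-variation jump process $h(X^{N,\gamma_\ell}(\cdot))$. This expresses $Z^{N,\gamma_\ell}_c(t)$ as a bounded functional of the trajectory that is continuous in the Skorokhod topology at limit paths without a fixed jump at $t$, so Propositions~\ref{prop Kang kurtz gamma1}/\ref{prop kang kurtz gamma2} again supply the required pointwise convergence and the remainder of the argument proceeds as in the discrete case.
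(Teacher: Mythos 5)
Your discrete-time argument is essentially correct and takes a genuinely different route from the paper's. The paper (Appendix~\ref{Sec the convergence of theoretical filters}) follows the framework of \cite{calzolari2006approximation}: it invokes Skorokhod's representation theorem to realize $(\mathcal K, X^{N,\gamma_\ell})\to(\mathcal K, X^{\gamma_\ell})$ almost surely on an auxiliary space, proves $L^1$ convergence of the Girsanov densities $\tilde Z^{N,\gamma_\ell}\to\tilde Z^{\gamma_\ell}$ there, and transfers back to $\mathbb P$ via the equality $\mathbb P(A)=\mathbb E_{\tilde{\mathbb Q}}[\tilde Z^{N,\gamma_\ell}\mathbbold 1_A]$. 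You instead freeze the observation, exploit the $N$-independence of the law of $Y^{N,\gamma_\ell}_{d,1:i}$ under the reference measure together with dominated convergence, and transfer back with Cauchy--Schwarz and a uniform second-moment bound on $Z^{N,\gamma_\ell}_d$; this avoids the Skorokhod representation entirely and only needs an $L^2$ bound rather than $L^1$ convergence of the densities. (Your stated lower bound $e^{-i\|h\|_\infty^2/2}$ on the denominator is not correct as written --- it must also involve $\|y\|$ through the term $h^\top(x)y$ --- but for each frozen $y$ the denominator is indeed bounded below by a positive constant, which is all you need.)

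The continuous-time half, however, has a genuine gap at exactly the step you flag. After integration by parts, the frozen-path functional $x(\cdot)\mapsto h(x(t))^\top y(t)-\int_0^t y(s)^\top \dd h(x(s))$ is neither bounded nor continuous on $\mathbb D_{\mathbb R^n}[0,t]$: it is only defined on paths of bounded variation, its size is controlled by the total variation of $h(x(\cdot))$ (which is not bounded on $\mathbb D$, and for the pre-limit process $X^{N,\gamma_\ell}$ involves a number of jumps that diverges with $N$), and Skorokhod convergence $x_N\to x$ does not imply weak-$*$ convergence of the signed measures $\dd h(x_N)\to \dd h(x)$ --- one can converge in $\mathbb D$ while the oscillatory part of the Stieltjes integral fails to converge. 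The actual convergence of $\int_0^t h(X^{N,\gamma_\ell}(s))^\top\dd Y(s)$ rests on martingale cancellations in the fast jumps, which is probabilistic information not carried by weak convergence plus the mapping theorem, so the continuous mapping theorem cannot be applied as you propose without an additional (and nontrivial) tightness-of-variation argument. The paper sidesteps this entirely in Lemma~\ref{lemma A.1.4}: it keeps the observation random, conditions on the signal $\sigma$-algebra $\tilde{\mathcal F}^{\gamma_\ell}_{t,1}$, and uses the Gaussian moment formula for the independent Brownian observation to convert the stochastic integral into the deterministic quantity $\int_0^t\|h(\tilde X^{N,\gamma_\ell}(s))-h(\tilde X^{\gamma_\ell}(s))\|^2\dd s$, whose convergence to zero follows from the almost sure Skorokhod convergence, the Lipschitz continuity and boundedness of $h$, and dominated convergence. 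Adopting that ``integrate out the observation first'' step would close the gap in your continuous-time argument.
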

\begin{proof}
	The proof is shown in \ref{Sec the convergence of theoretical filters}.
\end{proof}

The above theorem indicates that the filtering problem for the original model can be transformed to the problem of computing artificial filters $\tilde \pi^{N,\gamma_1}_{c,t} (\phi) $, $\tilde \pi^{N,\gamma_1}_{d,t_i} (\phi)$, $\tilde \pi^{N,\gamma_2}_{c,t} (\phi)$, and $\tilde \pi^{N,\gamma_2}_{d,t_i}(\phi)$.
Note that these artificial filters plug observations of the original models into filter maps of the reduced models. 
Therefore, one idea of calculating them is to construct particle filters where the observations of original models and the dynamics of the hybrid approximations are inserted. 
The corresponding particle filters are defined as follows.

\begin{definition}{\label{def particle filters}}
	~
	\begin{itemize}
		\item Let $\bar \pi^{N,\gamma_\ell}_{M,c,t}(\phi)$ ($\ell=1,2$) be the output of the particle filter \Cref{alg continuous time regularized particle filters} where $Y^{N,\gamma_\ell}_{c}(\cdot)$, the dynamic model of $X^{\gamma_\ell}(\cdot)$, and the joint distribution of $\left(\mathcal K, \lim_{N\to\infty}X^{N}(0)\right)$  (for $\ell=1$) or $\left(\mathcal K, \lim_{N\to\infty}\Pi_2X^{N}(0)\right)$ (for $\ell=2$) are inserted.
		\item Let $\bar \pi^{N,\gamma_\ell}_{M,d,t_i}(\phi)$ ($\ell=1,2$) be the output of the particle filter \Cref{alg discrete time regularized particle filters} where $Y^{N,\gamma_\ell}_{d}(\cdot)$, the dynamic model of $X^{\gamma_\ell}(\cdot)$, and the joint distribution of $\left(\mathcal K, \lim_{N\to\infty}X^{N}(0)\right)$  (for $\ell=1$) or $\left(\mathcal K, \lim_{N\to\infty}\Pi_2X^{N}(0)\right)$ (for $\ell=2$) are inserted.
	\end{itemize}
\end{definition}

Under some mild conditions, the above-defined particle filters converge to the corresponding artificial filters (see \Cref{thm convergence of particle filters}), and, therefore, can accurately approximate the true solutions to the filtering problems for the original model (see \Cref{thm main results}).

\begin{theorem}{\label{thm convergence of particle filters}}
	Let the SIR particle filters defined in \Cref{def particle filters} utilize the residual resampling scheme in the resampling step.
	If the process $X^{\gamma_{\ell}}(\cdot)$ ($\ell\in\{1,2\}$) is almost surely nonexplosive,
	then for any bounded continuous function $\phi(\kappa, x)$,
	there hold
	\begin{align*}
	&\mathbb E_{\mathbb P}\left[ \left|\bar \pi^{N,\gamma_\ell}_{M,c,t}(\phi)-
	\hat f^{\gamma_{\ell}}_{\phi,c,t}\left(Y^{N,\gamma_{\ell}}_{c,0:t} \right)\right| \right]
	\leq \frac{ \tilde c_t}{\sqrt M} \| \phi \|_{\infty}
	&& (\forall t>0)
	&&\text{and}
	&&\mathbb E_{\mathbb P}\left[ \left|\bar \pi^{N,\gamma_\ell}_{M,d,t_i}(\phi)-
	\hat f^{\gamma_{\ell}}_{\phi,d,t_i}\left(Y^{N,\gamma_{\ell}}_{d,1:i} \right)\right| \right]
	\leq \frac{ \tilde d_{t_i}}{\sqrt M} \| \phi \|_{\infty}
	&& \forall (i\in\mathbb N_{>0})
	\end{align*}
	where $\tilde c_t$ and $\tilde d_{t_i}$ are time dependent scalars.
\end{theorem}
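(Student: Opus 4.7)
The plan is to exploit the fact that both the particle filter output $\bar\pi^{N,\gamma_\ell}_{M,c,t}(\phi)$ and the target $\hat f^{\gamma_\ell}_{\phi,c,t}(Y^{N,\gamma_\ell}_{c,0:t})$ depend on the full-model observations only through the observation path; the internal randomness of the particle filter (the initial particle draws, the reduced-model trajectory simulations, and the residual-resampling draws) is independent of the observations. I would first condition on a fixed deterministic observation path $y$ and invoke the classical convergence theory for SIR filters, then take the expectation back under $\mathbb P$.

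Concretely, for each fixed $y$, the SIR filter of \Cref{alg continuous time regularized particle filters} (resp.\ \Cref{alg discrete time regularized particle filters}), with particles evolving under the reduced-model dynamics $X^{\gamma_\ell}(\cdot)$ and weights obtained from the Girsanov-type likelihood along $y$, is precisely the standard particle approximation of the Kallianpur--Striebel representation of the reduced-model filter (constructed in \ref{Sec change of measure methods}, analogous to \eqref{eq. Kallianpur-Striebel formula for the full model continuous observation}--\eqref{eq. Kallianpur-Striebel formula for the full model discrete observation}). Classical $L^1$ convergence rates for SIR with residual resampling (see \cite{chopin2004central}, \cite[Ch.~9--11]{bain2008fundamentals}, \cite{crisan2014particle}) then deliver a pathwise bound of the form
\[
\mathbb{E}_{\mathrm{int}}\!\left[\, \bigl|\bar \pi^{N,\gamma_\ell}_{M,c,t}(\phi) - \hat f^{\gamma_\ell}_{\phi,c,t}(y)\bigr| \,\right] \;\leq\; \frac{c_t(y)}{\sqrt{M}}\, \|\phi\|_\infty,
\]
where $\mathbb{E}_{\mathrm{int}}$ denotes the expectation over the particle filter's internal randomness and $c_t(y)$ collects essential bounds and exponential moments of the Girsanov density along $y$ (the discrete case at time $t_i$ is analogous). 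Taking the outer expectation under $\mathbb P$ and invoking the tower property together with the independence of internal randomness from the observations yields
\[
\mathbb{E}_{\mathbb{P}}\!\left[\, \bigl|\bar \pi^{N,\gamma_\ell}_{M,c,t}(\phi) - \hat f^{\gamma_\ell}_{\phi,c,t}(Y^{N,\gamma_\ell}_{c,0:t})\bigr| \,\right] \;\leq\; \frac{\mathbb{E}_{\mathbb{P}}\!\left[c_t(Y^{N,\gamma_\ell}_{c,0:t})\right]}{\sqrt{M}}\, \|\phi\|_\infty,
\]
so one sets $\tilde c_t := \mathbb{E}_{\mathbb{P}}[c_t(Y^{N,\gamma_\ell}_{c,0:t})]$ (and $\tilde d_{t_i}$ analogously).

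It remains to verify that $\tilde c_t$ and $\tilde d_{t_i}$ are finite. In the discrete-time setting this is elementary: boundedness of $h$ gives $H(x,y) \leq \exp(\|h\|_\infty \|y\|)$, and $Y^{N,\gamma_\ell}_d(t_j) = h(X^{N,\gamma_\ell}(t_j)) + W(t_j)$ is sub-Gaussian, so every moment of $c_{t_i}(Y^{N,\gamma_\ell}_{d,1:i})$ is finite. In the continuous-time setting, the identity $Y^{N,\gamma_\ell}_c(s) = \int_0^s h(X^{N,\gamma_\ell}(u))\,\dd u + B(s)$ splits the relevant stochastic integral appearing in $Z^{\gamma_\ell}(t)$ into a Riemann part bounded by $\|h\|_\infty^2\, t$ and a Brownian stochastic integral whose integrand is bounded; Novikov's criterion then furnishes uniform exponential integrability of the Girsanov density. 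The main obstacle I anticipate is this continuous-time moment bookkeeping: one must extract an \emph{explicit}, $\mathbb P$-integrable pathwise constant $c_t(y)$ from the classical SIR proofs, in which the resampling-induced errors compound across the subintervals of length $\delta$ and are weighted by local likelihood bounds. Controlling this compounding, while keeping the bound uniform enough in the observation path to survive the outer expectation, is the technical crux; fortunately the boundedness of $h$ keeps all estimates within reach of standard exponential-moment arguments for stochastic exponentials.
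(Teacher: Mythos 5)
Your strategy---condition on the observation path, apply SIR convergence to the resulting Feynman--Kac flow, then integrate the path-dependent constant---is a genuinely different organization from the paper's. The paper never conditions on the path: it builds a copy $(\hat{\mathcal K},\hat X^{\gamma_\ell})$ of the reduced model independent of $Y^{N,\gamma_\ell}_c$, rewrites $\hat f^{\gamma_\ell}_{\phi,c,t}(Y^{N,\gamma_\ell}_{c,0:t})$ as a ratio of conditional expectations given $\mathcal Y^{N,\gamma_\ell}_{c,t}$ under $\mathbb P$ (see \eqref{eq. f hat proof in the particle filter}), and runs a Crisan-style induction over the resampling intervals using an \emph{extended} class of test functions $\psi_{t,\tau}$ that may depend on future observation increments, measured by the averaged second moment $b_{\psi_{t,\tau}}$ of \eqref{eq. condition psi} rather than a sup norm. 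Because each of the three conditions \ref{cc1}--\ref{cc3} is verified in expectation over the observation noise as well, all constants are controlled by the uniform, path-free moment bound \eqref{eq. boundedness of moments of Z hat}, and $\tilde c_t$ is finite by construction. Your route, if completed, would buy a clean separation of Monte Carlo error from observation randomness, and it neutralizes the full-model/reduced-model mismatch automatically (conditionally on $y$ the target is just a fixed ratio of integrals against the reduced-model law). For the discrete-time half it is essentially complete, since there the classical bounds genuinely are stated conditionally on the observation record and your sub-Gaussian integrability check closes the loop.

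The genuine gap is in the continuous-time half, at precisely the step you call the crux but treat as citable. The references you invoke (Bain--Crisan Ch.~9, Crisan 2014) prove the continuous-time SIR rate with the expectation already taken over the observations; they do not supply a pathwise constant $c_t(y)$, so you would have to re-derive the whole resampling-interval induction pathwise. That derivation produces a constant containing, from the Kallianpur--Striebel ratio step, powers of $1/\mathbb E\bigl[\hat Z^{\gamma_\ell}_c(t)\,\big|\,Y^{N,\gamma_\ell}_{c,0:t}=y\bigr]$, and this normalizer is only bounded below by a quantity decaying exponentially in the size of $\int_0^t h^{\top}\,\dd y$; showing that the resulting $c_t(y)$ is integrable under the law of $Y^{N,\gamma_\ell}_c$ under $\mathbb P$ (a Brownian motion plus a bounded drift, not a Brownian motion) is exactly the bookkeeping you defer. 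Novikov's criterion alone does not deliver it, because you need joint control of the compounded interval-by-interval likelihood ratios together with the reciprocal normalizer, uniformly enough to survive the outer expectation. This is fixable, but as written the continuous-time bound rests on a pathwise lemma that is neither in the cited literature nor proved; the paper's $\mathbb P$-averaged condition \eqref{eq. condition psi} is designed specifically so that no such pathwise constant is ever needed.
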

\begin{proof}
	The convergence of discrete-time particle filters follows immediately from \cite[Theorem 1]{chopin2004central}.
	The convergence of continuous-time particle filters is shown in \ref{section convergence of particle filters}.
\end{proof}

\begin{theorem}{\label{thm main results}}
	Let the SIR particle filters defined in \Cref{def particle filters} utilize the residual resampling scheme in the resampling step.
	\begin{enumerate}
		\item  If \eqref{eq. assumption initial conditons} and \eqref{eq. assumption infinite explosion time gamma1} are satisfied, then for any bounded continuous function $\phi(\kappa, x)$, 
		there hold
		\begin{align*}
			&\lim_{N\to\infty} \lim_{M\to\infty} \mathbb P 
			\left(  
			   \left|\bar \pi^{N,\gamma_1}_{M,c,t}(\phi) - \pi^{N,\gamma_1}_{c,t}(\phi)\right| \leq \delta 
			\right) =1, 
			&& \forall t>0, ~ \forall \delta>0,\\
			&\lim_{N\to\infty} \lim_{M\to\infty} \mathbb P 
			\left(  
			\left|\bar \pi^{N,\gamma_1}_{M,d,t_i}(\phi) - \pi^{N,\gamma_1}_{d,t_i}(\phi)\right| \leq \delta 
			\right) =1 ,
			&& \forall i\in \mathbb N_{>0}, ~ \forall \delta>0.
		\end{align*}
		\item Assume that conditions \eqref{eq. assumption initial conditons}, \eqref{eq. assumption for the second time scale 2}, \eqref{eq. derivative of V1}, \eqref{eq. assumption egordicity}, \eqref{eq. assumption technical 1}, \eqref{eq. assumption technical 2}, and \eqref{eq. assumption infinite explosion time gamma2} are satisfied, and, moreover, $h(x)=h(\Pi_2 x)$ for all $x\in\mathbb R^{n}$.
		Then, for any bounded continuous function $\phi(\kappa, x)$ such that
		$\phi(\kappa, x)=\phi(\kappa, \Pi_2 x)$ for all $(\kappa,x)\in\mathbb R^{r}\times \mathbb R^{n}$, there hold
		\begin{align*}
		&\lim_{N\to\infty} \lim_{M\to\infty} \mathbb P 
		\left(  
		\left|\bar \pi^{N,\gamma_2}_{M,c,t}(\phi) - \pi^{N,\gamma_2}_{c,t}(\phi)\right| \leq \delta 
		\right) =1, 
		&& \forall t>0, ~ \forall \delta>0,\\
		&\lim_{N\to\infty} \lim_{M\to\infty} \mathbb P 
		\left(  
		\left|\bar \pi^{N,\gamma_2}_{M,d,t_i}(\phi) - \pi^{N,\gamma_2}_{d,t_i}(\phi)\right| \leq \delta 
		\right) =1 ,
		&& \forall i\in \mathbb N_{>0}, ~ \forall \delta>0.
		\end{align*}
	\end{enumerate}
\end{theorem}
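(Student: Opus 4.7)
The plan is to deduce this theorem as an essentially mechanical corollary of \Cref{thm convergence of artificial filters} and \Cref{thm convergence of particle filters} via a single triangle inequality. The two preceding results are tailor-made to cover the two iterated limits: $M\to\infty$ closes the Monte Carlo gap between the particle filter (run against the reduced dynamics) and the artificial filter map $\hat f^{\gamma_\ell}_{\phi,\cdot,\cdot}$ evaluated on the observations of the full model, while $N\to\infty$ closes the model-reduction gap between that artificial filter and the true filter of the full model $\pi^{N,\gamma_\ell}_{\cdot,\cdot}(\phi)$.

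Concretely, I would first observe that, by construction in \Cref{def particle filters}, $\bar \pi^{N,\gamma_\ell}_{M,c,t}(\phi)$ is the output of \Cref{alg continuous time regularized particle filters} fed with the reduced-model dynamics $X^{\gamma_\ell}$ and the observation path $Y^{N,\gamma_\ell}_{c,0:t}$. Thus it is precisely an $M$-particle approximation of $\hat f^{\gamma_\ell}_{\phi,c,t}(Y^{N,\gamma_\ell}_{c,0:t})$, which is exactly the quantity controlled by \Cref{thm convergence of particle filters}; the required nonexplosivity of $X^{\gamma_\ell}$ is guaranteed by \eqref{eq. assumption infinite explosion time gamma1} in part~1 and by \eqref{eq. assumption infinite explosion time gamma2} in part~2. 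For any $\delta>0$ I then split
\begin{align*}
\mathbb P\left(\left|\bar \pi^{N,\gamma_\ell}_{M,c,t}(\phi) - \pi^{N,\gamma_\ell}_{c,t}(\phi)\right| > \delta\right)
\leq{} & \mathbb P\left(\left|\bar \pi^{N,\gamma_\ell}_{M,c,t}(\phi) - \hat f^{\gamma_\ell}_{\phi,c,t}\left(Y^{N,\gamma_\ell}_{c,0:t}\right)\right| > \delta/2\right) \\
& + \mathbb P\left(\left|\hat f^{\gamma_\ell}_{\phi,c,t}\left(Y^{N,\gamma_\ell}_{c,0:t}\right) - \pi^{N,\gamma_\ell}_{c,t}(\phi)\right| > \delta/2\right).
\end{align*}
By Markov's inequality and \Cref{thm convergence of particle filters}, the first term is at most $2\tilde c_t \|\phi\|_\infty/(\delta\sqrt M)$ and vanishes as $M\to\infty$ for every fixed $N$. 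The second term vanishes as $N\to\infty$ by \Cref{thm convergence of artificial filters} (part~1 for $\ell=1$, and part~2 for $\ell=2$, where the hypotheses $h(x)=h(\Pi_2 x)$ and $\phi(\kappa,x)=\phi(\kappa,\Pi_2 x)$ are exactly what is needed to invoke the second part). Sending $M\to\infty$ first and $N\to\infty$ afterwards yields the claim in the continuous-time case; the discrete-time statements are identical after substituting $\hat f^{\gamma_\ell}_{\phi,d,t_i}$, $Y^{N,\gamma_\ell}_{d,1:i}$, and $\tilde d_{t_i}$ for their continuous counterparts.

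I do not anticipate any real mathematical obstacle, since the substance of the argument has been discharged in the two preceding theorems. The one subtle point worth highlighting in the write-up is why the order of limits cannot be reversed: the constant $\tilde c_t$ in \Cref{thm convergence of particle filters} depends only on the reduced model $X^{\gamma_\ell}$ and the time horizon, and in particular is \emph{independent of $N$}, which is what lets the Monte Carlo error be driven to zero uniformly in $N$ before the model-reduction limit is taken. Were one to attempt $N\to\infty$ before $M\to\infty$, \Cref{thm convergence of artificial filters} alone would give no information about $\bar \pi^{N,\gamma_\ell}_{M,c,t}(\phi)$, since that object also depends on the particle realizations under the reference probability. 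My proof will therefore simply make this decomposition and uniformity explicit.
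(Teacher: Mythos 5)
Your proposal is correct and follows essentially the same route as the paper, whose proof of this theorem is simply the observation that it follows immediately from \Cref{thm convergence of artificial filters} and \Cref{thm convergence of particle filters}; your triangle-inequality/Markov-inequality decomposition just makes that deduction explicit. (One minor remark: for the iterated limit $\lim_{N\to\infty}\lim_{M\to\infty}$ as stated, uniformity of $\tilde c_t$ in $N$ is not actually needed, since the Monte Carlo term vanishes for each fixed $N$ separately.)
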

\begin{proof}
	It follows immediately from \Cref{thm convergence of artificial filters} and \Cref{thm convergence of particle filters}.
\end{proof}

\Cref{thm main results} states that errors between the proposed particle filters (see \Cref{def particle filters}) and the corresponding solutions to the target filtering problems are very likely to be small, provided a large scaling factor $N$ and a large particle population size $M$.
Therefore, the proposed particle filters can solve the filtering problems of the original model accurately.
Apart from the accuracy, these SIR particle filters consume much less computational resources than the ones {constructed by the full dynamical model}, as the former ones only require the simulation of a reduced model in the sampling step, which is computationally much cheaper than the simulation of the full model (see \Cref{remark computation efficiency at the first time scale} and \Cref{remark computation efficiency at the second time scale}).
Both facts suggest that the particle filters proposed in \Cref{def particle filters} based on hybrid approximations are efficient in solving the target filtering problems.

Compared with the model reduction technique via time-scale separations (\Cref{prop Kang kurtz gamma1} and \Cref{prop kang kurtz gamma2}), our main result for the filtering problem (\Cref{thm main results}) requires two more additional conditions, first that the function $\phi(\cdot)$ should be bounded, and second that functions $h(\cdot)$ and $\phi(\cdot)$ should be irrelevant to the fast subnetwork (for the second time scale only).
The boundedness of $\phi(\cdot)$ is a result of the boundedness condition required for the convergence of filters (see \Cref{thm convergence of artificial filters} and \Cref{thm convergence of particle filters}).
This condition is a very common requirement in filtering theory, as it guarantees the true filter to be well-defined and simplifies analyses. 
However, it is not generally satisfied for biological applications, where concentrations of most target species have no theoretical upper bounds apart from some exceptions, e.g., gene copies.
To tackle this problem, we can truncate the target quantity by a large number beyond which the conditional probability is comparatively low, and the tail event contributes little to the conditional expectation.
Then, an estimate of the truncated quantity generated by a particle filter proposed in \Cref{def particle filters} can provide an accurate approximation to the conditional expectation of the target quantity.

The other additional condition that $h(\cdot)$ and $\phi(\cdot)$ are irrelevant to the fast subnetwork does not place many obstacles to applying the proposed particle filters because of the following reasons.
First, most fluorescent reporters are large protein molecules which typically have slower dynamics than smaller molecules, such as enzymes and amino acids.
Consequently, the function $h(\cdot)$, which represents the abundance  of fluorescent reporters, is usually irrelevant to the fast fluctuating subnetwork.
Meanwhile, researchers are inclined to estimate the state of large molecules, e.g., proteins or DNAs, rather than small molecules, as a result of which the function $\phi(\cdot)$ is also independent of the fast subnetwork in most cases.

We conjecture that both of the additional conditions in \Cref{thm main results} can be relaxed, and, therefore, the proposed method can cover a much broader class of multi-scale reaction systems.
In the proof (see \ref{Sec the convergence of theoretical filters}), we frequently use the portmanteau lemma to show the convergence of expectations, which leads to the boundedness condition of $\phi$.
Therefore, by considering the convergence rate of the hybrid approximation (see \cite{enciso2019constant}), it could be possible to extend the portmanteau lemma to work for unbounded functions and, thus, relax the boundedness condition of $\phi$.
On the other hand, the fast subnetwork state is very much shaped by the state of the slow subnetwork (see \eqref{eq. assumption egordicity}).
Therefore, it is reasonable to think that we can instantaneously get the knowledge of the fast subnetwork by inferring the slow one, which can potentially relax the constraint that $h(\cdot)$ and $\phi(\cdot)$ are irrelevant to the fast subnetwork.  
However, to rigorously justify all these points, we need to have more elaborate analyses of the problem, because of which we leave it for future work.

We end this section with a simple algorithm that summarizes the procedures to estimate latent states of a multi-scale reaction system using the proposed SIR particle filters. 

\begin{algorithm}
	\caption{The procedure to estimate latent states of a multi-scale reaction network system using SIR particle filters based on hybrid approximations}
	\label{alg work-flow}
	\begin{algorithmic}[1]
		\STATE Describe the multi-scale reaction system in terms of scaling parameters $\alpha_i$-s and $\beta_j$-s and choose the time scale of interest  (see \Cref{subsection chemical reaction network}). Preferably, the chosen time scale is $\gamma_1$ or $\gamma_2$.
		\STATE Construct a hybrid approximation of the target system using \eqref{eq. reduced model at the first time scale} or \eqref{eq. reduced models at the second time scale}.
		\STATE  Check whether the quantity of interest, $\phi(\mathcal K, X^{N,\gamma}(t))$, has a theoretical bound. If not, truncate the function by a large number beyond which the system is not likely to reach.
		\STATE  Based on the type of the observation and the chosen timescale, pick an appropriate particle filter in \Cref{def particle filters} to solve the filtering problem. The performance of this filter is guaranteed by \Cref{thm main results}.
	\end{algorithmic}
\end{algorithm}

\section{Numerical examples}{\label{Sec. numerical examples}}
In this section, we illustrate our approach using two numerical examples, a simple gene expression model and a transcriptional regulation network.
{Such gene expression models are prevalent in systems and synthetic biology and captures key biophysical phenomena observed in lab experiments.}
Since the burst kinetics in gene expression networks are highly gene specific  \cite{suter2011mammalian}, the identifiability is usually guaranteed in these systems. 
Meanwhile, a hybrid approximation based approach has already been successfully applied to the parameter inference problem for gene expression systems \cite{herbach2017inferring}. Thus, it is natural to think that our approach that also utilizes hybrid approximations should be applicable to the filtering problem for these gene models. 

All algorithms mentioned in this section are implemented on Matlab and executed on a high performance computing cluster with {15-core}, 2.25GHz AMD EPYC 7742 processors.
All code applied to perform the analysis is available  at ``github.com/ZhouFang92/Particle-filters-for-multi-scale-CRNs-v.1".


\subsection{A simple gene expression model}\label{sec simple gene expression model}

\begin{figure}[h!]
	\centering
	\includegraphics[width= 0.7 \textwidth]{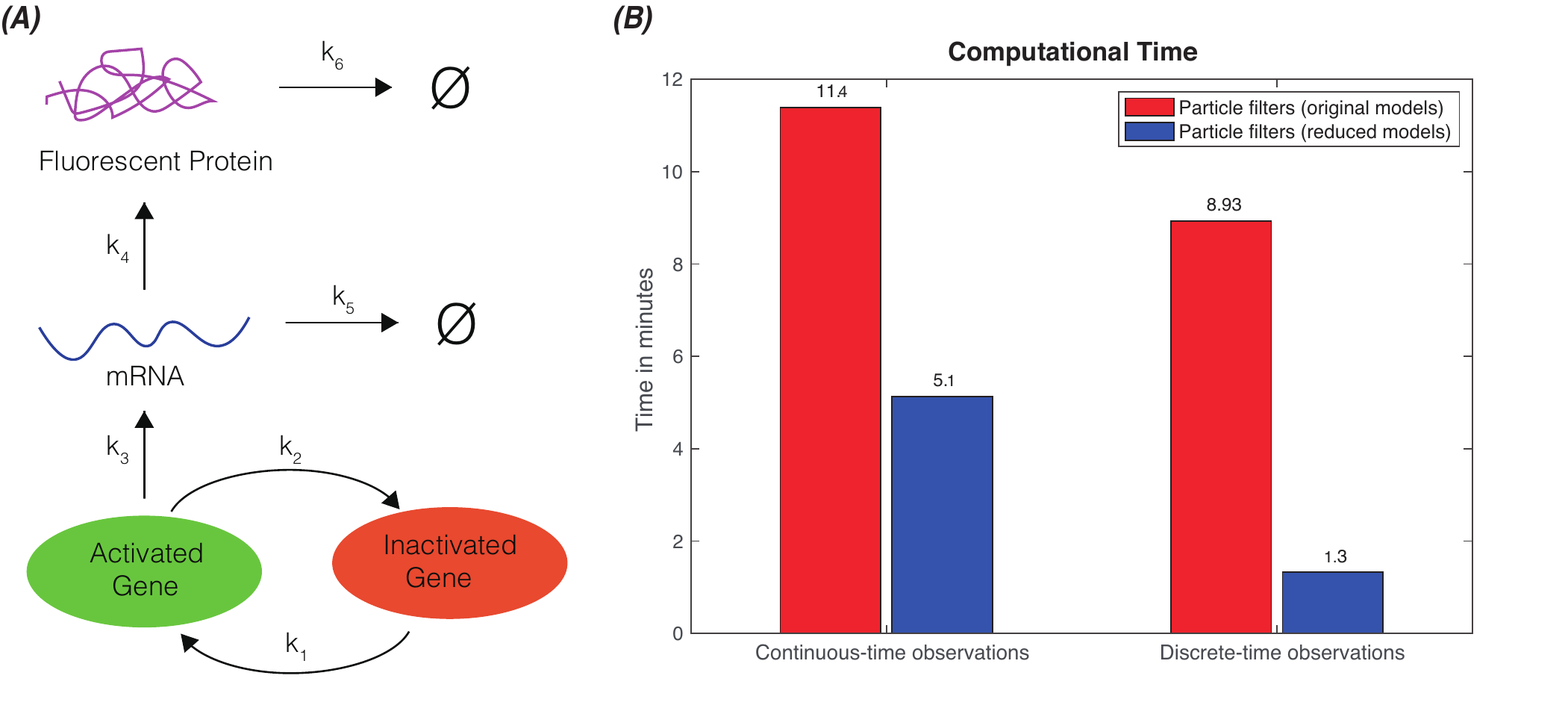}
	\caption{
		(A) shows the biological circuit of the considered network. (B) shows the computational time consumed by different particle filters. {The filters were run on a 2.25GHz AMD processors with 15 cores.}
	}
	\label{fig. ex1_cpu_time}
\end{figure}

We first consider a gene expression model (see {\Cref{fig. ex1_cpu_time}} (A)) {that contains the basic patterns in the central dogma of molecular biology.
This model is also known as a telegraph model, which has been widely used in the study of gene transcriptional dynamics; also, it can capture important biophysical phenomena (e.g., transcriptional bursting \cite{suter2011mammalian,rullan2018optogenetic}).} 
The model involves four species: a gene having on and off states (denoted respectively as $S_2$ and $S_1$), the mRNA it transcribes (denoted as $S_3$), and a fluorescent protein it expresses (denoted as $S_4$); besides, it has six reactions: 
\begin{align*}
S_1 &\ce{->[$k_1$]} S_2 && \text{gene activation},
&S_2 &\ce{->[$k_2$]} S_1 && \text{gene deactivation}, \\
S_2 &\ce{->[$k_3$]} S_3 + S_2 && \text{gene transcription},
&S_3	&\ce{ ->[$k_{4}$]}  S_3+S_4  && \text{translation},\\
S_3 &\ce{->[{$k_5$}]} \emptyset && \text{mRNA degradation},
& S_4 &\ce{ ->[$k_{6}$]} \emptyset && \text{protein degradation.}
\end{align*}
For this system, we take $N=100$, $\alpha_4=1$, and $\alpha_i=0$ for $1\leq i \leq 3$, i.e., the cellular system consists of hundreds of fluorescent protein molecules but very few copies of other molecules. 
The values of reaction constants and their scaling exponents are shown in {\Cref{table reaction constants}}.
With this setting, we can easily calculate that $\gamma_1=0$, and the second time scale does not exist.
For initial conditions, we assume $X^N_{1}(0)$ to have a binary distribution with mean $1/3$, $X^{N}_2(0)$ to satisfy $X^{N}_2(0)=1- X^{N}_1(0)$, $X^{N}_3(0)$ to have a Poisson distribution with mean 2, and $X^{N}_4(0)$ to also have a Poisson distribution with mean 2.
Also, we assume that all reaction constants and initial conditions except $X_2(0)$  are independent of each other.

\begin{table}[h]
	\centering
	\begin{tabular}{|cr|cr|cr|cr|}
		\hline
		\multicolumn{2}{c}{Reaction constants {(minute$^{-1}$)}} & \multicolumn{2}{c}{Exponents} &  \multicolumn{2}{c}{Scaled rates {(minute$^{-1}$)}} &  \multicolumn{2}{c}{Reaction scale}  \\
		\hline
		$k_1$ & $\mathcal U(1 \times 10^{-2}, 2 \times 10^{-2})$  & $\beta_1$  &  0&   $k'_1$ & $\mathcal U (0.01, 0.02)$ &  $\beta_{1}+ \alpha_1$ & 0\\
		$k_2$ &  $\mathcal U(7\times 10^{-3}, 1 \times 10^{-2} )$ & $\beta_2$ &  $0$&  $k'_2$& $\mathcal U (0.007,0.01)$ & $\beta_{2} + \alpha_2$ & 0 \\
		$k_3$ &  $\mathcal U (7\times 10^{-1}, 9\times 10^{-1})$&  $\beta_3$&  0&  $k'_3$& $\mathcal U (0.7, 0.9)$ & $\beta_{3} + \alpha_2$ & 0 \\
		$k_4$ & $\mathcal U (3\times 10^{+1}, 4\times 10^{+1})$ & $\beta_4$ &  1 &  $k'_4$& $\mathcal U (0.3, 0.4)$& $\beta_{4} + \alpha_3$ & 1 \\
		$k_5$ &  $\mathcal U (1\times 10^{-1},3\times 10^{-1})$ &  $\beta_5$&  $0$&  $k'_5$& $\mathcal U (0.1,0.3)$ & $\beta_{5} + \alpha_3$& 0\\
		$k_6$ &  $\mathcal U ( 3\times 10^{-1},4 \times 10^{-1})$&  $\beta_6$&  0&  $k'_6$& $\mathcal U (0.3, 0.4)$& $\beta_{6} + \alpha_4$ & 1 \\
		\hline
	\end{tabular}
	\caption{Scaling exponents for reaction rates of the gene expression model: {$\mathcal U$ is the notation for the uniform distribution.}}
	\label{table reaction constants}
\end{table}

By \eqref{eq. scaling stochastic dynamics}, we can easily derive the full dynamical model at the fastest time scale $\gamma_1$ as follows.
	\begin{align*}
		X^{N,\gamma_1}_1(t)=& X^{N,\gamma_1}_1(0)- R_1 \left( k'_1  \int_0^t X^{N,\gamma_1}_{1} (s) \dd s \right)  + R_{2}\left(k'_{2} \int_0^t X^{N,\gamma_1}_2(s) \dd s\right)\\
		X^{N,\gamma_1}_2(t)=& X^{N,\gamma_1}_2(0)+ R_1 \left( k'_1  \int_0^t X^{N,\gamma_1}_{1} (s) \dd s \right)
		- R_{2}\left(k'_{2} \int_0^t X^{N,\gamma_1}_2(s) \dd s\right)  \\
		X^{N,\gamma_1}_3(t)=&X^{N,\gamma_1}_3(0)+R_{3} \left(k'_3\int_0^t X^{N,\gamma_1}_2(s) \dd s\right)  -R_5\left(k'_5\int_0^t X^{N,\gamma_1}_{3}(s) \dd s\right) \\ 
		X^{N,\gamma_1}_4(t)=& X^{N,\gamma_1}_4(0) + N^{-1}R_4\left(k'_4  N \int_0^t X^{N,\gamma_1}_3(s) \dd s\right)  - N^{-1}R_6\left(k'_6 N  \int_0^t X^{N,\gamma_1}_4(s) \dd s\right). 
\end{align*}Moreover, by \eqref{eq. reduced model at the first time scale}, the reduced dynamical model satisfies
\begin{align*}
X^{\gamma_1}_1(t)=& \lim_{N\to \infty}X^{N,\gamma_1}_1(0)- R_1 \left( k'_1  \int_0^t X^{\gamma_1}_{1} (s) \dd s \right)+ R_{2}\left(k'_{2} \int_0^t X^{\gamma_1}_2(s) \dd s\right)\\
X^{\gamma_1}_2(t)=& \lim_{N\to \infty} X^{N,\gamma_1}_2(0)+ R_1 \left( k'_1  \int_0^t X^{\gamma_1}_{1} (s) \dd s \right)  - R_{2}\left(k'_{2} \int_0^t X^{\gamma_1}_2(s) \dd s\right)  \\
X^{\gamma_1}_3(t)=&\lim_{N\to \infty} X^{N,\gamma_1}_3(0) 
+R_{3} \left(k'_3\int_0^t X^{\gamma_1}_2(s) \dd s\right) -R_5\left(k'_5\int_0^t X^{\gamma_1}_{3}(s) \dd s\right) \\
X^{\gamma_1}_4(t)=& \lim_{N\to \infty} X^{N,\gamma_1}_4(0) 
+   \int_0^t k'_4 X^{N,\gamma_1}_3(s) \dd s - \int_0^t k'_6  X^{N,\gamma_1}_4(s) \dd s.
\end{align*}
In this example, we also assume that light intensity signals can be observed continuously or discretely from a microscope, satisfying \eqref{eq. continuous time observation} or \eqref{eq. discrete time observation}
where $h(x)= \left(10 \times x_4 \right) \wedge 10^{3}$ with $10^3$ being the measurement range, $B(t)$ is a Brownian motion, $t_i=2i$ (i.e., discrete-time observations come every two minutes), and $\{W(t_i)\}_{i\in\mathbb N_{>0}}$ is a sequence of mutually independent standard Gaussian random variables.

\begin{figure}[h!]
	\centering
	\includegraphics[width= 0.9 \textwidth]{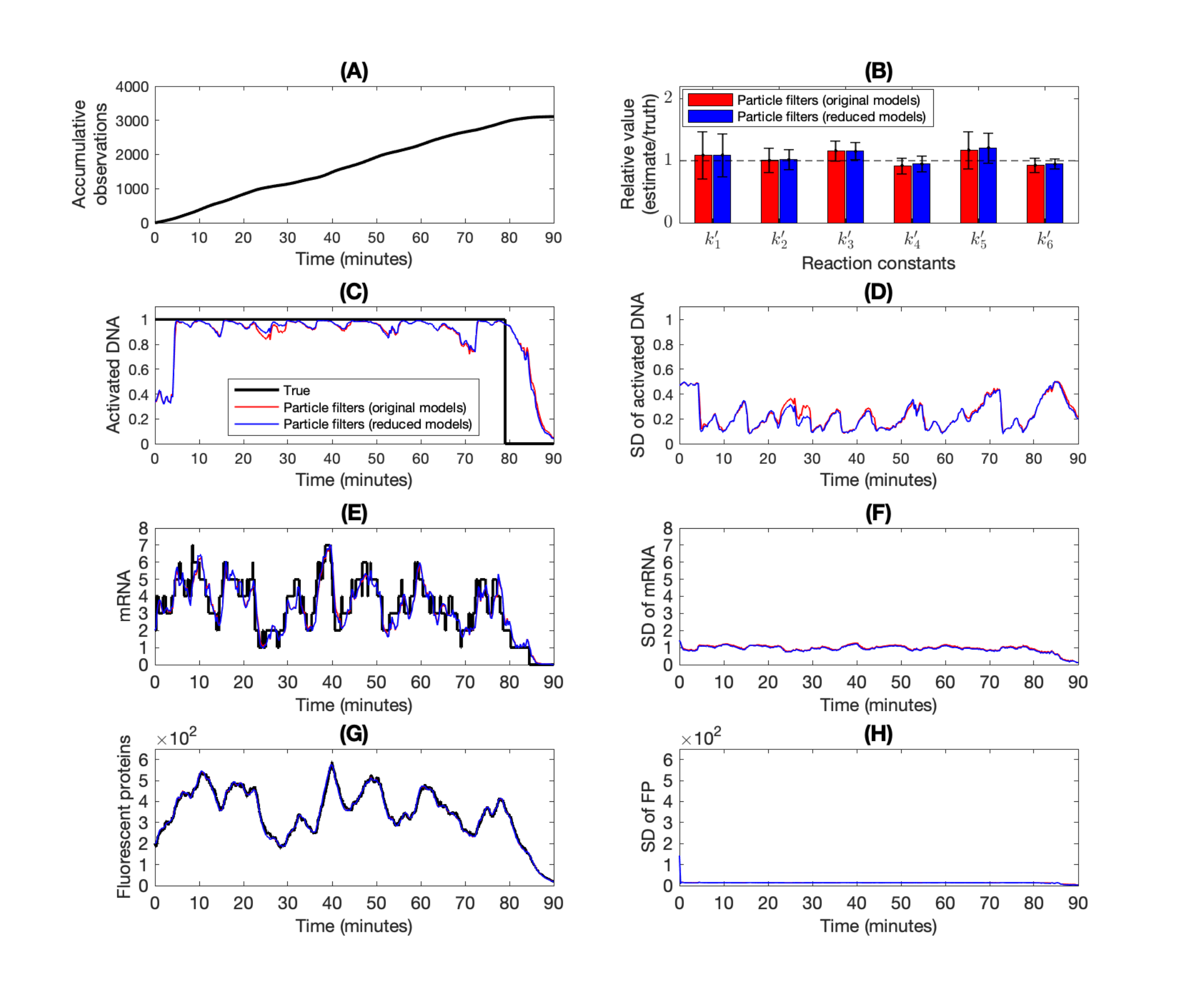}
	\caption{Simulation results for the simple gene expression model with continuous-time observations: {Panel (A) shows the accumulative observation signal $Y^{N,\gamma}_{c}(t)$, which is the integral of the time-course measurement from time 0 to time $t$. Panel (B) shows the performance of both filters in estimating reaction constants, where error bars represent 95\% confidence intervals. In this panel, we draw the relative values of these estimates $\left(\frac{\text{the estimate}}{\text{the true parameter}}\right)$ so that all the true parameters are rescaled to 1.}
		The rest of the panels compare performances of different particle filters in estimating dynamic states, where black lines are the true values of the underlying system, red lines are the estimates by the particle filter using the full dynamic model, and the blue lines are the estimates by the particle filter using the reduced dynamic model.
		Specifically, (C), (E), and (G) show the estimates of the conditional means of Activated DNAs, mRNAs, and fluorescent proteins, respectively; 
		(D), (F), and (H) present the estimates of the conditional standard deviations of these quantities.
	}
	\label{fig ex1_continuous}
\end{figure}

In this numerical example, we first randomly chose a set of system parameters, simulated the system for 90 minutes, and generated observations for both continuous-time and discrete-time scenarios. 
Then, based on these observations, we used particle filters that applies the hybrid approximations to infer dynamic states and reaction constants. 
Throughout this experiment, we set particle population to be {100,000}.
Meanwhile, we took the particle filter that applies the full dynamic model as a benchmark. 

\begin{figure}[h!]
	\centering
	\includegraphics[width= 0.9 \textwidth]{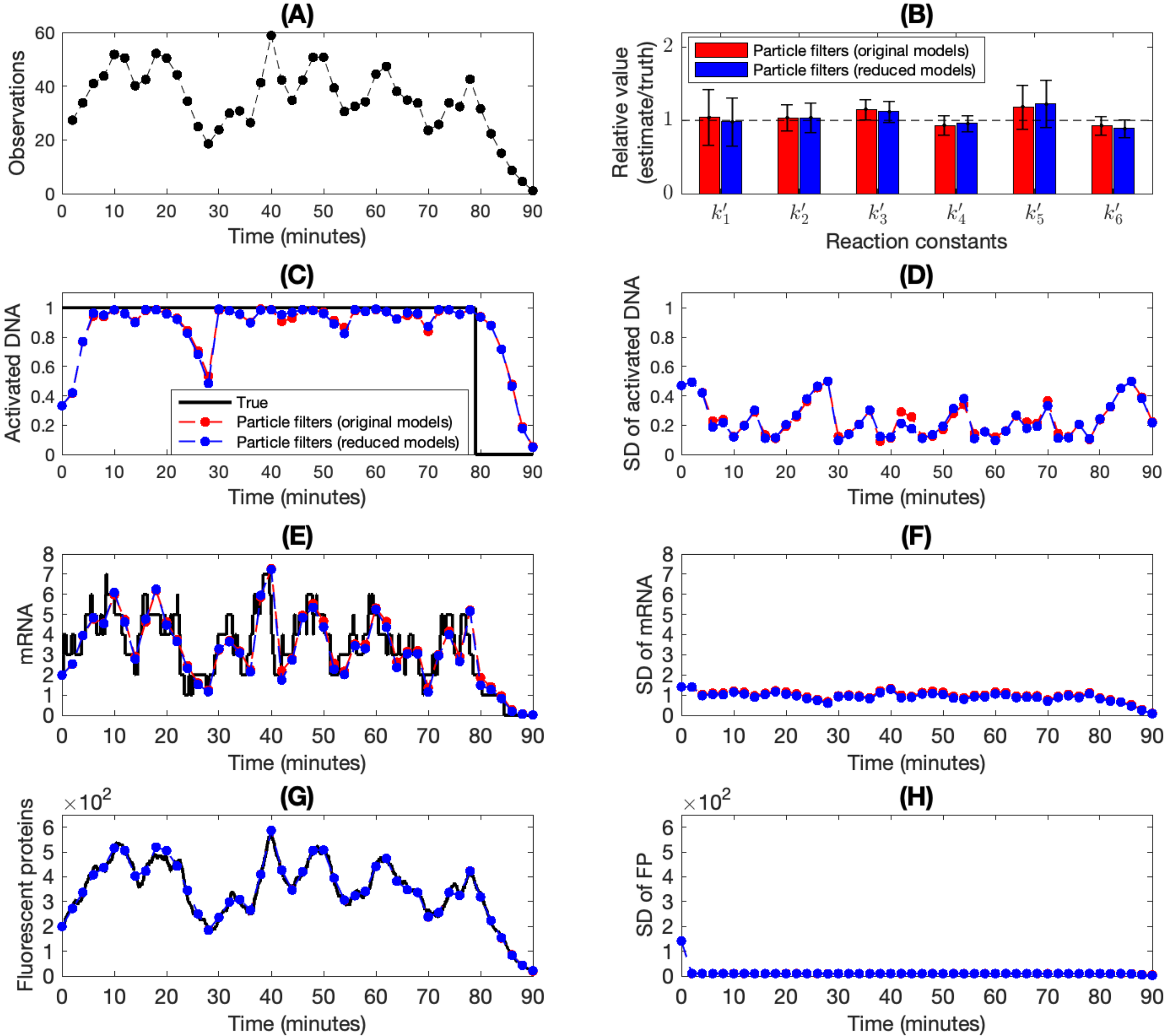}
	\caption{Simulation results for the simple gene expression model with discrete-time observations.
		The meaning of each panel is the same as the corresponding one in {\Cref{fig ex1_continuous}, except that panel (A) draws the raw data of the observation instead of the accumulative observation signal.}
	}
	\label{fig ex1_discrete}
\end{figure}

\begin{figure}[h!]
	\centering
	\includegraphics[width= 0.70 \textwidth]{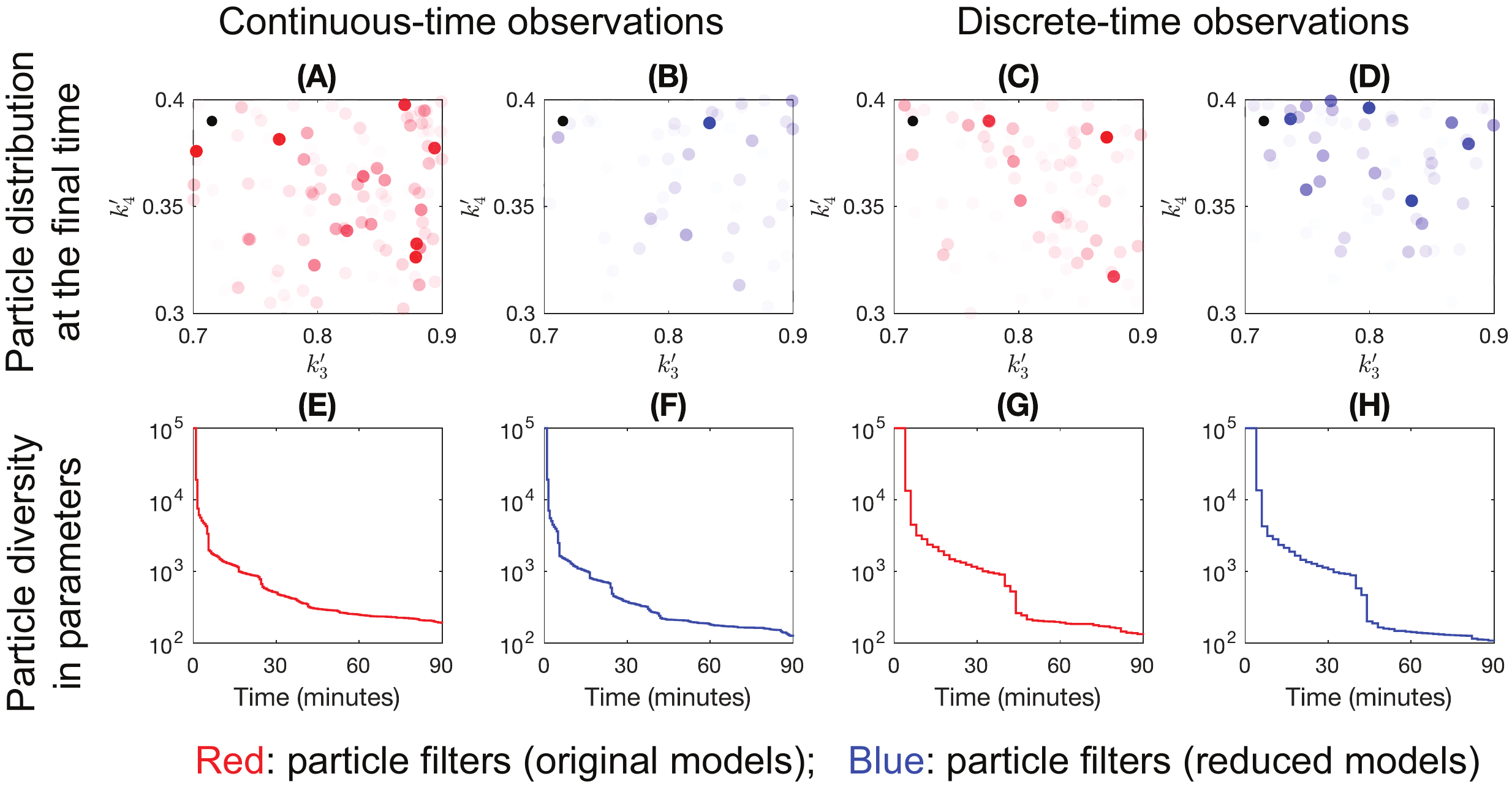}
	\caption{
	{Particle distributions in the simple gene expression example. 
			The first two columns show the results for the continuous-time observation case, and the last two columns show the results for the discrete-time observation case. 
			The first row shows particle distributions in $k'_3$--$k'_4$ plane at the final time, where the black dot shows the true value, and the colored dots show the particles from the filters.
			The transparency of a colored dot indicates the number of particles located in this specific site; the more transparent it is, the fewer particles are located. 
			The second row shows the time evolution of the particle diversity in parameters (i.e., the number of distinguishable particles for estimating model parameters). 
			This figure tells that for both observation types and both particle filters, the particle diversity in parameters decays dramatically over time. Finally, hundreds of distinguishable particles are left, and their distributions are not very consistent.}
	}
	\label{fig ex1_particle_distribution}
\end{figure}

Numerical simulation results are presented in {\Cref{fig. ex1_cpu_time}, \Cref{fig ex1_continuous}, and \Cref{fig ex1_discrete}}.
From {\Cref{fig. ex1_cpu_time}}, we see that particle filters applying reduced models is much faster than particle filters applying full dynamical models.
{Moreover, both kinds of filters provide very consistent estimates to the dynamical states in both continuous-time and discrete-time scenarios (see \Cref{fig ex1_continuous} and \Cref{fig ex1_discrete});
their relative $L_1$ distances in estimating different parts of the system are summarized in \Cref{table ex1 L2 distance}.}
\begin{table}[h]
	\centering
	\begin{tabular}{|c|cc|cc|cc|}
		\hline
		\multirow{2}{*}{}            & \multicolumn{2}{c|}{Activated DNA}                                                                 & \multicolumn{2}{c|}{mRNA}                                                                & \multicolumn{2}{c|}{Protein}                                                             \\ \cline{2-7} 
		& \multicolumn{1}{c|}{Mean} & \begin{tabular}[c]{@{}c@{}}Standard\\ deviation\end{tabular} & \multicolumn{1}{c|}{Mean} & \begin{tabular}[c]{@{}c@{}}Standard\\ deviation\end{tabular} & \multicolumn{1}{c|}{Mean} & \begin{tabular}[c]{@{}c@{}}Standard\\ deviation\end{tabular} \\ \hline
		Continuous-time observations & \multicolumn{1}{c|}{1.21\%}     &               6.17\%                                               & \multicolumn{1}{c|}{2.00\%}     &           4.65\%                                                   & \multicolumn{1}{c|}{0.25\%}     &                   4.91\%                                           \\ \hline
		Discrete-time observations   & \multicolumn{1}{l|}{1.16\%}     &                   6.44\%                                           & \multicolumn{1}{l|}{3.76\%}     &                      7.12\%                                        & \multicolumn{1}{l|}{0.07\%}     &        1.55\%                                                      \\ \hline
	\end{tabular}
    \caption{{Relative $L_1$ distance between both kinds of particle filters in estimating the simple gene expression system. The relative $L_1$ distance between two signals $f_1(t)$ and $f_2(t)$ is defined by $\frac{\|f_1-f_2\|_{L_1}}{\|f_1\|_{L_1}}$.  The results show that these two kinds of particle filters provide very consistent estimate to the dynamical states with relative errors no greater than 8\%.} }
    \label{table ex1 L2 distance}
\end{table}
{Besides, we also observe from \Cref{fig ex1_continuous} and \Cref{fig ex1_discrete} that both particle filters accurately track the trajectories of the mRNA and protein, and they also successfully identify the change of the gene state after a short time period.
The relatively slow convergence of the estimate to the gene state is attributed to the long time delay between the DNA dynamics and the observation processes, which makes the observation less informative about the DNA dynamics. 
Particularly, in the beginning, the filters are not certain about the gene state until they identify the growth of mRNA copies at about the sixth minute, which suggests that the gene has already been activated.
This delay also occurs at the final time when the gene is deactivated --- the filters can not be sure of the gene state until they identify a sharp decrease of the mRNA copies at a time point when the gene has been shut down for a while.  
}

{Moreover, both filters also provide consistent and accurate estimates to model parameters (see \Cref{fig ex1_continuous}(B) and \Cref{fig ex1_discrete}(B)).
Despite this success, in these filters, we still observe a well-known phenomenon, called sample degeneracy, that the particle diversity in parameters drops quickly over time (see the second row of \Cref{fig ex1_particle_distribution}). 
This degeneracy usually has adverse effects on parameter inference. 
Fortunately, in this case study, there are still hundreds of distinguishable particles left at the final time.
Therefore, sample degeneracy only slightly affects the estimate to the posterior of model parameters (see the first row of \Cref{fig ex1_particle_distribution}) and does not affect their mean estimates very much (see \Cref{fig ex1_continuous}(B) and \Cref{fig ex1_discrete}(B)).
We leave the problem of how to combat sample degeneracy for future work. 
}

{Finally, we investigate the performance of our filter when the observation noise is non-Gaussian. 
In real experiments, though the scale of observation noise can be reliably quantified,  identifying its probability distribution can be difficult. 
When the Gaussian-noise assumption is violated, the weight updating step which utilizes the Gaussian likelihood might be problematic and introduce extra errors to the estimate.
To consider this factor, we replaced Gaussian noise with several other types of noise (shown in \Cref{table ex1 other noise}) and then tested our filter (which applies the reduced model and updates weights by the Gaussian likelihood) in these settings.
In these numerical experiments, we restricted ourselves to the discrete-time observation case and took the particle filter applying the full model and updating weights by the exact distribution of observation noise as a benchmark. 
Numerical results are presented in \Cref{table ex1 other noise}, which show the relative $L_1$ distance between these two filters.}
\begin{table}[h]
	\centering
	\begin{tabular}{|l|cc|cc|cc|}
		\hline
		\multirow{2}{*}{}            & \multicolumn{2}{c|}{Activated DNA}                                                                 & \multicolumn{2}{c|}{mRNA}                                                                & \multicolumn{2}{c|}{Protein}                                                             \\ \cline{2-7} 
		& \multicolumn{1}{c|}{Mean} & \begin{tabular}[c]{@{}c@{}}Standard\\ deviation\end{tabular} & \multicolumn{1}{c|}{Mean} & \begin{tabular}[c]{@{}c@{}}Standard\\ deviation\end{tabular} & \multicolumn{1}{c|}{Mean} & \begin{tabular}[c]{@{}c@{}}Standard\\ deviation\end{tabular} \\ \hline
		T distribution (with 4 degrees of freedom) & \multicolumn{1}{c|}{1.48\%}     &             8.07\%                                               & \multicolumn{1}{c|}{4.64\%}     &           10.06\%                                                   & \multicolumn{1}{c|}{0.33\%}     &                   24.11\%                                           \\ \hline
		Laplace distribution ($\text{Laplace}(0,1)$) & \multicolumn{1}{c|}{1.33\%}     &               7.90\%                                               & \multicolumn{1}{c|}{4.38\%}     &          8.41\%                                                   & \multicolumn{1}{c|}{0.33\%}     &                   23.88\%                                           \\ \hline
		Log-normal distribution ($\text{Lognormal}(0,0.25)$) & \multicolumn{1}{c|}{1.32\%}     &               5.75\%                                               & \multicolumn{1}{c|}{1.90\%}     &           3.76\%                                                   & \multicolumn{1}{c|}{3.13\%}     &                   42.21\%                                           \\ \hline
	\end{tabular}
	\caption{
		{ Performance of our filter when Gaussian-noise assumption is violated (for the simple gene expression model).			
		This table shows the relative $L_1$ distance between our filter and the benchmark, when the observation noise has non-Gaussian distributions shown in the first column. 
		In these experiments, our filter utilizes the reduced model and updates weights as if the noise is Gaussian; in contrast, the benchmark filter utilizes the original model and updates weights according to the exact distribution of the observation noise.}}
	\label{table ex1 other noise}
\end{table}
{Note that the observation noise directly affects our belief about on the fluorescent proteins --- the stronger the noise intensity is, the larger conditional variance will be.
The shape of the observation noise also matters. 
Therefore, it is not surprising to see that the non-Gaussian noise mainly affects the performance of our filter in estimating the conditional standard deviation of fluorescent proteins (see \Cref{table ex1 other noise}).
However, our filter still provides accurate estimate for the mean dynamics of proteins with relative errors no greater than 4\%. 	
Moreover, our filter is still consistent with the benchmark in estimating the DNA and mRNAs with relative errors no greater than 11\%.
Notably, this consistency between our filter and the benchmark holds for both white noise (the t-distribution and Laplace distribution) and colored noise (the log-normal distribution). 
To conclude, our method is still accurate for this gene expression model when the Gaussian-noise assumption is violated. 
}

{In summary, our particle filter applying the reduced model is both accurate and computationally efficient in solving filtering problems for this multiscale gene expression model.
Moreover, when the Gaussian-noise assumption is violated, our filter can still provide reliable estimates to dynamical states regardless of whether the observation noise is white or colored.
}






\subsection{A transcription regulation network}\label{section transcription regulation network}

\begin{figure}[h!]
	\centering
	\includegraphics[width= 0.8 \textwidth]{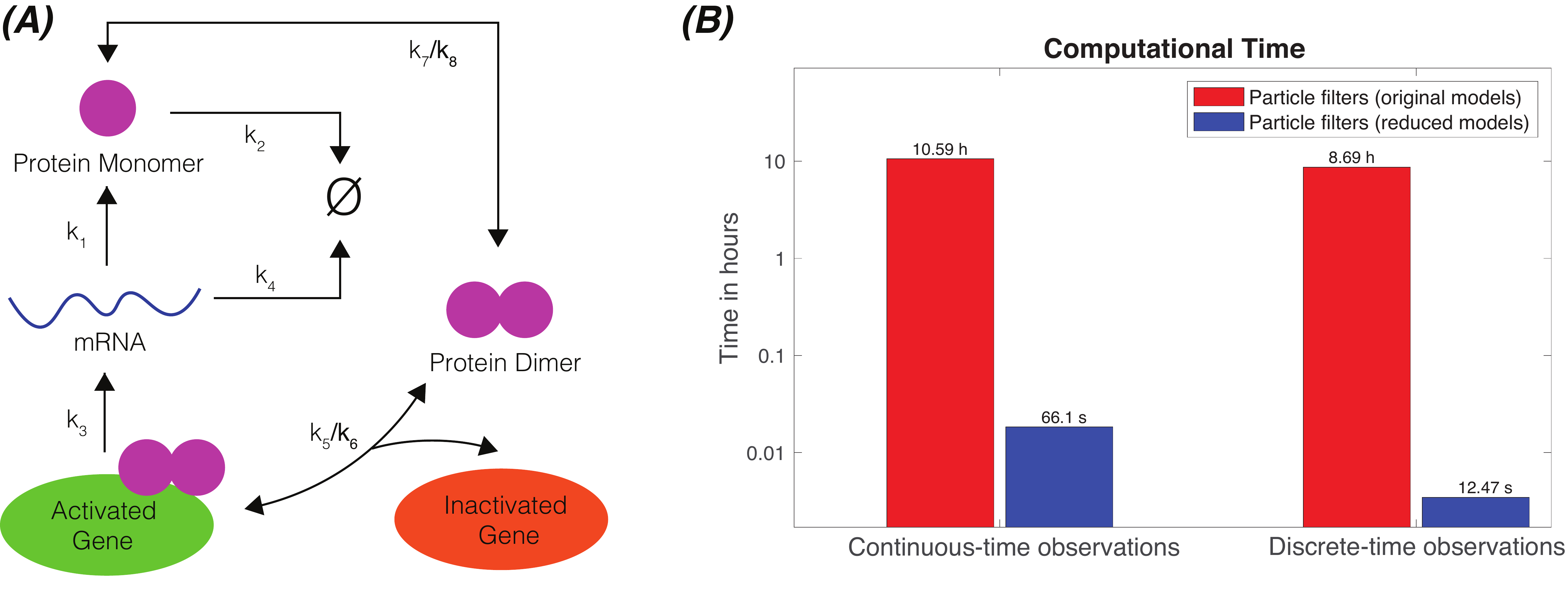}
	\caption{
		(A) shows the biological circuit of the considered network. (B) shows the computational time consumed by different particle filters. Notably, y-axis tick values grow exponentially rather than linearly.
		{The filters were run on a 2.25GHz AMD processors with 15 cores.}
	}
	\label{fig. ex2_cpu_time}
\end{figure}

We then consider a transcription regulation network (called Goutsias' model \cite{goutsias2005quasiequilibrium}) where the gene product affects its own gene activation process (see {\Cref{fig. ex2_cpu_time}} (A)).
{Such an autoregulation mechanism (where the gene product auto-regulates its own gene expression) is very common in living cells, which occurs over 40\% of known \textit{E. coli}'s transcription factors \cite{shen2002network}. Moreover, it has various biological functions, including speeding up the response time of gene expression \cite{rosenfeld2002negative}, inducing genetic oscillators \cite{kruse2005oscillations}, and achieving adaptation to periodic external stimuli \cite{fang2019adaptation}.}
The model considered in this subsection involves five species: the protein monomer (denoted by $S_1$), the dimer (also the transcription factor, denoted by $S_2$), the mRNA ($S_3$), the unbound DNA ($S_4$), and the bound DNA (also the activated state, denoted by $S_5$); moreover, it has eight reactions:
\begin{align*}
S_3 &\ce{->[$k_1$]} S_1+S_3 && \text{translation},
&S_1 &\ce{->[$k_2$]} \emptyset && \text{protein degradation}, \\
S_5 &\ce{->[$k_3$]} S_5 + S_3 && \text{gene transcription},
&S_3	&\ce{ ->[$k_{4}$]}  \emptyset && \text{mRNA degradation},\\
S_2+ S_4 &\ce{->[{$k_5$}]} S_5 && \text{DNA activation},
& S_5 &\ce{ ->[$k_{6}$]} S_2+ S_4 && \text{DNA inactivation},\\
2S_1 &\ce{->[{$k_7$}]} S_2 && \text{dimerization},
& S_2 &\ce{ ->[$k_{8}$]} 2S_1 && \text{dedimerization}.
\end{align*}

\begin{table}[h]
	\centering
	\begin{tabular}{|cr|cr|cr|lr|}
		\hline
		\multicolumn{2}{c}{Reaction constants {(minute$^{-1}$)}} & \multicolumn{2}{c}{Exponents} &  \multicolumn{2}{c}{Scaled rates {(minute$^{-1}$)}} &  \multicolumn{2}{c}{Reaction scale}  \\
		\hline
		$k_1$ & $\mathcal U(4 \times 10^{-1}, 5 \times 10^{-1})$  & $\beta_1$  & $ 0$&   $k'_1$ & $\mathcal U (0.4, 0.5)$ & $\beta_1 + \alpha_3$ & $0$\\
		$k_2$ &  $\mathcal U(6\times 10^{-3}, 9 \times 10^{-3} )$ & $\beta_2$ &  $-1$&  $k'_2$& $\mathcal U (0.6,0.9)$ & $\beta_2 + \alpha_1$ & $0$ \\
		$k_3$ &  $\mathcal U (7\times 10^{-3}, 9\times 10^{-3})$&  $\beta_3$&  $-1$&  $k'_3$& $\mathcal U (0.7, 0.9)$ & $\beta_3+\alpha_5$ & $-1$ \\
		$k_4$ & $\mathcal U (3\times 10^{-3}, 5\times 10^{-3})$ & $\beta_4$ &  $-1$ &  $k'_4$& $\mathcal U (0.3, 0.5)$& $\beta_4+\alpha_3$ & $-1$ \\
		$k_5$ &  $\mathcal U (1\times 10^{-2},3\times 10^{-2})$ &  $\beta_5$&  $-1$&  $k'_5$& $\mathcal U (1,3)$ & $\beta_5+\alpha_2+\alpha_4$& 0\\
		$k_6$ &  $\mathcal U ( 3\times 10^{-1}, 5 \times 10^{-1})$&  $\beta_6$&  $0$&  $k'_6$& $\mathcal U (0.3, 0.5)$& $\beta_6+\alpha_5$ & 0 \\
		$k_7$ &  $\mathcal U ( 6\times 10^{-2}, 9 \times 10^{-2})$&  $\beta_7$&  $-1$&  $k'_7$& $\mathcal U (6, 9)$& $\beta_7+2\alpha_1$ & 1 \\		
		$k_{8}$ &  $\mathcal U ( 4\times 10^{-1}, 6 \times 10^{-1})$&  $\beta_{8}$&  $0$&  $k'_{8}$& $\mathcal U (0.4, 0.6)$& $\beta_8+\alpha_2$ & 1 \\	
		\hline
	\end{tabular}
	\caption{Scaling exponents for reaction rates of the gene expression model: {$\mathcal U$ is the notation for the uniform distribution}}
	\label{table reaction constants of transcription regulation network}
\end{table}

We take $N=100$, $\alpha_1=\alpha_2=1$ and $\alpha_i=0$ for $3\leq i \leq 5$;
in other words, the cellular system consists of hundreds of protein molecules but very few copies of other molecules. 
The values of reaction constants and their scaling exponents are shown in {\Cref{table reaction constants of transcription regulation network}}.
With this setting, we can easily check that $\gamma_1=0$, $\gamma_2=1$, $\mathbb L_1=\text{span}\{e_1,e_2,e_4,e_5\}$,
and $\mathbb L_2=\text{span}\{e_1+2 e_2, e_3, e_4+e_5\}$, where $e_i$ is the unit vector with the $i$-th entry being one and the rest being zero.
For initial conditions, we assume $X^{N}_1(0)$ to have a Poisson distribution with mean 1, $X^{N}_2(0)$ to have a Poisson distribution with mean 10, $X^{N}_3(0)$ to have a Poisson distribution with mean 2, $X^N_{4}(0)$ to have a binary distribution with mean $1/10$, and $X^{N}_5(0)$ to satisfy $X^{N}_4(0)=1- X^{N}_1(0)$.
Also, we assume that all reaction constants and initial conditions except $X_5(0)$ are independent of each other.

In this case, we are interested in system dynamics at the second time scale.
By \eqref{eq. scaling stochastic dynamics}, the full dynamics of the transcription regulation network at the second time scale can be written as 
\begin{align*}
		X^{N,\gamma_2}_1 (t) =&  X^{N,\gamma_2}_1 (0) +  \frac{1}{N} R_1 \left(  k'_1 N  \int_0^t X^{N,\gamma_2}_3 (s) \dd s \right) - \frac{1}{N} R_2 \left(  k'_2 N  \int_0^t X^{N,\gamma_2}_1 (s) \dd s \right)
		- \frac{1}{N} R_5\left( k'_5 N \int_0^t X^{N,\gamma_2}_2 (s) X^{N,\gamma_2}_4(s) \dd s\right) \\
		&+ \frac{1}{N} R_6\left( k'_6 N \int_0^t X^{N,\gamma_2}_5 (s) \dd s\right) 
		 -  \frac{2}{N} R_7 \left(  k'_7 N^2 \int_0^t X^{N,\gamma_2}_1 (s) \left(X^{N,\gamma_2}_1 (s)-\frac{1}{N}\right)\dd s \right) 
		  +  \frac{2}{N} R_8 \left(  k'_8 N^2  \int_0^t X^{N,\gamma_2}_2 (s) \dd s \right)  \\
		X^{N,\gamma_2}_2 (t) =& X^{N,\gamma_2}_2 (0) + \frac{1}{N} R_7 \left(  k'_7 N^2  \int_0^t X^{N,\gamma_2}_1 (s) \left(X^{N,\gamma_2}_1 (s)-\frac{1}{N}\right) \dd s \right) 
		-  \frac{1}{N} R_8 \left(  k'_8 N^2  \int_0^t X^{N,\gamma_2}_2 (s) \dd s \right) \\
		X^{N,\gamma_2}_3 (t) =& X^{N,\gamma_2}_3 (0) + R_3 \left(k'_3 \int_0^t X^{N,\gamma_2}_5 (s) \dd s \right)- R_4 \left(k'_4 \int_0^t X^{N,\gamma_2}_3 (s) \dd s \right)\\
		X^{N,\gamma_2}_4 (t) =& X^{N,\gamma_2}_4 (0) - R_5\left( k'_5 N \int_0^t X^{N,\gamma_2}_2 (s) X^{N,\gamma_2}_4(s) \dd s\right)
		+ R_6\left( k'_6 N \int_0^t X^{N,\gamma_2}_5 (s) \dd s\right)
		\\
		X^{N,\gamma_2}_5 (t) =& X^{N,\gamma_2}_5 (0)+ R_5\left( k'_5 N \int_0^t X^{N,\gamma_2}_2 (s) X^{N,\gamma_2}_4(s) \dd s\right)
		- R_6\left( k'_6 N \int_0^t X^{N,\gamma_2}_5 (s) \dd s\right).
	\end{align*}
Then, we derive the hybrid model at the second time scale.
By \eqref{eq. reduced model at the first time scale}, the reduced model at the first time scale satisfies
	\begin{align*}
		X^{\gamma_1}_1(t) &=  \lim_{N\to\infty }X^{N}_1(0) 
		-  2  k'_7  \int_0^t \left(X^{\gamma_1}_1 (s)\right)^2 \dd s 
		+  2 k'_8  \int_0^t X^{\gamma_1}_2 (s) \dd s  \\
		X^{\gamma_1}_2(t) &=  \lim_{N\to\infty }X^{N}_2(0)  +   k'_7  \int_0^t \left(X^{\gamma_1}_1 (s)\right)^2 \dd s 
		-  k'_8  \int_0^t X^{\gamma_1}_2 (s) \dd s  \\
		X^{\gamma_1}_3(t) &=  \lim_{N\to\infty }X^{N}_3(0) \\
		X^{\gamma_1}_4(t) &=  \lim_{N\to\infty }X^{N}_4(0) - R_5\left( k'_5 \int_0^t X^{\gamma_1}_2 (s) X^{\gamma_1}_4(s) \dd s\right)
		+ R_6\left( k'_6 \int_0^t X^{\gamma_1}_5 (s) \dd s\right) \\
		X^{\gamma_1}_5(t) &=  \lim_{N\to\infty }X^{N}_5(0) + R_5\left( k'_5 \int_0^t X^{\gamma_1}_2 (s) X^{\gamma_1}_4(s) \dd s\right)
		- R_6\left( k'_6 \int_0^t X^{\gamma_1}_5 (s) \dd s\right).
	\end{align*}
Also, it is easy to check that the operator $\mathcal L^{N,\gamma_1}_{K, y}$ for $y\in \mathbb L_2$ admits a unique stationary distribution 
\begin{align} \label{eq. quasi-stationary distrbution}
	\bar V_1^{\mathcal K, y}(x_1)=  
	\left\{
	\begin{array}{cc}
		C^{q}_{2y_4} \left(\frac{k'_6}{k'_5\psi(y)+k'_6}\right)^{q+y_4} \left(\frac{k'_5\psi(y)}{k'_5\psi(y)+k'_6}\right)^{y_4-q}  & x_1=( 9y_1-2\psi(y), \psi(y)-y_2 ,0, q, - q) \\
		0 & \text{otherwise}
	\end{array}
	\right.
\end{align}
where $x_1\in \{x| x=(I-\Pi_2) \tilde x, ~ \Pi_2 \tilde x=y,~\tilde x\in\mathbb R  \}$, $-y_4 \leq q \leq y_4$, and 
\begin{equation*}
	\psi(y)=\frac{y_1+2y_2}{2}+\frac{k'_8-\sqrt{(k'_8)^2+8k'_7k'_8(y_1+2y_2)}}{8k'_7}.
\end{equation*}
Therefore, by \eqref{eq. reduced models at the second time scale}, the reduced model at the second time scale satisfies
	\begin{align*}
		X^{\gamma_2}_1(t) =& \lim_{N\to\infty }\frac{X^{N}_1(0) + 2X^{N}_2(0) }{5}+ \frac{k'_1}{5}  \int_0^t X^{\gamma_2}_3 (s) \dd s  - \frac{k'_2}{5}  \int_0^t 5X^{\gamma_2}_1 (s) - 2\psi\left(X^{\gamma_2}(s)\right)  \dd s  \\
		X^{\gamma_2}_2(t) = &\lim_{N\to\infty }\frac{2X^{N}_1(0) + 4X^{N}_2(0) }{5}+ \frac{2k'_1}{5}  \int_0^t X^{\gamma_2}_3 (s) \dd s  - \frac{2k'_2}{5}  \int_0^t 5X^{\gamma_2}_1 (s) - 2\psi\left(X^{\gamma_2}(s)\right)  \dd s  \\
		X^{\gamma_2}_3(t) =& \lim_{N\to\infty }X^{N}_3(0) + R_3 \left(k'_3 \int_0^t \frac{2k'_5\psi(X^{\gamma_2}(s)) X^{\gamma_2}_4(s) }{k'_5\psi(X^{\gamma_2}(s))+k'_6} \dd s \right)- R_4 \left(k'_4 \int_0^t X^{\gamma_2}_3 (s) \dd s \right) \\
		X^{\gamma_2}_4(t) =&\lim_{N\to\infty }\frac{X^{N}_4(0) + X^{N}_5(0) }{2} \\
		X^{\gamma_2}_5(t) =& \lim_{N\to\infty }\frac{X^{N}_4(0) + X^{N}_5(0) }{2}. 
	\end{align*}

Let $S_1$ and $S_2$ be both fluorescent reports, and light intensity signals satisfy  \eqref{eq. continuous time observation} or \eqref{eq. discrete time observation}
where $h(x)=  \left(x_1+2x_2\right) \wedge 10^{3}$ with $10^3$ being the measurement range, $B(t)$ is a Brownian motion, $t_i=2i$ (i.e., discrete-time observations come every 200 seconds or 3.33 minutes), and $\{W(t_i)\}_{i\in\mathbb N_{>0}}$ is a sequence of mutually independent standard Gaussian random variables.

In this example, we also randomly chose a set of system parameters, simulate the system for 2.5 hours (9000 seconds), and generate observations for both continuous-time and discrete-time scenarios.
Then, we use SIR particle filters that apply the reduced model at the second time scale to infer non-fast fluctuating dynamic states and reaction constants $\mathcal K$.
Throughout this numerical example, we set the population of particles to be {10,000}.
Finally, we take the particle filter that applies the full dynamic model as a benchmark. 

Numerical simulation results are presented in {\Cref{fig. ex2_cpu_time}, \Cref{fig ex2_continuous}, and \Cref{fig ex2_discrete}}.
From {\Cref{fig. ex2_cpu_time}}, we can observe that particle filters applying the reduced dynamics consumes far less computational time than particle filters applying the full dynamic model.
Meanwhile, we can learn from {\Cref{fig ex2_continuous} and \Cref{fig ex2_discrete}} that both particle filters can follow the trend of true dynamic states, and their trajectories almost merge together {(whose relative $L_1$ distances are presented in \Cref{table ex2 L2 distance})}.
{These observations imply that these filters perform quite the same in estimating hidden dynamic states.} 
\begin{table}[h]
	\centering
	\begin{tabular}{|c|cc|cc|cc|}
		\hline
		\multirow{2}{*}{}            & \multicolumn{2}{c|}{Activated DNA}                                                                 & \multicolumn{2}{c|}{mRNA}                                                                & \multicolumn{2}{c|}{Total mass of proteins}                                                             \\ \cline{2-7} 
		& \multicolumn{1}{c|}{Mean} & \begin{tabular}[c]{@{}c@{}}Standard\\ deviation\end{tabular} & \multicolumn{1}{c|}{Mean} & \begin{tabular}[c]{@{}c@{}}Standard\\ deviation\end{tabular} & \multicolumn{1}{c|}{Mean} & \begin{tabular}[c]{@{}c@{}}Standard\\ deviation\end{tabular} \\ \hline
		Continuous-time observations & \multicolumn{1}{c|}{0.20\%}     &               6.14\%                                               & \multicolumn{1}{c|}{1.55\%}     &           1.19\%                                                   & \multicolumn{1}{c|}{0.06\%}     &                    1.42\%                                           \\ \hline
		Discrete-time observations   & \multicolumn{1}{l|}{0.30\%}     &                   8.32\%                                           & \multicolumn{1}{l|}{2.44\%}     &                      2.01\%                                        & \multicolumn{1}{l|}{0.80\%}     &        1.39\%                                                      \\ \hline
	\end{tabular}
	\caption{{Relative $L_1$ distance between both kinds of particle filters in estimating the transcription regulation network. The relative $L_1$ distance between two signals $f_1(t)$ and $f_2(t)$ is defined by $\frac{\|f_1-f_2\|_{L_1}}{\|f_1\|_{L_1}}$.  The results show that these two kinds of particle filters provide very consistent estimate to the dynamical states with relative errors no greater than 9\%.} }
	\label{table ex2 L2 distance}
\end{table}
{Moreover, from \Cref{fig ex2_continuous}(B), and \Cref{fig ex2_discrete}(B), we can observe that both filters also provide consistent and accurate estimates to model parameters.
However, sample degeneracy still occurs in this case, which makes the estimated posterior of model parameters not very consistent (see \Cref{fig ex2_particle_distribution}).  
}

{Finally, we investigate the performance of our filters when the observation noise is non-Gaussian. Similar to the previous case study, we restricted ourselves to the discrete-time observation case and replaced the Gaussian noise with three other types of noise (shown in the first column of  \Cref{table ex2 other noise}).
In these numerical experiments, we compared our filter that uses the reduced model and updates weights by the Gaussian likelihood with a benchmark filter, which uses the full dynamical model and updates weight by the exact distribution of observation noise. 
Their relative $L_1$ distances are presented in \Cref{table ex2 other noise}.
From the result, we observe that in this case, our method still provides a relatively accurate estimate to the DNA state and mRNAs with relative  $L_1$ errors less than  16\%.
The non-Gaussian noise mainly affects the estimation of the conditional standard deviation of  fluorescent proteins; however, our filter still provides accurate estimates to the conditional mean of proteins with relative errors no greater than 4\%. 
To conclude, when the Gaussian noise assumption does not hold,  our filter is still reliable for this transcription regulation network, and this result holds for both white noise (the t-distribution and Laplace distribution) and colored noise (the log-normal distribution).}

In summary, our particle filter applying the reduced model is both accurate and computationally efficient in solving filtering problems for this transcription regulation model.
{Moreover, when the Gaussian-noise assumption does not hold, our filter can still provide reliable estimates to dynamical states no matter whether the observation noise is white or colored.
}

\begin{figure}[h!]
	\centering
	\includegraphics[width= 0.9 \textwidth]{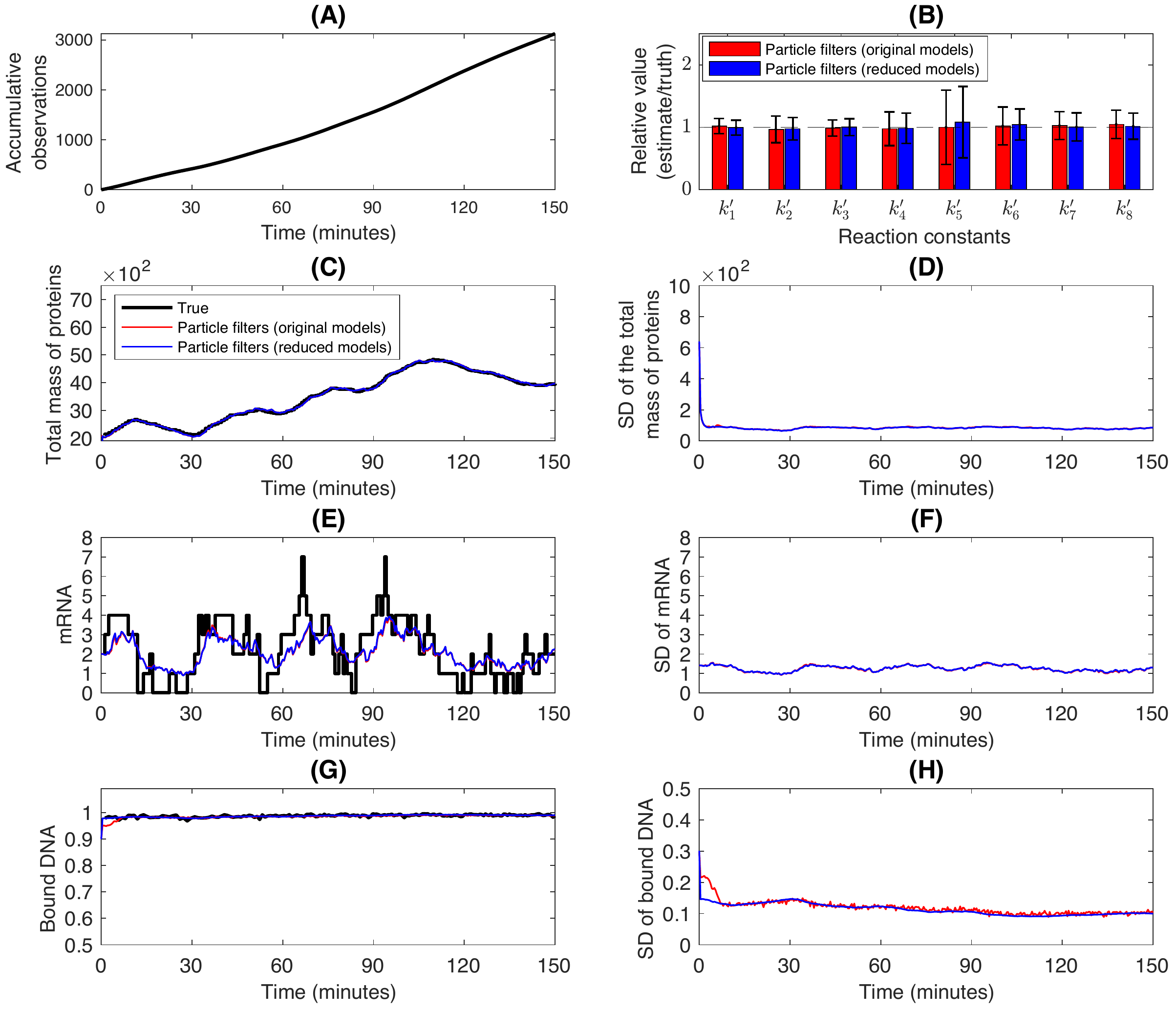}
	\caption{Simulation results for the transcription regulation network with continuous-time observations: {Panel (A) shows the accumulative observation signal $Y^{N,\gamma}_{c}(t)$, which is the integral of the time-course measurement from time 0 to time $t$. Panel (B) shows the performance of both filters in estimating reaction constants, where error bars represent 95\% confidence intervals. Here we draw the relative values of these estimates $\left(\frac{\text{the estimate}}{\text{the true parameter}}\right)$ so that all the true parameters are rescaled to 1.}
	The rest of the panels compare performances of different particle filters in estimating dynamic states, where black lines are the true values of the underlying system (the panel (G) shows 1 minute moving average of the true value), red lines are the estimates by the particle filter using the full dynamic model, and the blue lines are the estimates by the particle filter using the reduced dynamic model.
	Specifically, (C), (E), and (G) show the estimates of the conditional means of the total mass of proteins ($S_1+2S_2$), mRNAs, and the activated Genes, respectively; 
	(D), (F), and (H) present the estimates of the conditional standard deviations of these quantities.
	}
	\label{fig ex2_continuous}
\end{figure}

\begin{figure}[h!]
	\centering
	\includegraphics[width= 0.9  \textwidth]{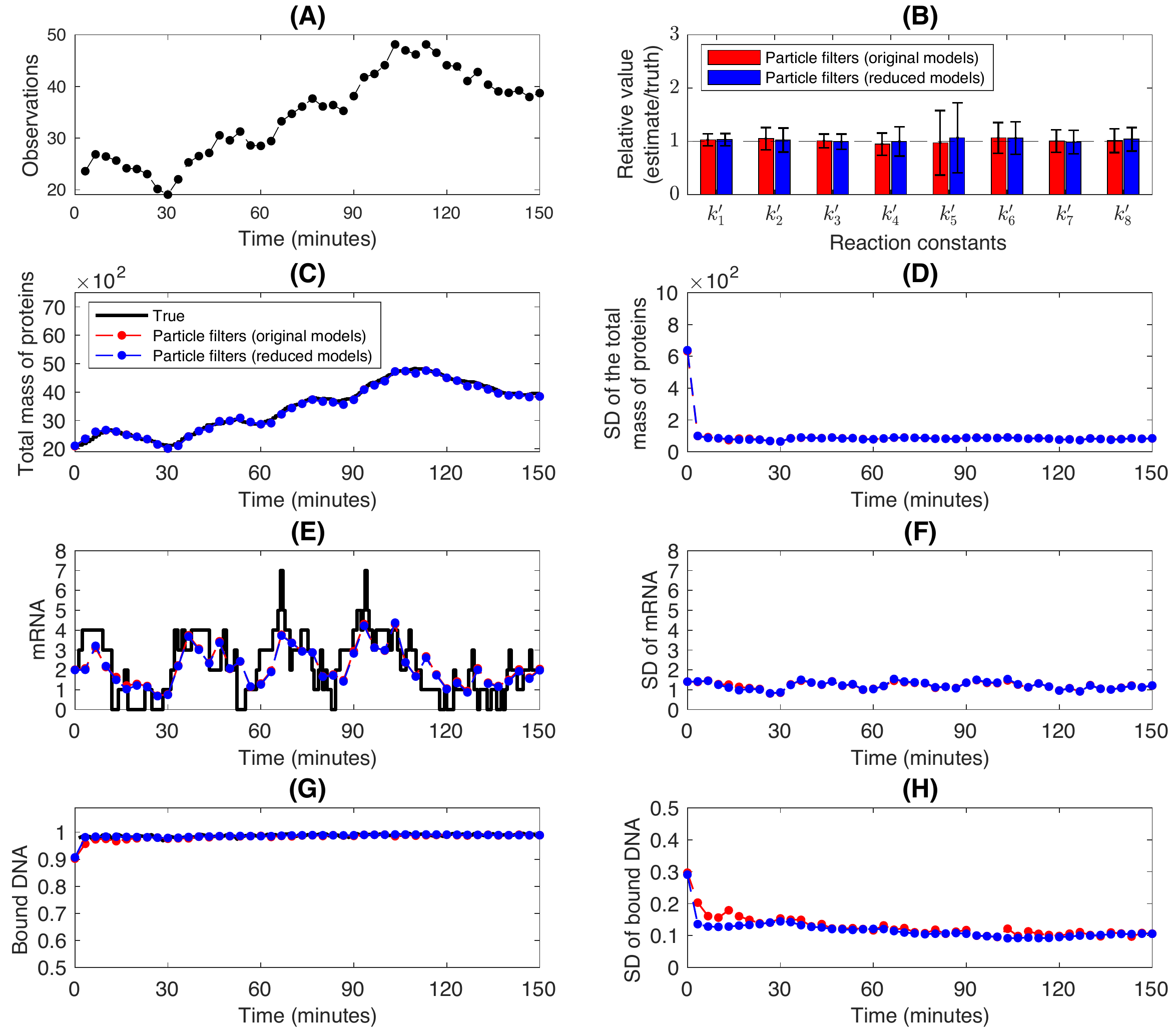}
	\caption{Simulation results for the transcription regulation network with discrete-time observations. The meaning of each panel is the same as the corresponding one in {\Cref{fig ex2_continuous}, except that panel (A) draws the raw data of the observation instead of the accumulative observation signal.}
	}
	\label{fig ex2_discrete}
\end{figure}

\begin{figure}
	\centering
\includegraphics[width= 0.70 \textwidth]{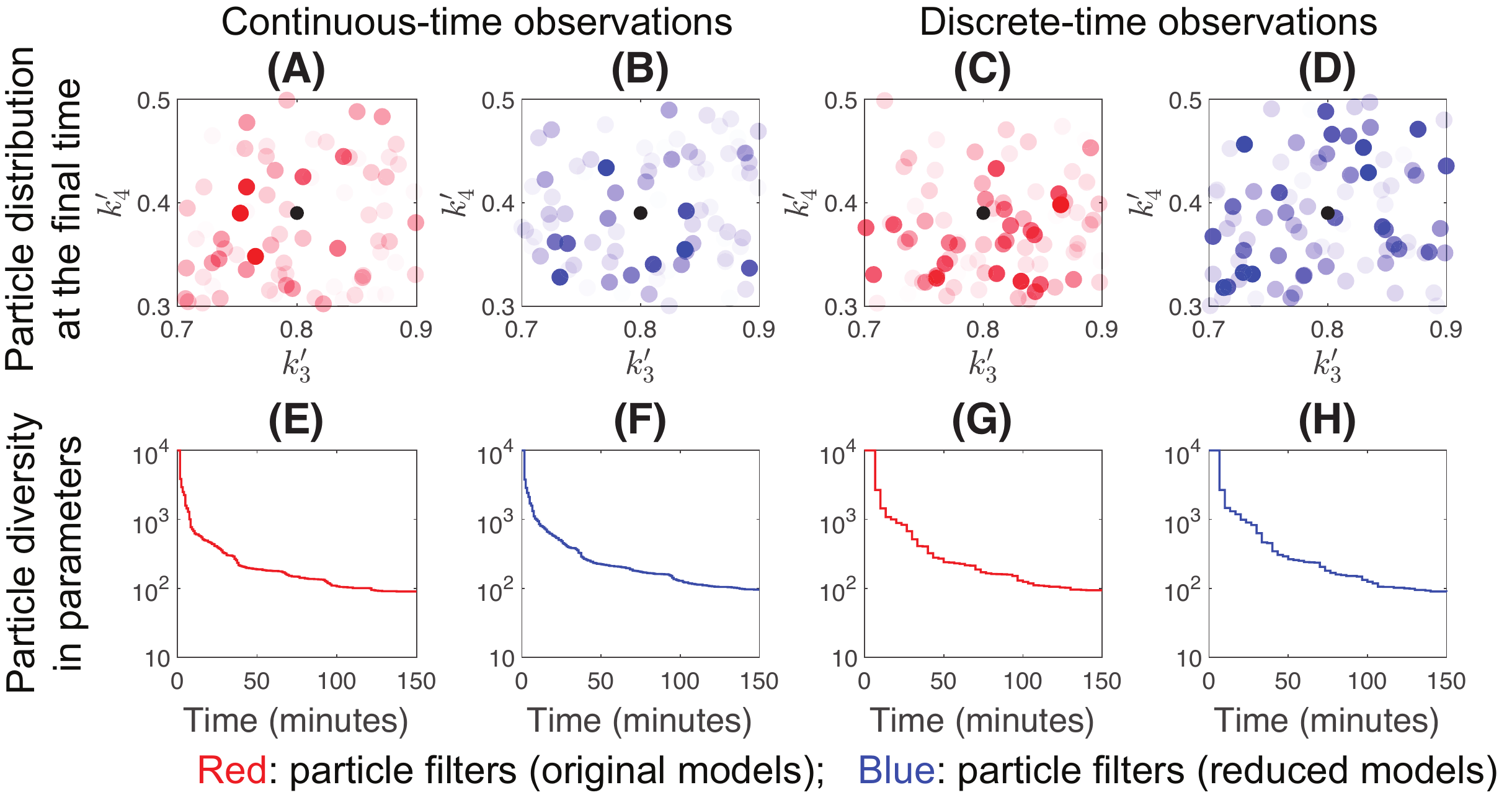}
\caption{
	{Particle distributions in the transcription regulation network example. 
		The first two columns show the results for the continuous-time observation case, and the last two columns show the results for the discrete-time observation case. 
		The first row shows particle distributions in $k'_3$--$k'_4$ plane at the final time, where the black dot shows the true value, and the colored dots show the particles from the filters.
		The transparency of a colored dot indicates the number of particles located in this specific site; the more transparent it is, the fewer particles are located. 
		The second row shows the time evolution of the particle diversity in parameters (i.e., the number of distinguishable particles for estimating model parameters). 
		This figure tells that for both observation types and both particle filters, the particle diversity in parameters decays dramatically over time. Finally, about a hundred distinguishable particles are left, and their distributions are not very consistent. For instance, (C)  has very few particles in the far north, whereas (D) has many particles in that region.}
}
\label{fig ex2_particle_distribution}
\end{figure}

\begin{table}[h]
	\centering
	\begin{tabular}{|l|cc|cc|cc|}
		\hline
		\multirow{2}{*}{}            & \multicolumn{2}{c|}{Bound DNA}                                                                 & \multicolumn{2}{c|}{mRNA}                                                                & \multicolumn{2}{c|}{Total mass of proteins}                                                             \\ \cline{2-7} 
		& \multicolumn{1}{c|}{Mean} & \begin{tabular}[c]{@{}c@{}}Standard\\ deviation\end{tabular} & \multicolumn{1}{c|}{Mean} & \begin{tabular}[c]{@{}c@{}}Standard\\ deviation\end{tabular} & \multicolumn{1}{c|}{Mean} & \begin{tabular}[c]{@{}c@{}}Standard\\ deviation\end{tabular} \\ \hline
		T distribution (with 6 degrees of freedom) & \multicolumn{1}{c|}{0.56\%}     &             16.74\%                                               & \multicolumn{1}{c|}{15.01\%}     &           13.02\%                                                   & \multicolumn{1}{c|}{0.69\%}     &                   20.18\%                                           \\ \hline
		Laplace distribution ($\text{Laplace}(0,1)$) & \multicolumn{1}{c|}{0.41\%}     &               10.93\%                                               & \multicolumn{1}{c|}{10.17\%}     &          5.52\%                                                   & \multicolumn{1}{c|}{0.70\%}     &                   21.17\%                                           \\ \hline
		Log-normal distribution ($\text{Lognormal}(0,0.25)$) & \multicolumn{1}{c|}{0.27\%}     &               9.01\%                                               & \multicolumn{1}{c|}{13.05\%}     &           10.57\%                                                   & \multicolumn{1}{c|}{3.44\%}     &                   47.93\%                                           \\ \hline
	\end{tabular}
	\caption{
		{Performance of our filter when Gaussian-noise assumption is violated (for the transcription regulation model).			
			This table shows the relative $L_1$ distance between our filter and the benchmark, when the observation noise has non-Gaussian distributions shown in the first column. 
			In these experiments, our filter utilizes the reduced model and updates weights as if the noise is Gaussian; in contrast, the benchmark filter utilizes the original model and updates weights according to the exact distribution of the observation noise.}}
	\label{table ex2 other noise}
\end{table}










\section{Conclusion}\label{Sec conclusion}
In this paper, we established efficient particle filters to solve filtering problems for multiscale stochastic reaction network systems by using the time-scale separation technique.
We first showed that the solution of the filtering problem for the original system can be accurately approximated by the solution of the filtering problem for a reduced model that represents the dynamics by a hybrid approximation (see \cref{thm convergence of artificial filters}). 
The reduced model is based on exploiting the time-scale separation in the original network and can greatly reduce the required computational effort to simulate the dynamics.
Consequently, these results enabled us to develop efficient particle filters to solve the original filtering problem by applying the particle filter to the reduced model (see \cref{thm main results}).
Finally, we used two numerical examples to illustrate our approach.
Both examples show that the constructed filter can accurately and computationally efficiently solve the corresponding filtering problems and, therefore, imply that our approach can improve scientists' ability to extract dynamic information about intracellular systems from time-course experiments.

There are a few topics deserving further investigation in future work. 
First, it is worthwhile to further improve the performance of particle filters in estimating system parameters.
{From numerical experiments, we observed that the SIR particle filter is less accurate in estimating the posterior of system parameters due to sample degeneracy.}
A standard method to mitigate this problem is to use regularized particle filters instead of SIR ones \cite{liu2001combined}.
Usually, to guarantee the performance of a regularized particle filter, one needs to show some regularity conditions of the transition kernel of the dynamic model (see \cite[Lemma 2.38]{del2000branching}), which requires researchers to do more elaborate analyses of the underlying system.
{As our first attempt to solve this problem, we provided in a follow-up paper \cite{fang2021convergence} several regularity conditions for SRNs using parameter sensitivity analysis and proved the convergence of regularized particle filters under these conditions.}
Another direction for future research is to utilize adaptive model reduction algorithms (e.g., \cite{hepp2015adaptive}) to speed up the particle filters further. 
It will be very helpful to systems where magnitudes of concentration levels vary over time, such as toggle switches \cite{gardner2000construction} and repressilator \cite{elowitz2000synthetic}. 
Finally, one can also implement our algorithms on a Cyberloop platform \cite{rullan2018optogenetic} and apply them to biological studies, e.g., identifying intracellular reaction mechanisms and guiding cell differentiation.
In this scenario, one need to deal with hundreds of cellular systems under the microscope and, therefore, solve hundreds of filtering problems  simultaneously, which requires the algorithm to be parallelized efficiently.

\appendix

\section{Change of measure methods for reduced models}{\label{Sec change of measure methods}}

In this section, we introduce reference probabilities and Kallianpur-Striebel formulas for the reduced models, which will be helpful in later analysis.

\subsection{Continuous-time observations}
We first consider the scenario where observations are continuous with respect to time. 
For reduced models \eqref{eq. reduced model at the first time scale} and \eqref{eq. reduced models at the second time scale}, we define Girsanov's random variables, $Z^{\gamma_\ell}_{c}(t)$ ($\ell=1,2$), like the following
\begin{align*}
	&Z^{\gamma_\ell}_c(t) \triangleq \exp\left(
	  \int_0^t h^{\top}(X^{\gamma_\ell}(s)) \dd Y^{\gamma_\ell}_c(s) - \frac{1}{2} \int_0^t \| h(X^{\gamma_\ell}(s))  \|^2 \dd s
	\right) 
\end{align*}
whose reciprocals are martingales with respect to $\mathcal F_t$ under $\mathbb P$ \cite[Lemma 3.9]{bain2008fundamentals}.
Then, reference probabilities for reduced models \eqref{eq. reduced model at the first time scale} and \eqref{eq. reduced models at the second time scale} are respectively defined by
\begin{equation*}
	\left.\frac{\dd \mathbb P^{\gamma_1}_c}{\dd \mathbb P}\right|_{\mathcal F_t}
	\triangleq \left(Z^{\gamma_1}_c (t)\right)^{-1}
	\qquad \text{and} \qquad
	\left.\frac{\dd \mathbb P^{\gamma_2}_c}{\dd \mathbb P}\right|_{\mathcal F_t}
	\triangleq \left(Z^{\gamma_2}_c (t)\right)^{-1}.
\end{equation*}
Under the reference probability $\mathcal P^{\gamma_\ell}_c$ ($\ell=1,2$), the observation $Y^{\gamma_\ell}_c(\cdot)$ is independent of the underlying system $(\mathcal K, X^{\gamma_\ell})$ and becomes an m-vector of independent standard Brownian motions \cite[Proposition 3.13]{bain2008fundamentals}.
For these reduced models, Kallianpur-Striebel formulas are expressed as \cite[Proposition 3.16]{bain2008fundamentals} (also see \cite{kallianpur1968estimation})
\begin{align}
	\pi^{\gamma_\ell}_{c,t}= 
	{\mathbb E_{\mathbb P^{\gamma_\ell}_c} \left[ Z^{\gamma_\ell}_c(t) \phi (\mathcal K, X^{\gamma_\ell}(t)) \left| \mathcal Y^{\gamma_\ell}_{c,t} \right. \right]}
	~\Bigg/~
	{\mathbb E_{\mathbb P^{\gamma_\ell}_c} \left[ Z^{\gamma_\ell}_c(t)  \left| \mathcal Y^{\gamma_\ell}_{c,t} \right. \right]} 
	&&
	\mathbb P\text{-a.s} ~\text{and}~\mathbb P^{\gamma_1}_c\text{-a.s},
	\label{eq. Kallianpur-stribel formula for continuous time observations and reduced models}
\end{align}
for $\ell=1,2$ and any bounded measurable function $\phi$.

Note that the terms on the right hand sides of \eqref{eq. Kallianpur-Striebel formula for the full model continuous observation} and \eqref{eq. Kallianpur-stribel formula for continuous time observations and reduced models} are respectively $\mathcal Y^{N,\gamma}_{c,t}$, and $\mathcal Y^{\gamma_\ell}_{c,t}$ measurable.
Therefore, for any bounded measurable function $\phi$, there exist measurable functions $\hat g^{N,\gamma}_{\phi, c,t}(\cdot)$, $\hat g^{\gamma_\ell}_{\phi, c,t}(\cdot)$ satisfying
\begin{align}
    \hat g^{N,\gamma}_{\phi, c,t}(Y^{N,\gamma}_{c,0:t}) &=
    \mathbb E_{\mathbb P^{N,\gamma}_{c}} \left[ Z^{N,\gamma}_{c}(t) \phi\left(\mathcal K, X^{N,\gamma}(t)\right) \left| \mathcal Y^{N,\gamma}_{c,t} \right.\right], 
    \label{eq. definition of g full model continuous time } 	\quad \forall \gamma\in\mathbb R
    \\
    \hat g^{\gamma_\ell}_{\phi, c,t}(Y^{\gamma_\ell}_{c,0:t})
    &=
	\mathbb E_{\mathbb P^{\gamma_\ell}_c} \left[ Z^{\gamma_\ell}_c(t) \phi (\mathcal K, X^{\gamma_\ell}(t)) \left| \mathcal Y^{\gamma_\ell}_{c,t} \right. \right], 	
	\quad \ell=1,2,
	\label{eq. definition of g reduced mode continuous time}
\end{align}
where $Y^{N,\gamma}_{c,0:t}$ is the trajectory of $Y^{N,\gamma}_{c}(\cdot)$ from time 0 to $t$, and  $Y^{\gamma_\ell}_{c,0:t}$ ($\ell=1,2$) is the trajectory of $Y^{\gamma_\ell}_{c}(\cdot)$ from time 0 to $t$.
Moreover, by \eqref{eq. Kallianpur-Striebel formula for the full model continuous observation} and \eqref{eq. Kallianpur-stribel formula for continuous time observations and reduced models}, we can conclude that 
for all bounded measurable function $\phi$, there hold
\begin{align}
	\hat f^{N,\gamma}_{\phi, c,t}(Y^{N,\gamma}_{c,0:t}) 
	&=
	\frac{\hat g^{N,\gamma}_{\phi, c,t}(Y^{N,\gamma}_{c,0:t}) }{\hat g^{N,\gamma}_{1, c,t}(Y^{N,\gamma}_{c,0:t})}, 
	\qquad\mathbb P\text{-a.s},~ \mathbb P^{N,\gamma}_c\text{-a.s}
	\qquad \forall \gamma\in\mathbb R,
	\label{eq. f function Kallianpur-Striebel formula full model continuous time} \\
	\hat f^{\gamma_\ell}_{\phi, c,t}(Y^{\gamma_\ell}_{c,0:t})
	&=
	\frac{\hat g^{\gamma_\ell}_{\phi, c,t}(Y^{\gamma_\ell}_{c,0:t})}{\hat g^{\gamma_\ell}_{1, c,t}(Y^{\gamma_\ell}_{c,0:t})},
	\qquad\mathbb P\text{-a.s},~ \mathbb P^{\gamma_{\ell}}\text{-a.s}
	\qquad ~~ \forall \ell\in\{1,2\}.
	\label{eq. f function Kallianpur-Striebel formula reduced model continuous time}
\end{align}

\subsection{Discrete-time observations}
Then, we consider the case where the observations are discrete with respect to time.
For reduced models \eqref{eq. reduced model at the first time scale} and \eqref{eq. reduced models at the second time scale}, we define random variables
\begin{equation*}
	Z^{\gamma_1}_d(t_i)= \prod_{j=1}^{i} H\left(X^{\gamma_1}(t_j),Y^{\gamma_1}(t_j)\right) \quad \text{~and~} \quad Z^{\gamma_2}_d(t_i)= \prod_{j=1}^{i} H\left(X^{\gamma_2}(t_j),Y^{\gamma_2}(t_j)\right) 
\end{equation*}
where $H(x,y)= \exp\left(h^{\top}(x) y-\frac{1}{2}\|h(x)\|^2\right)$, and whose reciprocals are martingales under $\mathbb P$.\footnote{The martingale property can be easily checked by looking at their characteristic functions.}
Hence, the reference probabilities for reduced models \eqref{eq. reduced model at the first time scale} and \eqref{eq. reduced models at the second time scale} can be respectively defined by
\begin{equation*}
	\left.\frac{\dd \mathbb P^{\gamma_1}_d}{\dd \mathbb P}\right|_{\mathcal F_{t_i}}
	\triangleq \left(Z^{\gamma_1}_d (t_i)\right)^{-1}
	\qquad \text{and} \qquad
	\left.\frac{\dd \mathbb P^{\gamma_2}_d}{\dd \mathbb P}\right|_{\mathcal F_{t_i}}
	\triangleq \left(Z^{\gamma_2}_d (t_i)\right)^{-1}.
\end{equation*}
Under the reference probability $\mathbb P^{\gamma_\ell}_d$ ($\ell=1,2$), the observations $Y^{\gamma_\ell}_d(t_i)$ are independent of the underlying system $(\mathcal K, X^{\gamma_\ell}(\cdot))$ and are mutually independent m-variate Gaussian random variables whose coefficient matrices are the identity matrix.\footnote{It can be easily checked by looking at the joint characteristic function of the observation and the underlying system.}
Then, for these reduced models, Kallianpur-Striebel formulas are expressed as 
\begin{align}
	\pi^{\gamma_\ell}_{d,t_i}(\phi) &=
	{{\mathbb E}_{\mathbb P^{\gamma_\ell}_d} \left[\left. Z^{\gamma_\ell}_d(t_i) \phi\left(\mathcal K, X^{\gamma_\ell}(t_i)\right) \right|  \mathcal Y^{\gamma_\ell}_{d,t_i}  \right]}
	~\Bigg/~
	{ {\mathbb E} _{\mathbb P^{\gamma_\ell}_d} \left[ \left. Z^{\gamma_\ell}_d(t_i) \right|  \mathcal Y^{\gamma_\ell}_{d,t_i}  \right]} \label{eq. Kallianpur-Striebel formula discrete time observations and reduced model}
	&&
	\mathbb P\text{-a.s.}~\text{and}~\mathbb P^{\gamma_\ell}_d\text{-a.s.} 
\end{align}
for $\ell=1,2$ and any bounded measurable function $\phi$.

Note that terms on the right hand sides of \eqref{eq. Kallianpur-Striebel formula for the full model discrete observation} and \eqref{eq. Kallianpur-Striebel formula discrete time observations and reduced model} are respectively $\mathcal Y^{N, \gamma}_{d,t_i}$ and $\mathcal Y^{\gamma_\ell}_{d,t_i} $ measurable. Therefore, for any bounded measurable function $\phi$, there exist measurable functions $\hat g^{N,\gamma}_{\phi, d,t_i}(\cdot)$ and $\hat g^{\gamma_\ell}_{\phi, d,t_i}(\cdot)$ satisfying
\begin{align*}
\hat g^{N,\gamma}_{\phi, d,t_i}(Y^{N,\gamma}_{d,1:i}) &=
\mathbb E_{\mathbb P^{N,\gamma}_{d}} \left[ Z^{N,\gamma}_{d}(t_i) \phi\left(\mathcal K, X^{N,\gamma}(t_i)\right) \left| \mathcal Y^{N,\gamma}_{d,t_i} \right.\right], 
&& \forall\gamma\in\mathbb R,
\\
\hat g^{\gamma_\ell}_{\phi, d,t_i}(Y^{\gamma_\ell}_{d,1:i})
&=
\mathbb E_{\mathbb P^{\gamma_\ell}_d} \left[ Z^{\gamma_\ell}_d(t_i) \phi (\mathcal K, X^{\gamma_\ell}(t_i)) \left| \mathcal Y^{\gamma_\ell}_{d,t_i} \right. \right],  
&& \ell=1,2,
\end{align*}
where $Y^{N,\gamma}_{d,1:i}=\left(Y^{N,\gamma}_{d}(t_1),\dots, Y^{N,\gamma}_d(t_i)\right)$  and  $Y^{\gamma_\ell}_{d,1:i}=\left( Y^{\gamma_\ell}_d (t_1),\dots, Y^{\gamma_\ell}_d (t_i)\right)$ ($\ell=1,2$).
Moreover, by \eqref{eq. Kallianpur-Striebel formula for the full model discrete observation} and \eqref{eq. Kallianpur-Striebel formula discrete time observations and reduced model}, we can conclude that for any bounded measurable function $\phi$, there hold 
\begin{align}
	\hat f^{N,\gamma}_{\phi, d,t_i}(Y^{N,\gamma}_{d,1:i}) 
	&=
	\frac{\hat g^{N,\gamma}_{\phi, d,t_i}(Y^{N,\gamma}_{d,1:i}) }{\hat g^{N,\gamma}_{1, d,t_i}(Y^{N,\gamma}_{d,1:i})}, 
	\qquad \mathbb P\text{-a.s}, ~\mathbb P^{N,\gamma}_d\text{-a.s.}
	\qquad \forall\gamma\in\mathbb R,
	\label{eq. f function Kallianpur-Striebel formula full model discrete time}
	\\
	\hat f^{\gamma_\ell}_{\phi, d,t_i}(Y^{\gamma_\ell}_{d,1:i})
	&=
	\frac{\hat g^{\gamma_\ell}_{\phi, d,t_i}(Y^{\gamma_\ell}_{d,0:i})}{\hat g^{\gamma_\ell}_{1, d,t_i}(Y^{\gamma_\ell}_{d,1:i})},
	\qquad \mathbb P\text{-a.s}, ~\mathbb P^{\gamma_{\ell}}_d\text{-a.s.}
	\qquad~~ \ell=1,2.
	\label{eq. f function Kallianpur-Striebel formula reduced model discrete time}
\end{align}

\section{The proof of \Cref{thm convergence of artificial filters}}{\label{Sec the convergence of theoretical filters}}
In this section, we present the proof of \Cref{thm convergence of artificial filters}. 

\subsection{The proof framework}
To prove the result, we borrow the proof framework proposed in \cite{calzolari2006approximation}, which deals with the convergence of filters via constructing auxiliary probability spaces that have nice properties.
Specifically, the framework establishes a common probability space, on which some random variables are constructed to mimic the system state and observations on the natural probability space.
Then, by showing the convergence of the filters on the common probability space, one can arrive at the convergence of the filters on the natural probability space. 

For continuous observations, the framework requires us to construct random variables $\tilde {\mathcal K}^{N, \gamma_\ell}$, $\tilde {\mathcal K}^{\gamma_\ell}$, $\tilde X^{N,\gamma_\ell}(\cdot)$, $\tilde X^{\gamma_\ell}(\cdot)$, $\tilde Y^{\gamma_\ell}_c(\cdot)$, $\tilde Z^{N,\gamma_\ell}_c(t)$, $\tilde Z^{\gamma_\ell}_c(t)$ on a common probability space $\left(\tilde \Omega^{\gamma_\ell}_t, \tilde{\mathcal F}^{\gamma_\ell}_t, \tilde{ \mathbb Q}^{\gamma_\ell}_t \right)$ for any $t>0$ and $\ell=1,2$ such that
\begin{enumerate}[label=(A.1.\arabic*), itemindent=1em]
	\item $\left( \tilde {\mathcal K}^{N, \gamma_\ell}, \tilde  X^{N,\gamma_\ell}(\cdot), \tilde  Y^{\gamma_\ell}_c(\cdot), \tilde Z^{N,\gamma_\ell}_c(t)\right)\in \mathbb R^{r}_{> 0} \times \mathbb D_{\mathbb R^{n}}[0,t] \times \mathbb D_{\mathbb R^{m}}[0,t] \times \mathbb R$ has the same law as $\left( \mathcal K,  X^{N,\gamma_\ell}(\cdot),  Y^{N,\gamma_\ell}_c(\cdot), Z^{N,\gamma_\ell}_c(t)\right) $ on  $\left( \Omega, {\mathcal F}_{t}, {\mathbb{ P}}^{N,\gamma_\ell}_c \right)$, {\label{eq. assumption 1.1 of the big frame work}}
	\item $\left( \tilde {\mathcal K}^{ \gamma_\ell}, \tilde  X^{\gamma_\ell}(\cdot), \tilde  Y^{\gamma_\ell}_c(\cdot), \tilde Z^{\gamma_\ell}_c(t)\right)\in \mathbb R^{r}_{> 0} \times \mathbb D_{\mathbb R^{n}}[0,t] \times \mathbb D_{\mathbb R^{m}}[0,t] \times \mathbb R$ has the same law as $\left( \mathcal K,  X^{\gamma_\ell}(\cdot),  Y^{\gamma_\ell}_c(\cdot), Z^{\gamma_\ell}_c(t)\right) $ on $\left( \Omega, {\mathcal F}_{t}, {\mathbb{ P}}^{\gamma_\ell}_c \right)$,  {\label{eq. assumption 1.2 of the big frame work}}
	\item $\hat f_{\phi,c,t}^{N,\gamma_\ell}(\tilde Y^{\gamma_\ell}_{c,0:t}) \to  \hat f_{\phi,c,t}^{\gamma_\ell}(\tilde Y^{\gamma_\ell}_{c,0:t})$ in $\tilde{ \mathbb Q}^{\gamma_\ell}_t$-probability where $\tilde Y^{\gamma_\ell}_{c,0:t}$ is the trajectory of $\tilde Y^{\gamma_\ell}_{c}(\cdot)$ from time 0 to t, \label{eq. assumption 1.3 of the big frame work}
	\item $\lim_{N\to \infty}\mathbb E_{\tilde{ \mathbb Q}^{\gamma_\ell}_t}\left[\left|\tilde Z^{N,\gamma_\ell}_c(t)-\tilde Z^{\gamma_\ell}_c(t)\right|\right]=0$. \label{eq. assumption 1.4 of the big frame work}
\end{enumerate}
For discrete-time observations, the framework requires us to construct random variables $\tilde {\mathcal K}^{N, \gamma_\ell}$, $\tilde {\mathcal K}^{\gamma_\ell}$, $\tilde X^{N,\gamma_\ell}(\cdot)$, $\tilde X^{\gamma_\ell}(\cdot)$, $\tilde Y^{\gamma_\ell}_{d}(\cdot)$, $\tilde Z^{N,\gamma_\ell}_d(t_i)$, $\tilde Z^{\gamma_\ell}_d(t_i)$ in a common probability space $\left(\tilde \Omega^{\gamma_\ell}_{t_i}, \tilde{\mathcal F}^{\gamma_\ell}_{t_i}, \tilde{ \mathbb Q}^{\gamma_\ell}_{t_i} \right)$ for any $i\in\mathbb N_{>0}$ and $\ell=1,2$ such that
\begin{enumerate}[label=(A.2.\arabic*), itemindent=1em]
	\item $\left( \tilde {\mathcal K}^{N,\gamma_\ell}, \tilde  X^{N,\gamma_\ell}(\cdot), \tilde  Y^{\gamma_\ell}_{d,1:i}, \tilde Z^{N,\gamma_\ell}_d(t_i)\right)\in \mathbb R^{r}_{\geq 0} \times \mathbb D_{\mathbb R^n}[0,t_i] \times \mathbb R^{m\times i} \times \mathbb R$,
	where $\tilde  Y^{\gamma_\ell}_{d,1:i}=\left(\tilde  Y^{\gamma_\ell}_{d}(t_1),\dots, \tilde  Y^{\gamma_\ell}_{d}(t_i)\right)$,
	 has the same law as the random variables $\left( \mathcal K,  X^{N,\gamma_\ell}(\cdot),  Y^{N,\gamma_\ell}_{d,1:i}, Z^{N,\gamma_\ell}_d(t_i)\right)$ on $\left( \Omega, {\mathcal F}_{t_i}, {\mathbb{ P}}^{N,\gamma_2} \right)$, {\label{eq. assumption 2.1 of the big frame work}}
	\item $\left( \tilde {\mathcal K}^{\gamma_\ell}, \tilde  X^{\gamma_\ell}(\cdot), \tilde  Y^{\gamma_\ell}_{d,1:i}, \tilde Z^{\gamma_\ell}_d(t_i)\right)\in \mathbb R^{r}_{\geq 0} \times \mathbb D_{\mathbb R^n}[0,t_i] \times \mathbb R^{m\times i} \times \mathbb R$ has the same law as $\left( \mathcal K,  X^{\gamma_\ell}(\cdot),  Y^{\gamma_\ell}_{d,1:i}, Z^{\gamma_\ell}_d(t_i)\right)$ on $\left( \Omega, {\mathcal F}_{t_i}, {\mathbb{ P}}^{\gamma_2} \right)$, {\label{eq. assumption 2.2 of the big frame work}}
	\item $\hat f_{\phi,d,t_i}^{N,\gamma_\ell}(\tilde Y^{\gamma_\ell}_{d,1:i}) \to  \hat f_{\phi,d,t_i}^{\gamma_\ell}(\tilde Y^{\gamma_\ell}_{d,1:i})$ in $\tilde{ \mathbb Q}^{\gamma_\ell}_{t_i}$-probability. \label{eq. assumption 2.3 of the big frame work}
	\item $\lim_{N\to \infty}\mathbb E_{\tilde{ \mathbb Q}^{\gamma_\ell}_{t_i}}\left[\left|\tilde Z^{N,\gamma_\ell}_d(t_i)-\tilde Z^{\gamma_\ell}_d(t_i)\right|\right]=0$. \label{eq. assumption 2.4 of the big frame work}
\end{enumerate}
If we succeed in finding the above random variables, then the convergence results in \Cref{thm convergence of artificial filters} are guaranteed by the following theorem.
Consequently, the rest of this section contributes to constructing random variables such that the above conditions are satisfied.
\begin{theorem}[Adapted from \cite{calzolari2006approximation}] \label{theorem framework first time scale}~
	\begin{enumerate}
		\item 	For both $\ell=1,2$, a time point $t>0$, and a particular measurable function $\phi$, if \ref{eq. assumption 1.1 of the big frame work}, \ref{eq. assumption 1.2 of the big frame work}, \ref{eq. assumption 1.3 of the big frame work}, and \ref{eq. assumption 1.4 of the big frame work} are satisfied, then $ f^{N,\gamma_\ell}_{\phi,c,t}\left(Y^{N,\gamma_{\ell}}_{c,0:t}\right)-f^{\gamma_\ell}_{\phi,c,t}\left(Y^{N,\gamma_{\ell}}_{c,0:t}\right)
		\stackrel{\mathbb P}{\to} 0$  as $N$ goes to infinity;
		\item For both $\ell=1,2$, a integer $i>0$, and a particular measurable function $\phi$, if \ref{eq. assumption 2.1 of the big frame work}, \ref{eq. assumption 2.2 of the big frame work}, \ref{eq. assumption 2.3 of the big frame work}, and \ref{eq. assumption 2.4 of the big frame work} are satisfied, then $f^{N,\gamma_\ell}_{\phi,d,t_i}\left(Y^{N,\gamma_{\ell}}_{d,1:i}\right)-f^{\gamma_\ell}_{\phi,d,t_i}\left(Y^{N,\gamma_{\ell}}_{d,1:i}\right) \stackrel{\mathbb P}{\to} 0$  as $N$ goes to infinity;
	\end{enumerate}
\end{theorem}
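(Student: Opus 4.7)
The plan is to reduce the claim to an $L^{1}$ bound on an auxiliary probability space via the Kallianpur–Striebel change of measure, and then exploit the distributional identities in (A.1.1)–(A.1.2) (respectively (A.2.1)–(A.2.2)) together with the convergences assumed in (A.1.3)–(A.1.4) (respectively (A.2.3)–(A.2.4)). I will focus on the continuous-time assertion; the discrete-time one is entirely analogous.

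Set $U_{N}\triangleq \hat f^{N,\gamma_{\ell}}_{\phi,c,t}(Y^{N,\gamma_{\ell}}_{c,0:t})-\hat f^{\gamma_{\ell}}_{\phi,c,t}(Y^{N,\gamma_{\ell}}_{c,0:t})$. Since $U_{N}$ is $\mathcal F_{t}$-measurable and bounded (both filter maps lie in $[-\|\phi\|_{\infty},\|\phi\|_{\infty}]$ as conditional expectations of a bounded function), Markov's inequality reduces the statement to proving $\mathbb E_{\mathbb P}[|U_{N}|]\to 0$. First I would apply the change of measure $\frac{d\mathbb P}{d\mathbb P^{N,\gamma_{\ell}}_{c}}\big|_{\mathcal F_{t}}=Z^{N,\gamma_{\ell}}_{c}(t)$ to rewrite
\begin{equation*}
\mathbb E_{\mathbb P}[|U_{N}|]=\mathbb E_{\mathbb P^{N,\gamma_{\ell}}_{c}}\bigl[Z^{N,\gamma_{\ell}}_{c}(t)\,|U_{N}|\bigr].
\end{equation*}
Next, using (A.1.1) to transport the joint law of $(Y^{N,\gamma_{\ell}}_{c,0:t},Z^{N,\gamma_{\ell}}_{c}(t))$ to the auxiliary space, this expectation equals $\mathbb E_{\tilde{\mathbb Q}^{\gamma_{\ell}}_{t}}\bigl[\tilde Z^{N,\gamma_{\ell}}_{c}(t)\,\bigl|\hat f^{N,\gamma_{\ell}}_{\phi,c,t}(\tilde Y^{\gamma_{\ell}}_{c,0:t})-\hat f^{\gamma_{\ell}}_{\phi,c,t}(\tilde Y^{\gamma_{\ell}}_{c,0:t})\bigr|\bigr]$.

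The key decomposition is to insert $\tilde Z^{\gamma_{\ell}}_{c}(t)$ via the triangle inequality, obtaining the bound
\begin{equation*}
2\|\phi\|_{\infty}\,\mathbb E_{\tilde{\mathbb Q}^{\gamma_{\ell}}_{t}}\bigl[\bigl|\tilde Z^{N,\gamma_{\ell}}_{c}(t)-\tilde Z^{\gamma_{\ell}}_{c}(t)\bigr|\bigr]+\mathbb E_{\tilde{\mathbb Q}^{\gamma_{\ell}}_{t}}\bigl[\tilde Z^{\gamma_{\ell}}_{c}(t)\,\bigl|\hat f^{N,\gamma_{\ell}}_{\phi,c,t}(\tilde Y^{\gamma_{\ell}}_{c,0:t})-\hat f^{\gamma_{\ell}}_{\phi,c,t}(\tilde Y^{\gamma_{\ell}}_{c,0:t})\bigr|\bigr].
\end{equation*}
The first summand vanishes as $N\to\infty$ directly from (A.1.4). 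For the second summand, (A.1.3) gives convergence of the second factor to $0$ in $\tilde{\mathbb Q}^{\gamma_{\ell}}_{t}$-probability, it is dominated by the constant $2\|\phi\|_{\infty}$, and $\tilde Z^{\gamma_{\ell}}_{c}(t)$ is $\tilde{\mathbb Q}^{\gamma_{\ell}}_{t}$-integrable (with mean one, since $(Z^{\gamma_{\ell}}_{c}(t))^{-1}$ is a $\mathbb P$-martingale and (A.1.2) preserves distributions). A standard integrable-domination argument — write $\mathbb E[\tilde Z^{\gamma_{\ell}}_{c}(t)\,A_{N}]\le 2\|\phi\|_{\infty}\mathbb E[\tilde Z^{\gamma_{\ell}}_{c}(t)\mathbf 1_{\{A_{N}>\epsilon\}}]+\epsilon\,\mathbb E[\tilde Z^{\gamma_{\ell}}_{c}(t)]$ and let $N\to\infty$ (using dominated convergence on the first piece) then $\epsilon\downarrow 0$ — concludes that the second summand vanishes.

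The discrete-time statement proceeds identically with $Z^{\cdot}_{d}(t_{i})$, $\hat f^{\cdot}_{\phi,d,t_{i}}$ and $Y^{\cdot}_{d,1:i}$ in place of their continuous-time counterparts, invoking (A.2.1)–(A.2.4) at the corresponding steps. I do not anticipate a substantial obstacle in this proof itself, since it is essentially the abstract scheme of \cite{calzolari2006approximation} specialized to our two filter types; the genuine work lies not here but in the next section, where one must actually construct the auxiliary probability spaces $(\tilde\Omega^{\gamma_{\ell}}_{t},\tilde{\mathcal F}^{\gamma_{\ell}}_{t},\tilde{\mathbb Q}^{\gamma_{\ell}}_{t})$ carrying the couplings required by (A.1.1)–(A.1.4) and (A.2.1)–(A.2.4). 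The only delicate point in the present argument is ensuring the integrability of $\tilde Z^{\gamma_{\ell}}_{c}(t)$ so that dominated convergence applies to the second summand; this is guaranteed precisely by the martingale property of $(Z^{\gamma_{\ell}}_{c})^{-1}$ together with the law-preserving coupling (A.1.2).
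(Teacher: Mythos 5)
Your proposal is correct and follows essentially the same route as the paper: change measure to the reference probability, transport to the auxiliary space via (A.1.1), split off the $|\tilde Z^{N,\gamma_\ell}_c(t)-\tilde Z^{\gamma_\ell}_c(t)|$ term for (A.1.4), and handle the remaining term under the equivalent measure $\tilde{\mathbb P}^{\gamma_\ell}_t$ defined by $\tilde Z^{\gamma_\ell}_c(t)$ using (A.1.3). The only (inessential) difference is that the paper estimates $\mathbb P(\|\epsilon_N\|>\delta)$ directly with indicators, so it never needs $\phi$ bounded, whereas your reduction to an $L^1$ bound uses $\|\phi\|_\infty<\infty$ — harmless here since the theorem is only ever applied to bounded $\phi$.
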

\begin{proof}
	The proofs of these two results are in the same spirit. Therefore, we only show the proof of the first result.
	
	Let us construct a probability measure $\tilde {\mathbb{ P}}^{\gamma_\ell}_t$ on the measurable space $\left(\tilde \Omega^{\gamma_\ell}_t, \tilde{\mathcal F}^{\gamma_\ell}_t\right)$ by $\frac{\dd \tilde {\mathbb{ P}}^{\gamma_\ell}_t}{\dd \tilde {\mathbb{ Q}}^{\gamma_\ell}_t}=\tilde Z^{\gamma_\ell}_c(t)$.
	Note that $Z^{\gamma_\ell}_c(t)$ is $\mathbb P^{\gamma_\ell}_c$-almost surely positive, so $\tilde Z^{\gamma_\ell}_c(t)$ is $\tilde {\mathbb Q}^{\gamma_\ell}_t$-almost surely positive by \ref{eq. assumption 1.2 of the big frame work}, which implies the measures $\tilde {\mathbb{ P}}^{\gamma_\ell}_t$ and $\tilde {\mathbb{ Q}}^{\gamma_\ell}_t$ to be equivalent.
	
	Let us denote $\epsilon_N=\hat f^{N,\gamma_\ell}_{\phi,c,t}\left(Y^{N,\gamma_{\ell}}_{c,0:t}\right)-\hat f^{\gamma_\ell}_{\phi,c,t}\left(Y^{N,\gamma_{\ell}}_{c,0:t}\right)$ and $\tilde \epsilon_N=\hat f^{N,\gamma_\ell}_{\phi,c,t}\left( \tilde Y^{\gamma_\ell}_{c,0:t}\right)-\hat f^{\gamma_\ell}_{\phi,c,t}\left( \tilde Y^{\gamma_\ell}_{c,0:t}\right)$.
	Then, for any $\delta>0$, we can calculate that
	\begin{align*}
	\lim_{N\to \infty}\mathbb{ P} \left( \|\epsilon_{N} \| > \delta \right)
	&=  \lim_{N\to \infty}\mathbb E_{\tilde {\mathbb Q}^{\gamma_\ell}_t} \left[\left(\tilde Z^{N,\gamma_\ell}_c(t)-\tilde Z^{\gamma_\ell}_c(t)\right)\mathbbold 1_{\{ \|\tilde \epsilon_{N}\|\geq \delta \}}\right] 
	+ \lim_{N\to \infty} \mathbb E_{\tilde {\mathbb Q}^{\gamma_\ell}_t} \left[ \tilde Z^{\gamma_\ell}_c(t) \mathbbold 1_{\{ \|\tilde \epsilon_{N}\|\geq \delta \}}\right]\\
	&\leq \lim_{N\to \infty} \mathbb E_{\tilde {\mathbb Q}^{\gamma_\ell}_t} \left[ \left|\left(\tilde Z^{N,\gamma_\ell}_c(t)-\tilde Z^{\gamma_\ell}_t(t)\right) \right| \right] 
	+  {\tilde {\mathbb P}^{\gamma_\ell}}_t \left( \|\tilde \epsilon_{N}\|\geq \delta \right)
	\end{align*}
	where the equality follows from \ref{eq. assumption 1.1 of the big frame work}.
	Finally, the last line in the above formula tends to zero by \ref{eq. assumption 1.3 of the big frame work}, \ref{eq. assumption 1.4 of the big frame work}, and the equivalence between $\tilde {\mathbb{ P}}^{\gamma_\ell}_t$ and $\tilde {\mathbb{ Q}}^{\gamma_\ell}_t$, which proves the result.
\end{proof}

\subsection{Constructing auxiliary random variables}{\label{subsection new random variables}}
This subsection contributes to constructing random variables and probability spaces that satisfy \ref{eq. assumption 1.1 of the big frame work}, \ref{eq. assumption 1.2 of the big frame work}, \ref{eq. assumption 2.1 of the big frame work} and \ref{eq. assumption 2.2 of the big frame work}.
We first apply Skorokhod's representation theorem to construct the required random variables.
\begin{lemma}{\label{lemma applying Skorokhod's representation theorem}}
	~
	\begin{enumerate}
		\item 	If  \eqref{eq. assumption initial conditons} and \eqref{eq. assumption infinite explosion time gamma1} hold, then for any $t>0$,
		there exist $\mathbb R^{r}_{>0}\times \mathbb D_{\mathbb R^{n}}[0,t]$-valued random variables $\left(\tilde {\mathcal K}^{N, \gamma_1}, \tilde X^{N,\gamma_1}(\cdot) \right)$ and $\left( \tilde {\mathcal K}^{\gamma_1}, \tilde X^{\gamma_1}(\cdot) \right)$ defined on a common probability space $\left(\tilde \Omega_{t,1}^{\gamma_1}, \tilde{\mathcal F}_{t,1}^{\gamma_1}, \tilde{\mathbb Q}_{t,1}^{\gamma_1} \right)$, such that $\left(\tilde {\mathcal K}^{N, \gamma_1}, \tilde X^{N,\gamma_1}(\cdot) \right)$ and $\left( \tilde {\mathcal K}^{\gamma_1}, \tilde X^{\gamma_1}(\cdot) \right)$  have the same laws as $\left( {\mathcal K},  X^{N,\gamma_1}(\cdot) \right)$ and $\left( {\mathcal K}, X^{\gamma_1}(\cdot) \right)$, respectively, and 
		$$\lim_{N\to\infty}\left(\tilde {\mathcal K}^{N, \gamma_1}, \tilde X^{N,\gamma_1}(\cdot) \right)=\left( \tilde {\mathcal K}^{\gamma_1}, \tilde X^{\gamma_1}(\cdot) \right)
		\qquad
		\tilde{\mathbb Q}_{t,1}^{\gamma_1}\text{-almost surely.}
		$$ 
		\item If conditions \eqref{eq. assumption initial conditons}, \eqref{eq. assumption for the second time scale 2}, \eqref{eq. derivative of V1}, \eqref{eq. assumption egordicity}, \eqref{eq. assumption technical 1}, \eqref{eq. assumption technical 2}, and \eqref{eq. assumption infinite explosion time gamma2} are satisfied, 
		then for any $t>0$, 
		there exist $\mathbb R^{r}_{>0}\times \mathbb D_{\mathbb R^{n}}[0,t]$-valued random variables $\left(\tilde {\mathcal K}^{N, \gamma_2}, \tilde X^{N,\gamma_2}(\cdot) \right)$ and $\left( \tilde {\mathcal K}^{\gamma_2}, \tilde X^{\gamma_2}(\cdot) \right)$ defined on a common probability space $\left(\tilde \Omega_{t,1}^{\gamma_2}, \tilde{\mathcal F}_{t,1}^{\gamma_2}, \tilde{\mathbb Q}_{t,1}^{\gamma_2} \right)$, such that $\left(\tilde {\mathcal K}^{N, \gamma_2}, \tilde X^{N,\gamma_2}(\cdot) \right)$ and $\left( \tilde {\mathcal K}^{\gamma_2}, \tilde X^{\gamma_2}(\cdot) \right)$  have the same laws as $\left( {\mathcal K},  X^{N,\gamma_2}(\cdot) \right)$ and $\left( {\mathcal K}, X^{\gamma_2}(\cdot) \right)$, respectively, and 
		$$\lim_{N\to\infty}\left(\tilde {\mathcal K}^{N, \gamma_2}, \Pi_2\tilde X^{N,\gamma_2}(\cdot) \right)=\left( \tilde {\mathcal K}^{\gamma_2}, \tilde X^{\gamma_2}(\cdot) \right)
        \qquad \tilde{\mathbb Q}_{t,1}^{\gamma_2} \text{-almost surely.}
        $$		
	\end{enumerate}
\end{lemma}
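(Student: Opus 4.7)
The plan is to invoke the Skorokhod representation theorem in both parts, using the weak convergence statements already established in \Cref{prop Kang kurtz gamma1} and \Cref{prop kang kurtz gamma2}. The main point to verify is that the relevant state spaces are Polish: $\mathbb{R}^r$ with its Euclidean topology is Polish, $\mathbb{D}_{\mathbb{R}^n}[0,t]$ endowed with the Skorokhod $J_1$ topology is Polish, and finite products of Polish spaces are Polish. Therefore any weak convergence of Borel probability measures on $\mathbb{R}^r_{>0}\times \mathbb{D}_{\mathbb{R}^n}[0,t]$ admits an almost sure representation on a common probability space.

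For part (1), \Cref{prop Kang kurtz gamma1} under \eqref{eq. assumption initial conditons} and \eqref{eq. assumption infinite explosion time gamma1} gives $\bigl(\mathcal{K},X^{N,\gamma_1}(\cdot)\bigr)\Rightarrow \bigl(\mathcal{K},X^{\gamma_1}(\cdot)\bigr)$ in the Skorokhod topology. I would then directly apply Skorokhod's representation theorem to obtain $\bigl(\tilde{\mathcal{K}}^{N,\gamma_1},\tilde X^{N,\gamma_1}(\cdot)\bigr)$ and $\bigl(\tilde{\mathcal{K}}^{\gamma_1},\tilde X^{\gamma_1}(\cdot)\bigr)$ on a common probability space $(\tilde\Omega^{\gamma_1}_{t,1},\tilde{\mathcal F}^{\gamma_1}_{t,1},\tilde{\mathbb Q}^{\gamma_1}_{t,1})$ with the prescribed marginal laws and almost sure convergence. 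Since the reaction-constant component is the same on both sides of the limit, no extra work is needed.

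Part (2) is more delicate, and this is where the main obstacle lies. \Cref{prop kang kurtz gamma2} only provides the weak convergence of the \emph{projected} pair $\bigl(\mathcal{K},\Pi_2 X^{N,\gamma_2}(\cdot)\bigr)\Rightarrow \bigl(\mathcal{K},X^{\gamma_2}(\cdot)\bigr)$, whereas the lemma requires us to produce the full trajectory $\tilde X^{N,\gamma_2}(\cdot)$ with the law of $X^{N,\gamma_2}(\cdot)$. The strategy is a two-step construction: first apply Skorokhod's representation to the projected convergence on a base space $(\tilde\Omega_0,\tilde{\mathcal F}_0,\tilde{\mathbb Q}_0)$ to obtain $\bigl(\tilde{\mathcal K}^{N,\gamma_2},\tilde W^N(\cdot)\bigr)$ and $\bigl(\tilde{\mathcal K}^{\gamma_2},\tilde X^{\gamma_2}(\cdot)\bigr)$ with the right marginals and $\bigl(\tilde{\mathcal K}^{N,\gamma_2},\tilde W^N(\cdot)\bigr)\to \bigl(\tilde{\mathcal K}^{\gamma_2},\tilde X^{\gamma_2}(\cdot)\bigr)$ $\tilde{\mathbb Q}_0$-a.s.; then enlarge this space with independent auxiliary randomness to realize $\tilde X^{N,\gamma_2}(\cdot)$.

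The enlargement uses disintegration: since $\mathbb D_{\mathbb R^n}[0,t]$ is Polish, the regular conditional distribution $Q_N(\,\cdot\,\mid \kappa,w)$ of $X^{N,\gamma_2}(\cdot)$ given $\bigl(\mathcal K,\Pi_2 X^{N,\gamma_2}(\cdot)\bigr)=(\kappa,w)$ exists and is a measurable kernel. Taking the product $\tilde\Omega^{\gamma_2}_{t,1}=\tilde\Omega_0\times[0,1]$ with Lebesgue measure on the second factor and $\tilde{\mathbb Q}^{\gamma_2}_{t,1}=\tilde{\mathbb Q}_0\otimes\mathrm{Leb}$, I would define $\tilde X^{N,\gamma_2}(\cdot)$ as a measurable function of $\bigl(\tilde{\mathcal K}^{N,\gamma_2},\tilde W^N(\cdot)\bigr)$ and the uniform auxiliary variable that realizes the kernel $Q_N$. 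By construction $\bigl(\tilde{\mathcal K}^{N,\gamma_2},\tilde X^{N,\gamma_2}(\cdot)\bigr)$ has the prescribed law of $\bigl(\mathcal K,X^{N,\gamma_2}(\cdot)\bigr)$, and crucially $\Pi_2 \tilde X^{N,\gamma_2}(\cdot)=\tilde W^N(\cdot)$ $\tilde{\mathbb Q}^{\gamma_2}_{t,1}$-a.s., so the desired almost sure convergence $\bigl(\tilde{\mathcal K}^{N,\gamma_2},\Pi_2 \tilde X^{N,\gamma_2}(\cdot)\bigr)\to \bigl(\tilde{\mathcal K}^{\gamma_2},\tilde X^{\gamma_2}(\cdot)\bigr)$ follows from the Skorokhod representation in the base space. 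The hard part is purely measure-theoretic bookkeeping --- guaranteeing the existence and measurability of the regular conditional kernel $Q_N$ and checking that the product space construction preserves all the marginals --- and it relies on the Polish property of $\mathbb D_{\mathbb R^n}[0,t]$ rather than on any deeper probabilistic fact.
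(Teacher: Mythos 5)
Your proof is correct. For part (1) it coincides with the paper's argument, whose entire proof is a one-line appeal to \Cref{prop Kang kurtz gamma1}, \Cref{prop kang kurtz gamma2}, and Skorokhod's representation theorem. For part (2), however, you have correctly identified a gap that the paper's one-line proof glosses over: \Cref{prop kang kurtz gamma2} only yields weak convergence of the projected pair $\left(\mathcal K,\Pi_2 X^{N,\gamma_2}(\cdot)\right)$, so a bare application of Skorokhod's theorem produces copies whose law is that of the projected process, not of the full process $\left(\mathcal K, X^{N,\gamma_2}(\cdot)\right)$ as the lemma asserts. Your two-step construction --- Skorokhod representation for the projected convergence, followed by enlargement of the space with an independent uniform variable realizing the regular conditional kernel of $X^{N,\gamma_2}(\cdot)$ given $\left(\mathcal K,\Pi_2 X^{N,\gamma_2}(\cdot)\right)$ --- is a standard and correct way to fill this gap; the two points worth recording explicitly are that the kernel is almost surely supported on the fiber $\{x:\Pi_2 x=w\}$ (so that $\Pi_2\tilde X^{N,\gamma_2}(\cdot)=\tilde W^N(\cdot)$ almost surely) and that $\Pi_2$, applied pathwise, is continuous on $\mathbb D_{\mathbb R^{n}}[0,t]$. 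It is worth noting that downstream the paper only ever uses $\tilde X^{N,\gamma_2}$ through functions satisfying $h(x)=h(\Pi_2 x)$ and $\phi(\kappa,x)=\phi(\kappa,\Pi_2 x)$, so the projected copies would in fact suffice for the later arguments; but as the lemma is stated, your extra disintegration step is genuinely needed, and your write-up is more complete than the paper's.
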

\begin{proof}
	By \Cref{prop Kang kurtz gamma1} and \Cref{prop kang kurtz gamma2}, we have $(\mathcal K,X^{N,\gamma_{1}}(\cdot)) \Rightarrow(\mathcal K,X^{\gamma_{1}}(\cdot)) $ and $(\mathcal K,\Pi_2X^{N,\gamma_{2}}(\cdot)) \Rightarrow(\mathcal K,X^{\gamma_{2}}(\cdot)) $.
	Therefore, by Skorokhod's representation theorem \cite[Theorem 1.8 in Chapter 3]{ethier1986markov}, the result is proven.
\end{proof}

In continuous-time observation scenarios, we further term $\tilde Y^{\gamma_\ell}_c(\cdot)$ ($\ell=1,2$) as $m$-vectors of independent standard Brownian motions on a certain filtered probability space, denoted as $\left(\tilde \Omega^{\gamma_\ell}_{c,2}, \tilde{\mathcal F}^{\gamma_\ell}_{c,2}, \{\tilde{\mathcal F}^{\gamma_\ell}_{t,c,2}\}_t, \tilde{ \mathbb Q}^{\gamma_\ell}_{c,2} \right)$.
Furthermore, for any $t>0$, we define product probability spaces $\left(\tilde \Omega^{\gamma_\ell}_t, \tilde{\mathcal F}^{\gamma_\ell}_t, \tilde{ \mathbb Q}^{\gamma_\ell}_t \right)
\triangleq
\left(\tilde \Omega^{\gamma_\ell}_{t,1}\times \tilde \Omega^{\gamma_\ell}_{c,2}, \tilde{\mathcal F}^{\gamma_\ell}_{t,1}\times \tilde{\mathcal F}^{\gamma_\ell}_{t,c,2}, \tilde{ \mathbb Q}^{\gamma_\ell}_{t,1}
\times \tilde{ \mathbb Q}^{\gamma_\ell}_{c,2} \right)
$
and Girsanov's random variables
\begin{align*}
\tilde Z^{N, \gamma_\ell}_c(t) &\triangleq \exp\left(
\int_0^t h^{\top}\left(\tilde X^{N,\gamma_\ell}(s)\right) \dd \tilde Y^{\gamma_\ell}_c - \frac{1}{2} \int_0^t \| h(\tilde X^{N,\gamma_\ell}(s))  \|^2 \dd s
\right) 
&& \ell =1,2
\\
\tilde Z^{\gamma_\ell}_c(t) &\triangleq \exp\left(
\int_0^t h^{\top}(\tilde X^{\gamma_\ell}(s)) \dd \tilde Y^{\gamma_\ell}_c - \frac{1}{2} \int_0^t \| h(\tilde X^{\gamma_\ell}(s))  \|^2 \dd s
\right) 
&& \ell =1,2
\end{align*}
on these product probability spaces.
Then, we can show that the above-defined random variables satisfy \ref{eq. assumption 1.1 of the big frame work} and \ref{eq. assumption 1.2 of the big frame work}.

\begin{lemma}{\label{lem A.1.1 A.1.2}}[Verifying \ref{eq. assumption 1.1 of the big frame work} and \ref{eq. assumption 1.2 of the big frame work}]
	~
	\begin{enumerate}
		\item If  \eqref{eq. assumption initial conditons} and \eqref{eq. assumption infinite explosion time gamma1} hold, then for any $t>0$, the above-defined probability space $\left(\tilde \Omega^{\gamma_1}_t, \tilde{\mathcal F}^{\gamma_1}_t, \tilde{ \mathbb Q}^{\gamma_1}_t \right)$ and random variables $\tilde {\mathcal K}^{N, \gamma_1}$, $\tilde {\mathcal K}^{\gamma_1}$, $\tilde X^{N,\gamma_1}(\cdot)$, $\tilde X^{\gamma_1}(\cdot)$, $\tilde Y^{\gamma_1}_c(\cdot)$, $\tilde Z^{N,\gamma_1}_c(t)$, and $\tilde Z^{\gamma_1}_c(t)$ satisfy \ref{eq. assumption 1.1 of the big frame work} and \ref{eq. assumption 1.2 of the big frame work} for $\ell=1$.
		\item If \eqref{eq. assumption initial conditons}, \eqref{eq. assumption for the second time scale 2}, \eqref{eq. derivative of V1}, \eqref{eq. assumption egordicity}, \eqref{eq. assumption technical 1}, \eqref{eq. assumption technical 2}, and \eqref{eq. assumption infinite explosion time gamma2} hold, then for any $t>0$, the above-defined probability space $\left(\tilde \Omega^{\gamma_2}_t, \tilde{\mathcal F}^{\gamma_2}_t, \tilde{ \mathbb Q}^{\gamma_2}_t \right)$ and random variables $\tilde {\mathcal K}^{N, \gamma_2}$, $\tilde {\mathcal K}^{\gamma_2}$, $\tilde X^{N,\gamma_2}(\cdot)$, $\tilde X^{\gamma_2}(\cdot)$, $\tilde Y^{\gamma_2}_c(\cdot)$, $\tilde Z^{N,\gamma_2}_c(t)$, and $\tilde Z^{\gamma_2}_c(t)$ satisfy \ref{eq. assumption 1.1 of the big frame work} and \ref{eq. assumption 1.2 of the big frame work} for $\ell=2$.
	\end{enumerate}
\end{lemma}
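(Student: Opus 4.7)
The plan is to verify (A.1.1) and (A.1.2) by assembling distributional equalities on the product space $\tilde\Omega^{\gamma_\ell}_t = \tilde\Omega^{\gamma_\ell}_{t,1}\times \tilde\Omega^{\gamma_\ell}_{c,2}$, exploiting two standard consequences of the Girsanov change of measure used in the Kallianpur--Striebel formula: under $\mathbb P^{N,\gamma_\ell}_c$, (i) the observation $Y^{N,\gamma_\ell}_c$ is a standard $m$-dimensional Brownian motion independent of $(\mathcal K, X^{N,\gamma_\ell})$ (Proposition 3.13 of \cite{bain2008fundamentals}), and (ii) the marginal law of $(\mathcal K, X^{N,\gamma_\ell})$ coincides with its law under the natural measure $\mathbb P$. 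Fact (ii) follows by rewriting the inverse Girsanov density in terms of the driving Brownian motion $B$ of $Y^{N,\gamma_\ell}_c$ as $\exp\bigl(-\int_0^t h^{\top}(X^{N,\gamma_\ell})\,\mathrm dB - \tfrac12 \int_0^t \|h(X^{N,\gamma_\ell})\|^2\,\mathrm ds\bigr)$; its conditional expectation under $\mathbb P$ given $\sigma(\mathcal K, X^{N,\gamma_\ell})$ equals one, because $B$ is independent of $(\mathcal K, X^{N,\gamma_\ell})$ under $\mathbb P$. The analogous statements for the reduced-model measures $\mathbb P^{\gamma_\ell}_c$ follow from the corresponding Kallianpur--Striebel construction in Appendix A.

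With (i) and (ii) in hand, the verification proceeds as follows. By Lemma~\ref{lemma applying Skorokhod's representation theorem}, $(\tilde{\mathcal K}^{N,\gamma_\ell}, \tilde X^{N,\gamma_\ell}(\cdot))$ has the same law under $\tilde{\mathbb Q}^{\gamma_\ell}_{t,1}$ as $(\mathcal K, X^{N,\gamma_\ell}(\cdot))$ does under $\mathbb P$, hence by (ii) as under $\mathbb P^{N,\gamma_\ell}_c$. On the independent second factor, $\tilde Y^{\gamma_\ell}_c$ is by construction a standard $m$-dimensional Brownian motion; because the two factors are independent under the product measure $\tilde{\mathbb Q}^{\gamma_\ell}_t$, the triple $(\tilde{\mathcal K}^{N,\gamma_\ell}, \tilde X^{N,\gamma_\ell}(\cdot), \tilde Y^{\gamma_\ell}_c(\cdot))$ has the same joint law under $\tilde{\mathbb Q}^{\gamma_\ell}_t$ as $(\mathcal K, X^{N,\gamma_\ell}(\cdot), Y^{N,\gamma_\ell}_c(\cdot))$ under $\mathbb P^{N,\gamma_\ell}_c$, matching both the marginals and the independence structure in (i). The scalar $Z^{N,\gamma_\ell}_c(t)$ is recovered from this triple by a single measurable functional $\Phi$ (the stochastic exponential), and by the defining formula $\tilde Z^{N,\gamma_\ell}_c(t)$ is $\Phi$ applied to the tilded triple. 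Pushing the joint law forward under $\mathrm{id}\times\Phi$ yields the quadruple distributional equality required in (A.1.1). The argument for (A.1.2) is identical, using the limiting part of Lemma~\ref{lemma applying Skorokhod's representation theorem} together with the Kallianpur--Striebel formulas for the reduced models; for $\ell=2$ no modification is needed because the Skorokhod coupling provides $\tilde X^{N,\gamma_2}(\cdot)$ as a full $\mathbb D_{\mathbb R^n}[0,t]$-valued random variable (not merely its projection onto $\mathbb L_2$), so the functional $\Phi$ applies verbatim.

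The main technical point to handle carefully is the pathwise interpretation of the stochastic integral $\int_0^t h^{\top}(\tilde X^{N,\gamma_\ell}(s))\,\mathrm d\tilde Y^{\gamma_\ell}_c(s)$, since this integral must be realised as a deterministic Borel-measurable functional of the pair of trajectories in order for $Z^{N,\gamma_\ell}_c(t)$ and $\tilde Z^{N,\gamma_\ell}_c(t)$ to be pushforwards of each other. Because $h$ is bounded and Lipschitz continuous and $\tilde X^{N,\gamma_\ell}$ is an RCLL process, the Itô integral can be realised as an almost-sure limit of Riemann--Stieltjes sums along a fixed, deterministic sequence of partitions, yielding a measurable functional on $\mathbb D_{\mathbb R^n}[0,t]\times \mathbb D_{\mathbb R^m}[0,t]$ defined up to a universally null set. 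Once this measurability is secured, everything reduces to equality of pushforward measures, which is immediate from the triple-law agreement established above. The situation for the reduced-model Girsanov exponentials $\tilde Z^{\gamma_\ell}_c(t)$ is identical, closing the proof.
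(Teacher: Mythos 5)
Your proposal is correct and follows essentially the same route as the paper, whose own proof is a one-line appeal to Lemma~\ref{lemma applying Skorokhod's representation theorem} and the definitions. You usefully supply two details the paper leaves implicit: that the marginal law of $(\mathcal K, X^{N,\gamma_\ell}(\cdot))$ is unchanged when passing from $\mathbb P$ to the reference measure $\mathbb P^{N,\gamma_\ell}_c$ (needed because the Skorokhod coupling is built under $\mathbb P$ while \ref{eq. assumption 1.1 of the big frame work} is stated under $\mathbb P^{N,\gamma_\ell}_c$), and that the Girsanov exponential is a measurable functional of the trajectory pair so that the quadruple law equality follows by pushforward.
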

\begin{proof}
	This lemma follows immediately from \Cref{lemma applying Skorokhod's representation theorem} and definitions of these random variables.
\end{proof}

Similarly, in discrete time observations scenarios, we term $\{\tilde Y^{\gamma_\ell}_d(t_i)\}_{i\in\mathbb N_{>0}}$ ($\ell=1,2$) as mutually independent $m$-variate Gaussian random variables whose covariance matrices equal to the identical matrix.
We denote the filtered probability space where $\{\tilde Y^{\gamma_\ell}_d(t_i)\}_{i\in\mathbb N_{>0}}$ is located by $\left(\tilde \Omega^{\gamma_\ell}_{d}, \tilde{\mathcal F}^{\gamma_\ell}_{d,2}, \{\tilde{\mathcal F}^{\gamma_\ell}_{t_i,2}\}_i, \tilde{ \mathbb Q}^{\gamma_\ell}_{d,2} \right)$.
Moreover, for any $i\in\mathbb N_{>0}$, we define probability spaces $\left(\tilde \Omega^{\gamma_\ell}_{t_i}, \tilde{\mathcal F}^{\gamma_\ell}_{t_i}, \tilde{ \mathbb Q}^{\gamma_\ell}_{t_i} \right)
\triangleq
\left(\tilde \Omega^{\gamma_\ell}_{t_i,1}\times \tilde \Omega^{\gamma_\ell}_{d,2}, \tilde{\mathcal F}^{\gamma_\ell}_{t_i,1}\times \tilde{\mathcal F}^{\gamma_\ell}_{t_i,d,2}, \tilde{ \mathbb Q}^{\gamma_\ell}_{t_i,1}
\times \tilde{ \mathbb Q}^{\gamma_\ell}_{d,2} \right)
$
and random variables
\begin{equation*}
\tilde Z^{N, \gamma_\ell}_d(t_i)= \prod_{j=1}^{i} H\left(\tilde X^{N, \gamma_\ell}(t_j),\tilde Y^{\gamma_\ell}(t_j)\right) \quad \text{~and~} \quad 
\tilde Z^{\gamma_\ell}_d(t_i)= \prod_{j=1}^{i} H\left(\tilde X^{\gamma_\ell}(t_j),\tilde Y^{\gamma_\ell}(t_j)\right)
\end{equation*}
for $\ell=1,2$.
Then, we can also show that \ref{eq. assumption 2.1 of the big frame work} and \ref{eq. assumption 2.2 of the big frame work} are satisfied under some mild conditions.
\begin{lemma}{\label{lem A.2.1 A.2.2}}[Verifying \ref{eq. assumption 2.1 of the big frame work} and \ref{eq. assumption 2.2 of the big frame work}]
~
\begin{enumerate}
	\item 	If  \eqref{eq. assumption initial conditons} and \eqref{eq. assumption infinite explosion time gamma1} hold, then for any $i\in\mathbb N_{>0}$, the above-defined probability space $\left(\tilde \Omega^{\gamma_1}_{t_i}, \tilde{\mathcal F}^{\gamma_1}_{t_i}, \tilde{ \mathbb Q}^{\gamma_1}_{t_i} \right)$ and random variables $\tilde {\mathcal K}^{N, \gamma_1}$, $\tilde {\mathcal K}^{\gamma_1}$, $\tilde X^{N,\gamma_1}(\cdot)$, $\tilde X^{\gamma_1}(\cdot)$, $\tilde Y^{\gamma_1}_{d}(\cdot)$, $\tilde Z^{N,\gamma_1}_d(t_i)$, $\tilde Z^{\gamma_1}_d(t_i)$ satisfy \ref{eq. assumption 2.1 of the big frame work} and \ref{eq. assumption 2.2 of the big frame work} for $\ell=1$.
	\item If \eqref{eq. assumption initial conditons}, \eqref{eq. assumption for the second time scale 2}, \eqref{eq. derivative of V1}, \eqref{eq. assumption egordicity}, \eqref{eq. assumption technical 1}, \eqref{eq. assumption technical 2}, and \eqref{eq. assumption infinite explosion time gamma2} hold, then for any $i\in\mathbb N_{>0}$, the above-defined probability space $\left(\tilde \Omega^{\gamma_2}_{t_i}, \tilde{\mathcal F}^{\gamma_2}_{t_i}, \tilde{ \mathbb Q}^{\gamma_2}_{t_i} \right)$ and random variables $\tilde {\mathcal K}^{N, \gamma_2}$, $\tilde {\mathcal K}^{\gamma_2}$, $\tilde X^{N,\gamma_2}(\cdot)$, $\tilde X^{\gamma_2}(\cdot)$, $\tilde Y^{\gamma_2}_{d}(\cdot)$, $\tilde Z^{N,\gamma_2}_d(t_i)$, $\tilde Z^{\gamma_2}_d(t_i)$ satisfy \ref{eq. assumption 2.1 of the big frame work} and \ref{eq. assumption 2.2 of the big frame work} for $\ell=2$.
\end{enumerate}
\end{lemma}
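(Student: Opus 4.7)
The plan is to verify the two law-matching conditions by combining the almost-sure Skorokhod construction from Lemma \ref{lemma applying Skorokhod's representation theorem} with the product structure of the constructed space, plus a short Girsanov marginal identity. Before putting the pieces together, I would first establish the following auxiliary fact: under the reference probability $\mathbb{P}^{N,\gamma_\ell}_d$ (resp.\ $\mathbb{P}^{\gamma_\ell}_d$), the marginal law of $(\mathcal{K}, X^{N,\gamma_\ell}(\cdot))$ on $\mathbb{R}^r_{>0}\times \mathbb{D}_{\mathbb{R}^n}[0,t_i]$ coincides with its marginal law under $\mathbb{P}$. This follows from a one-line Girsanov computation: for any bounded measurable functional $f$ of the dynamics,
\begin{equation*}
\mathbb{E}_{\mathbb{P}^{N,\gamma_\ell}_d}[f]=\mathbb{E}_{\mathbb{P}}\!\left[f\prod_{j=1}^{i}H\!\left(X^{N,\gamma_\ell}(t_j),Y^{N,\gamma_\ell}_d(t_j)\right)^{-1}\right],
\end{equation*}
and conditioning on $(\mathcal{K},X^{N,\gamma_\ell}(\cdot))$ gives $1$, since each $W(t_j)\sim \mathcal{N}(0,I)$ is independent of the dynamics and a direct Gaussian integration yields $\mathbb{E}[H(x,h(x)+W)^{-1}]=1$ for every fixed $x$.

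Next I would assemble the four-tuple law. By Lemma \ref{lemma applying Skorokhod's representation theorem}, the pair $(\tilde{\mathcal{K}}^{N,\gamma_\ell},\tilde{X}^{N,\gamma_\ell}(\cdot))$ on $(\tilde{\Omega}^{\gamma_\ell}_{t_i,1},\tilde{\mathcal{F}}^{\gamma_\ell}_{t_i,1},\tilde{\mathbb{Q}}^{\gamma_\ell}_{t_i,1})$ has the same law as $(\mathcal{K},X^{N,\gamma_\ell}(\cdot))$ under $\mathbb{P}$, and hence, by the auxiliary fact above, the same law as under $\mathbb{P}^{N,\gamma_\ell}_d$. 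The product construction $\tilde{\mathbb{Q}}^{\gamma_\ell}_{t_i}=\tilde{\mathbb{Q}}^{\gamma_\ell}_{t_i,1}\times\tilde{\mathbb{Q}}^{\gamma_\ell}_{d,2}$ makes $\tilde{Y}^{\gamma_\ell}_{d,1:i}$ independent of $(\tilde{\mathcal{K}}^{N,\gamma_\ell},\tilde{X}^{N,\gamma_\ell}(\cdot))$, with exactly the i.i.d.\ $\mathcal{N}(0,I)$ marginals that $\{Y^{N,\gamma_\ell}_d(t_j)\}_{j=1}^{i}$ possesses under $\mathbb{P}^{N,\gamma_\ell}_d$. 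Hence the triples $(\tilde{\mathcal{K}}^{N,\gamma_\ell},\tilde{X}^{N,\gamma_\ell}(\cdot),\tilde{Y}^{\gamma_\ell}_{d,1:i})$ and $(\mathcal{K},X^{N,\gamma_\ell}(\cdot),Y^{N,\gamma_\ell}_{d,1:i})$ share the same distribution on $\mathbb{R}^r_{>0}\times\mathbb{D}_{\mathbb{R}^n}[0,t_i]\times\mathbb{R}^{m\times i}$.

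Finally, since $Z^{N,\gamma_\ell}_d(t_i)=\prod_{j=1}^{i}H(X^{N,\gamma_\ell}(t_j),Y^{N,\gamma_\ell}_d(t_j))$ is a fixed measurable functional of the dynamics sampled at $t_1,\dots,t_i$ together with the discrete observations, and $\tilde{Z}^{N,\gamma_\ell}_d(t_i)$ is defined by applying the identical functional to the tilded variables, the full four-tuple law in \ref{eq. assumption 2.1 of the big frame work} is preserved. The verification of \ref{eq. assumption 2.2 of the big frame work} is entirely parallel, using the reduced-model part of Lemma \ref{lemma applying Skorokhod's representation theorem} and the analogue of the marginal identity above for $\mathbb{P}^{\gamma_\ell}_d$ relative to $\mathbb{P}$; note that the same functional $H$ and the same product definition of $\tilde{Z}^{\gamma_\ell}_d(t_i)$ are in use. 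There is no real ``hard step'' here; the only point that requires attention is making sure the Girsanov marginal identity is well-posed, which is immediate from the boundedness of $h$ (so $H$ and $H^{-1}$ are bounded on the finite grid $\{t_1,\dots,t_i\}$). The remainder is a bookkeeping exercise on product probability spaces.
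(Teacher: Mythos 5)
Your proposal is correct and follows essentially the same route as the paper, whose proof of this lemma is a one-line appeal to Lemma \ref{lemma applying Skorokhod's representation theorem} and the definitions of the tilded random variables. You merely make explicit the details the paper leaves implicit — in particular the marginal identity $\mathbb{E}[H(x,h(x)+W)^{-1}]=1$, which shows that the law of $(\mathcal K, X^{N,\gamma_\ell}(\cdot))$ is the same under $\mathbb P$ and under the reference measure, so the Skorokhod construction (stated for $\mathbb P$) indeed yields the law matching required under $\mathbb P^{N,\gamma_\ell}_d$ and $\mathbb P^{\gamma_\ell}_d$.
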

\begin{proof}
	This lemma follows immediately from \Cref{lemma applying Skorokhod's representation theorem} and definitions of these random variables.
\end{proof}

\subsection{The convergence of the constructed auxiliary random variables}

This subsection contributes to showing that the random variables constructed above satisfy \ref{eq. assumption 1.3 of the big frame work} \ref{eq. assumption 1.4 of the big frame work}, \ref{eq. assumption 2.3 of the big frame work}, and \ref{eq. assumption 2.4 of the big frame work} and, therefore, prove \Cref{thm convergence of artificial filters}.
We first prove \ref{eq. assumption 1.4 of the big frame work} and \ref{eq. assumption 2.4 of the big frame work}.

\begin{lemma}{\label{lemma A.1.4}}[Verifying \ref{eq. assumption 1.4 of the big frame work} and \ref{eq. assumption 2.4 of the big frame work}]
	~
	\begin{enumerate}
		\item For $\ell=1$ (the first time scale), if  \eqref{eq. assumption initial conditons} and \eqref{eq. assumption infinite explosion time gamma1} hold, then the random variables and probability spaces defined in \ref{subsection new random variables} satisfy \ref{eq. assumption 1.4 of the big frame work} for any $t>0$ and \ref{eq. assumption 2.4 of the big frame work} for any $i\in\mathbb N_{>0}$..
		\item  For $\ell=2$ (the second time scale), if the conditions of the second part of \Cref{thm convergence of artificial filters} hold, then the random variables and probability spaces defined in \ref{subsection new random variables} satisfy \ref{eq. assumption 1.4 of the big frame work} for any $t>0$ and \ref{eq. assumption 2.4 of the big frame work} for any $i\in\mathbb N_{>0}$.	
	\end{enumerate}
\end{lemma}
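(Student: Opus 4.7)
The plan is to show $L^{1}$ convergence of $\tilde Z^{N,\gamma_\ell}_c(t)$ to $\tilde Z^{\gamma_\ell}_c(t)$ (respectively $\tilde Z^{N,\gamma_\ell}_d(t_i)$ to $\tilde Z^{\gamma_\ell}_d(t_i)$) by combining two ingredients: (i) convergence in $\tilde{\mathbb Q}$-probability of the Girsanov exponentials, and (ii) uniform integrability of the family indexed by $N$. The argument splits into two timescales and two observation types; I would carry out the continuous-time case at $\gamma_2$ in detail since the other three cases are strictly easier.

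For convergence in probability of $\tilde Z^{N,\gamma_\ell}_c(t)$, I would split the exponent into its Riemann part $-\frac12\int_0^t \|h(\tilde X^{N,\gamma_\ell}(s))\|^2\dd s$ and its stochastic-integral part $\int_0^t h^\top(\tilde X^{N,\gamma_\ell}(s))\,\dd \tilde Y^{\gamma_\ell}_c(s)$. By \Cref{lemma applying Skorokhod's representation theorem}, $\tilde X^{N,\gamma_1}(\cdot)\to\tilde X^{\gamma_1}(\cdot)$ (resp.\ $\Pi_2\tilde X^{N,\gamma_2}(\cdot)\to\tilde X^{\gamma_2}(\cdot)$) almost surely in the Skorokhod topology, which gives pointwise convergence at every continuity point of the limit, hence Lebesgue-a.e.\ on $[0,t]$. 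The extra hypothesis $h(x)=h(\Pi_2 x)$ at the second timescale lets us replace $\tilde X^{N,\gamma_2}$ by $\Pi_2\tilde X^{N,\gamma_2}$ inside $h$. Because $h$ is bounded and continuous, bounded convergence takes care of the Riemann part. For the stochastic part, the product construction in Subsection~\ref{subsection new random variables} makes $\tilde Y^{\gamma_\ell}_c$ a Brownian motion independent of $\tilde X^{N,\gamma_\ell}$ and $\tilde X^{\gamma_\ell}$, so Itô's isometry yields
\begin{equation*}
\mathbb E_{\tilde{\mathbb Q}^{\gamma_\ell}_t}\!\left[\left(\int_0^t \bigl(h(\tilde X^{N,\gamma_\ell}(s))-h(\tilde X^{\gamma_\ell}(s))\bigr)^{\!\top}\dd \tilde Y^{\gamma_\ell}_c(s)\right)^{\!2}\right] = \mathbb E_{\tilde{\mathbb Q}^{\gamma_\ell}_t}\!\int_0^t \!\|h(\tilde X^{N,\gamma_\ell}(s))-h(\tilde X^{\gamma_\ell}(s))\|^2 \dd s,
\end{equation*}
which tends to $0$ by bounded convergence once more. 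The continuous mapping theorem then upgrades to $\tilde Z^{N,\gamma_\ell}_c(t)\to\tilde Z^{\gamma_\ell}_c(t)$ in probability.

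For uniform integrability, the boundedness of $h$ supplies a uniform-in-$N$ second-moment bound: writing $(\tilde Z^{N,\gamma_\ell}_c(t))^2$ as the product of a genuine Doléans-Dade exponential (which has expectation $1$) and $\exp\bigl(\int_0^t \|h(\tilde X^{N,\gamma_\ell}(s))\|^2\dd s\bigr)\leq \exp(\|h\|_\infty^2 t)$, I obtain $\mathbb E_{\tilde{\mathbb Q}^{\gamma_\ell}_t}[(\tilde Z^{N,\gamma_\ell}_c(t))^2]\leq \exp(\|h\|_\infty^2 t)$. This gives uniform integrability, so convergence in probability upgrades to $L^1$ convergence, proving \ref{eq. assumption 1.4 of the big frame work}. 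For the discrete-time case, $\tilde Z^{N,\gamma_\ell}_d(t_i)$ is a continuous function of the finitely many pointwise values $\{\tilde X^{N,\gamma_\ell}(t_j),\tilde Y^{\gamma_\ell}(t_j)\}_{j\leq i}$, and the analogous bound $\mathbb E[H(x,Y)^2]\leq \exp(\|h\|_\infty^2)$ (for $Y$ a standard Gaussian) handles uniform integrability; convergence in probability follows from continuity of $H$ and pointwise convergence $\tilde X^{N,\gamma_\ell}(t_j)\to\tilde X^{\gamma_\ell}(t_j)$ at each fixed $t_j$.

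The main obstacle I foresee is the subtlety that Skorokhod convergence gives pointwise convergence only at continuity points of the limit; for the discrete-time argument I must argue that the deterministic observation times $t_j$ are almost surely continuity points of the Markov-jump limit $\tilde X^{\gamma_\ell}$. This follows because a pure jump process fires at times whose conditional law (given the jump chain) is absolutely continuous with respect to Lebesgue measure, hence places zero mass on any prescribed $t_j$. A secondary technical point is that the Itô integral on the enlarged product space is well defined only because of the independence of $\tilde Y^{\gamma_\ell}_c$ from the state process, which is baked into the construction in Subsection~\ref{subsection new random variables}; once that is noted, the Itô isometry bound drives the entire proof.
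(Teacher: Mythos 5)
Your proof is correct, but it takes a genuinely different route from the paper. The paper first applies Cauchy--Schwarz to bound $\mathbb E_{\tilde{\mathbb Q}^{\gamma_\ell}_t}[|\tilde Z^{N,\gamma_\ell}_c(t)-\tilde Z^{\gamma_\ell}_c(t)|]$ by $\bigl(\mathbb E[(\tilde Z^{N,\gamma_\ell}_c(t)/\tilde Z^{\gamma_\ell}_c(t)-1)^2]\,\mathbb E[(\tilde Z^{\gamma_\ell}_c(t))^2]\bigr)^{1/2}$, and then computes the first factor in closed form: expanding the square into three exponential terms, conditioning on the state-process $\sigma$-algebra $\tilde{\mathcal F}^{\gamma_\ell}_{t,1}$, and using the fact that $\tilde Y^{\gamma_\ell}_c$ is a Brownian motion independent of the states to evaluate the Gaussian moment generating function of the stochastic integrals explicitly; the resulting expression converges to $1-2+1=0$ by dominated convergence and the a.s.\ Skorokhod convergence from \Cref{lemma applying Skorokhod's representation theorem}. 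You instead prove convergence in probability of $\tilde Z^{N,\gamma_\ell}_c(t)$ directly (handling the Riemann part by bounded convergence and the stochastic-integral part by It\^o's isometry on the enlarged product filtration) and then upgrade to $L^1$ via uniform integrability from the uniform second-moment bound $\mathbb E[(\tilde Z^{N,\gamma_\ell}_c(t))^2]\le e^{Ct}$ obtained through the Dol\'eans-Dade factorization. Both arguments rest on the same two pillars (independence of $\tilde Y^{\gamma_\ell}_c$ from the states, and boundedness of $h$); the paper's computation avoids any appeal to uniform integrability or It\^o isometry by reducing everything to explicit Gaussian moments, while your version is more modular and, notably, treats the discrete-time case more carefully than the paper does --- the paper dismisses it as ``identical,'' whereas you correctly flag and resolve the point that the fixed observation times $t_j$ must a.s.\ be continuity points of the limiting jump process for Skorokhod convergence to yield pointwise convergence there.
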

\begin{proof}
	The proofs for \ref{eq. assumption 1.4 of the big frame work} and \ref{eq. assumption 2.4 of the big frame work} are identical; therefore, we only show the proof for \ref{eq. assumption 1.4 of the big frame work} and leave the other for readers.
	
	By Cauchy-Schwarz inequality, we have that 
	\begin{align*}
	\mathbb E_{\tilde{\mathbb Q}^{\gamma_\ell}_t} \left[\left|\tilde Z^{N,\gamma_\ell }_c(t)- 	\tilde Z^{\gamma_\ell }_c(t) \right|\right] 
	\leq \sqrt{ \mathbb E_{\tilde{\mathbb Q}^{\gamma_\ell}_t} \left[\left(\tilde Z^{N,\gamma_\ell}_c(t)/\tilde Z^{\gamma_\ell}_c(t) - 	1\right)^{2}\right]\mathbb E_{\tilde{\mathbb Q}^{\gamma_\ell}_t}\left[\left(\tilde Z^{\gamma_\ell}_c(t)\right)^{2} \right]}. 
	\end{align*}
	for $\ell=1,2$. 
	Note that $\mathbb E_{\tilde{\mathbb Q}^{\gamma_\ell}_t}\left[\left(\tilde Z^{\gamma_\ell}_c(t)\right)^{2}\right]$ is bounded due to the boundedness of $h(\cdot)$.
	Therefore, to prove the result, we only need to show $\mathbb E_{\tilde{\mathbb Q}^{\gamma_\ell}_t} \left[\left(\tilde Z^{N,\gamma_\ell}_c(t)/\tilde Z^{\gamma_\ell}_c(t) - 	1\right)^{2}\right]$ to converge to zero.
	By the definition of $\tilde Z^{N,\gamma_\ell}_c(t)$ and $\tilde Z^{\gamma_\ell}_c(t)$, we can arrived at
	\begin{align*}
		 &\mathbb E_{\tilde{\mathbb Q}^{\gamma_\ell}_t} \left[\left(\tilde Z^{N,\gamma_\ell}_c(t)/\tilde Z^{\gamma_\ell}_c(t) - 	1\right)^{2}\right] \\
		&=
		\mathbb E_{\tilde{\mathbb Q}^{\gamma_\ell}_t} \left[
		\mathbb E_{\tilde{\mathbb Q}^{\gamma_\ell}_t}
		\left[
		\left(\tilde Z^{N,\gamma_\ell}_c(t)/\tilde Z^{\gamma_\ell}_c(t) - 	1\right)^{2}
		\left| \tilde{\mathcal F}^{\gamma_\ell}_{t,1} \right.
		\right]
		\right] \\
		&= \sum_{j=0}^{2} 
		\left(
		\begin{matrix}
		 j \\ 2
		\end{matrix}
		\right)
		(-1)^{j} \mathbb E_{\tilde{\mathbb Q}^{\gamma_\ell}_t} \Bigg[ \exp \left(
		     \frac{j}{2} \int_0^t \| h(\tilde X^{N,\gamma_\ell}(s))-h(\tilde X^{\gamma_\ell}(s))\|^2 \dd s 
		      + \frac{j}{2} \int_0^t \| h(\tilde X^{\gamma_\ell}(s))\|^2 - \| h(\tilde X^{N,\gamma_\ell}(s))\|^2 \dd s
		\right)\Bigg]
	\end{align*}
	where $		\left(
	\begin{matrix}
	j \\ 2
	\end{matrix}
	\right)$ are binomial coefficients, the first equality follows from the law of total expectation, and the second equality follows from the fact that $\tilde Y^{\gamma_\ell}(\cdot)$ are $m$-vectors of independent standard Brownian motions and independent of $\tilde X^{N,\gamma_\ell}(\cdot)$ and $\tilde X^{\gamma_\ell}(\cdot)$.
    Finally, the right hand side of the last equality goes to zero by the boundedness and Lipschitz continuity of $h(\cdot)$ and the almost sure convergence of $\tilde X^{N,\gamma_\ell}(\cdot)$ to $\tilde X^{\gamma_\ell}(\cdot)$ (see \Cref{lemma applying Skorokhod's representation theorem}). (For $\ell=2$, we also need the condition $h(x)=h(\Pi_2 x)$ for all $x\in\mathbb R^{n}$.)
    Consequently, $\mathbb E_{\tilde{\mathbb Q}^{\gamma_\ell}_t} \left[\left(\tilde Z^{N,\gamma_\ell}_c(t)/\tilde Z^{\gamma_\ell}_c(t) - 	1\right)^{2}\right]\to 0$ and the lemma is shown.
\end{proof}

Then, we work on \ref{eq. assumption 1.3 of the big frame work} and \ref{eq. assumption 2.3 of the big frame work} using the fact that $A_NB_N\to AB$ in probability if $A_N\to A$ and $B_N \to B$ in probability.
Specifically, we use this fact to prove \ref{eq. assumption 1.3 of the big frame work} by taking
\begin{equation*}
	A_N=\hat g^{N,\gamma_\ell}_{\phi, c,t}(\tilde Y^{\gamma_{\ell}}_{c,0:t}),
	\quad
	A=\hat g^{\gamma_\ell}_{\phi, c,t}(\tilde Y^{\gamma_{\ell}}_{c,0:t}),
	\quad
	B_N=1/\hat g^{N,\gamma_\ell}_{1, c,t}(\tilde Y^{\gamma_{\ell}}_{c,0:t}),
	\quad
	B=1/\hat g^{\gamma_\ell}_{1, c,t}(\tilde Y^{\gamma_{\ell}}_{c,0:t})
\end{equation*}
 (see \eqref{eq. f function Kallianpur-Striebel formula full model continuous time} and \eqref{eq. f function Kallianpur-Striebel formula reduced model continuous time}). 
To prove \ref{eq. assumption 2.3 of the big frame work}, we take
\begin{equation*}
	A_N=\hat g^{N,\gamma_\ell}_{\phi, d,t_i}(\tilde Y^{\gamma_{\ell}}_{d,1:i}),
	~~
	A=\hat g^{\gamma_\ell}_{\phi, d,t_i}(\tilde Y^{\gamma_{\ell}}_{d,1:i}),
	~~
	B_N=1/\hat g^{N,\gamma_\ell}_{1, d,t_i}(\tilde Y^{\gamma_{\ell}}_{d,1:i}),
	~~
	B=1/\hat g^{\gamma_\ell}_{1, d,t_i}(\tilde Y^{\gamma_{\ell}}_{d,1:i})
\end{equation*}
(see \eqref{eq. f function Kallianpur-Striebel formula full model discrete time} and \eqref{eq. f function Kallianpur-Striebel formula reduced model discrete time}).
In the following propositions, we show $A_N\to A$ in probability, which, together with the continuous mapping theorem, also suggests $B_N\to B$ in probability.

\begin{proposition}{\label{proposition An-A continuous time}}
	~
	\begin{enumerate}
		\item At the first time scale, if \eqref{eq. assumption initial conditons} and \eqref{eq. assumption infinite explosion time gamma1} hold, then for any bounded continuous function $\phi$, there hold
		\begin{equation*}
		\hat g^{N,\gamma_1}_{\phi, c,t}(\tilde Y^{\gamma_{1}}_{c,0:t}) \stackrel{\tilde {\mathbb Q}^{\gamma_1}_t}{\longrightarrow}  \hat g^{\gamma_1}_{\phi, c,t}(\tilde Y^{\gamma_{1}}_{c,0:t})
		~~ \text{as ~$N\to\infty$}, ~~ \forall t>0, 
		\quad \text{and}\quad
		  \hat g^{N,\gamma_1}_{\phi, d,t_i}(\tilde Y^{\gamma_{1}}_{d,1:i}) \stackrel{\tilde {\mathbb Q}^{\gamma_1}_{t_i}}{\longrightarrow}  \hat g^{\gamma_1}_{\phi, d,t_i}(\tilde Y^{\gamma_{1}}_{d,1:i})
		 ~~ \text{as ~$N\to\infty$}, ~~ \forall i\in\mathbb N_{>0}.
		\end{equation*}
		Moreover by the continuous mapping theorem, there hold
		\begin{equation*}
		\frac{1}{\hat g^{N,\gamma_1}_{1, c,t}(\tilde Y^{\gamma_{1}}_{c,0:t})} \stackrel{\tilde {\mathbb Q}^{\gamma_1}_t}{\longrightarrow}  \frac{1}{\hat g^{\gamma_1}_{1, c,t}(\tilde Y^{\gamma_{1}}_{c,0:t})}
		~~ \text{as ~$N\to\infty$}, ~~ \forall t>0, 
		\quad
		\text{and}
		\quad 
		\frac{1}{\hat g^{N,\gamma_1}_{1, d,t_i}(\tilde Y^{\gamma_{1}}_{d,1:i})} \stackrel{\tilde {\mathbb Q}^{\gamma_1}_{t_i}}{\longrightarrow}  \frac{1}{\hat g^{\gamma_1}_{1, d,t_i}(\tilde Y^{\gamma_{1}}_{d,1:i})}
		~~ \text{as ~$N\to\infty$}, ~~ \forall i\in\mathbb N_{>0}.
		\end{equation*}
		\item At the second time scale, assume that the conditions of the second part of \Cref{thm convergence of artificial filters} hold. Then, for any bounded continuous function $\phi$ such that $\phi(\kappa,x)=\phi(\kappa, \Pi_2 x)$ $\forall(\kappa,x)\in \mathbb R^{r} \times \mathbb R^{n}$, there hold
		\begin{equation*}
		\hat g^{N,\gamma_2}_{\phi, c,t}(\tilde Y^{\gamma_{2}}_{c,0:t}) \stackrel{\tilde {\mathbb Q}^{\gamma_2}_t}{\longrightarrow}  \hat g^{\gamma_2}_{\phi, c,t}(\tilde Y^{\gamma_{2}}_{c,0:t})
	   ~~ \text{as ~$N\to\infty$}, ~~ \forall t>0, 
       \quad
		\text{and}
		\quad 
		\frac{1}{\hat g^{N,\gamma_1}_{1, d,t_i}(\tilde Y^{\gamma_{1}}_{d,1:i})} \stackrel{\tilde {\mathbb Q}^{\gamma_1}_{t_i}}{\longrightarrow}  \frac{1}{\hat g^{\gamma_1}_{1, d,t_i}(\tilde Y^{\gamma_{1}}_{d,1:i})}
		~~ \text{as ~$N\to\infty$}, ~~ \forall i\in\mathbb N_{>0}.
		\end{equation*}
		Moreover by the continuous mapping theorem, there hold
		\begin{equation*}
		\frac{1}{\hat g^{N,\gamma_2}_{1, c,t}(\tilde Y^{\gamma_{2}}_{c,0:t})} \stackrel{\tilde {\mathbb Q}^{\gamma_2}_t}{\longrightarrow}  \frac{1}{\hat g^{\gamma_2}_{1, c,t}(\tilde Y^{\gamma_{2}}_{c,0:t})}
		~~ \text{as ~$N\to\infty$}, ~~ \forall t>0, 
        \quad
		\text{and}
		\quad 
		\frac{1}{\hat g^{N,\gamma_2}_{1, d,t_i}(\tilde Y^{\gamma_{2}}_{d,1:i})} \stackrel{\tilde {\mathbb Q}^{\gamma_2}_{t_i}}{\longrightarrow}  \frac{1}{\hat g^{\gamma_2}_{1, d,t_i}(\tilde Y^{\gamma_{2}}_{d,1:i})}
		~~ \text{as ~$N\to\infty$}, ~~ \forall i\in\mathbb N_{>0}.
		\end{equation*}
	\end{enumerate}
\end{proposition}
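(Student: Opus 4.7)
The plan is to leverage the matching joint laws guaranteed by \ref{eq. assumption 1.1 of the big frame work}--\ref{eq. assumption 1.2 of the big frame work} (resp.\ \ref{eq. assumption 2.1 of the big frame work}--\ref{eq. assumption 2.2 of the big frame work}) to re-express the $\hat g$ quantities as conditional expectations on the common probability space. Since $\tilde Y^{\gamma_\ell}_c$ is, by construction, an $m$-vector of independent standard Brownian motions that is independent of the pairs $(\tilde{\mathcal K}^{N,\gamma_\ell}, \tilde X^{N,\gamma_\ell})$ and $(\tilde{\mathcal K}^{\gamma_\ell}, \tilde X^{\gamma_\ell})$, and since a conditional expectation is determined by the joint distribution of the relevant random variables, one obtains
\begin{align*}
	\hat g^{N,\gamma_\ell}_{\phi,c,t}\bigl(\tilde Y^{\gamma_\ell}_{c,0:t}\bigr) &= \mathbb E_{\tilde{\mathbb Q}^{\gamma_\ell}_t}\bigl[\tilde Z^{N,\gamma_\ell}_c(t)\,\phi\bigl(\tilde{\mathcal K}^{N,\gamma_\ell}, \tilde X^{N,\gamma_\ell}(t)\bigr) \bigm| \sigma(\tilde Y^{\gamma_\ell}_{c,0:t})\bigr], \\
	\hat g^{\gamma_\ell}_{\phi,c,t}\bigl(\tilde Y^{\gamma_\ell}_{c,0:t}\bigr) &= \mathbb E_{\tilde{\mathbb Q}^{\gamma_\ell}_t}\bigl[\tilde Z^{\gamma_\ell}_c(t)\,\phi\bigl(\tilde{\mathcal K}^{\gamma_\ell}, \tilde X^{\gamma_\ell}(t)\bigr) \bigm| \sigma(\tilde Y^{\gamma_\ell}_{c,0:t})\bigr],
\end{align*}
with the obvious analogue in the discrete-time case involving $\tilde Z^{\cdot,\gamma_\ell}_d(t_i)$ and $\sigma(\tilde Y^{\gamma_\ell}_{d,1:i})$. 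Because conditional expectation is an $L^1$-contraction and $L^1(\tilde{\mathbb Q}^{\gamma_\ell}_t)$-convergence implies convergence in probability, the whole statement reduces to proving that the integrand $\tilde Z^{N,\gamma_\ell}_c(t)\,\phi(\tilde{\mathcal K}^{N,\gamma_\ell}, \tilde X^{N,\gamma_\ell}(t))$ converges in $L^1$ to $\tilde Z^{\gamma_\ell}_c(t)\,\phi(\tilde{\mathcal K}^{\gamma_\ell}, \tilde X^{\gamma_\ell}(t))$, with the obvious analogue for the discrete-time integrand.

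Splitting the difference by the triangle inequality leaves two contributions. The first is bounded by $\|\phi\|_\infty\,\mathbb E_{\tilde{\mathbb Q}^{\gamma_\ell}_t}[|\tilde Z^{N,\gamma_\ell}_c(t) - \tilde Z^{\gamma_\ell}_c(t)|]$, which vanishes by \Cref{lemma A.1.4}. The second is $\mathbb E_{\tilde{\mathbb Q}^{\gamma_\ell}_t}\bigl[\tilde Z^{\gamma_\ell}_c(t)\,|\phi(\tilde{\mathcal K}^{N,\gamma_\ell}, \tilde X^{N,\gamma_\ell}(t)) - \phi(\tilde{\mathcal K}^{\gamma_\ell}, \tilde X^{\gamma_\ell}(t))|\bigr]$, which I intend to handle by dominated convergence with the integrable dominating function $2\|\phi\|_\infty\,\tilde Z^{\gamma_\ell}_c(t)$ (integrable because $h$ is bounded). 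What remains is therefore the pointwise $\tilde{\mathbb Q}^{\gamma_\ell}_t$-almost sure convergence
\begin{equation*}
	\phi\bigl(\tilde{\mathcal K}^{N,\gamma_\ell}, \tilde X^{N,\gamma_\ell}(t)\bigr) \longrightarrow \phi\bigl(\tilde{\mathcal K}^{\gamma_\ell}, \tilde X^{\gamma_\ell}(t)\bigr).
\end{equation*}

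The main obstacle is precisely this pointwise convergence at the deterministic time $t$, because \Cref{lemma applying Skorokhod's representation theorem} only provides Skorokhod $J_1$ convergence of $\tilde X^{N,\gamma_1}$ (resp.\ $\Pi_2 \tilde X^{N,\gamma_2}$) to $\tilde X^{\gamma_1}$ (resp.\ $\tilde X^{\gamma_2}$), and Skorokhod convergence transfers to evaluations only at continuity points of the limit path. I propose to overcome this by observing that the reduced-model limits defined in \eqref{eq. reduced model at the first time scale} and \eqref{eq. reduced models at the second time scale} are piecewise deterministic Markov processes whose jump intensities are locally bounded by the non-explosivity hypotheses \eqref{eq. assumption infinite explosion time gamma1} and \eqref{eq. assumption infinite explosion time gamma2}; consequently the jump times have absolutely continuous distributions, and for any deterministic $t$ the event ``$t$ is a jump time of $\tilde X^{\gamma_\ell}$'' has $\tilde{\mathbb Q}^{\gamma_\ell}_t$-probability zero, making $t$ an almost sure continuity point of the limit. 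Combined with the continuity of $\phi$, the almost sure convergence $\tilde{\mathcal K}^{N,\gamma_\ell} \to \tilde{\mathcal K}^{\gamma_\ell}$, and (in the second-scale case) the hypothesis $\phi(\kappa,x) = \phi(\kappa,\Pi_2 x)$, this yields the required pointwise convergence. The discrete-time argument is structurally identical: one takes a finite intersection of the null sets corresponding to the observation times $t_1,\dots,t_i$ so that $\tilde X^{N,\gamma_\ell}(t_j) \to \tilde X^{\gamma_\ell}(t_j)$ (resp.\ $\Pi_2 \tilde X^{N,\gamma_2}(t_j)\to \tilde X^{\gamma_2}(t_j)$) almost surely for every $j$, and the bounded continuous factors $H(\cdot,\tilde Y^{\gamma_\ell}_d(t_j))$ then pass to the limit in the product defining $\tilde Z^{N,\gamma_\ell}_d(t_i)$. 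Finally, the convergence of reciprocals is immediate: specializing the result just established to $\phi \equiv 1$ gives convergence in probability of $\hat g^{N,\gamma_\ell}_{1,\cdot,\cdot}(\tilde Y^{\gamma_\ell}_\cdot)$ to $\hat g^{\gamma_\ell}_{1,\cdot,\cdot}(\tilde Y^{\gamma_\ell}_\cdot)$, the limit is strictly positive $\tilde{\mathbb Q}^{\gamma_\ell}_\cdot$-almost surely since $\tilde Z^{\gamma_\ell}_c(t)>0$, and the continuous mapping theorem applied to $x\mapsto 1/x$ closes the argument.
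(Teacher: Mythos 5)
Your proposal is correct and follows essentially the same route as the paper's proof: both re-express the $\hat g$ quantities as conditional expectations on the common probability space via \ref{eq. assumption 1.1 of the big frame work}--\ref{eq. assumption 1.2 of the big frame work}, exploit the $L^1$-contraction property of conditional expectation (the paper invokes Jensen's inequality), split the integrand by the triangle inequality, dispose of the $\tilde Z^{N,\gamma_\ell}_c - \tilde Z^{\gamma_\ell}_c$ term with \Cref{lemma A.1.4}, and treat the $\phi$-difference term by dominated convergence against the almost-sure Skorokhod-representation limits of \Cref{lemma applying Skorokhod's representation theorem}. Your additional observation that evaluation at a fixed deterministic time $t$ requires $t$ to be an almost-sure continuity point of the limit path (justified by the absence of atoms in the jump-time distributions of the piecewise deterministic limit) is a detail the paper's proof passes over silently, and you handle it correctly.
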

\begin{proof}
	The proofs for the results in the continuous-time and discrete-time scenarios are the same. Therefore, we only present the proof for the result in the continuous-time case and leave the other for readers.
	
	By \Cref{eq. definition of g full model continuous time }, \eqref{eq. definition of g reduced mode continuous time}, and \Cref{lem A.1.1 A.1.2} (i.e., \ref{eq. assumption 1.1 of the big frame work} and \ref{eq. assumption 1.2 of the big frame work}), we have that 
	\begin{align*}
	\hat g^{N,\gamma_\ell}_{\phi, c,t}(\tilde Y^{\gamma_{\ell}}_{c,0:t}) &=
	\mathbb E_{\tilde {\mathbb Q}^{\gamma_\ell}_t} \left[ \tilde Z^{N,\gamma_{\ell}}_{c}(t) \phi\left(\tilde {\mathcal K}^{N,\gamma_\ell}, \tilde X^{N,\gamma_{\ell}}(t)\right) \left| \tilde{\mathcal Y}^{\gamma_{\gamma_{\ell}}}_{c,t} \right.\right]
	\qquad
	\tilde {\mathbb Q}^{\gamma_\ell}_t\text{-a.s.}
	\\
	\hat g^{\gamma_\ell}_{\phi, c,t}(\tilde Y^{\gamma_\ell}_{c,0:t})
	&=
	\mathbb E_{\tilde {\mathbb Q}^{\gamma_\ell}_t} \left[ \tilde Z^{\gamma_\ell}_c(t) \phi \left(\tilde {\mathcal K}^{\gamma_{\ell}}, \tilde X^{\gamma_\ell}(t)\right) \left| \tilde {\mathcal Y}^{\gamma_\ell}_{c,t} \right. \right]
	\qquad 	
	\tilde {\mathbb Q}^{\gamma_\ell}_t\text{-a.s.}
	\end{align*}
    for any $t>0$, $ \ell=1,2$, and any continuous function $\phi$ given in the lemma.
    In the above formulas, $\tilde {\mathcal Y}^{\gamma_\ell}_{c,t}$ is the filtration generated by the process $\{\tilde { Y}^{\gamma_\ell}_{c}(t)\}_{t>0}$.
	Therefore, by Jensen's inequality and the triangle inequality, we can further arrive at 
	\begin{align*}
		&\mathbb E_{\tilde {\mathbb Q}^{\gamma_\ell}_t}
		\left[\left|
		\hat g^{N,\gamma_\ell}_{\phi, c,t}(\tilde Y^{\gamma}_{c,0:t})-\hat g^{\gamma_\ell}_{\phi, c,t}(\tilde Y^{\gamma_\ell}_{c,0:t})
		\right|
		\right]\\
		&\leq 
		\mathbb E_{\tilde {\mathbb Q}^{\gamma_\ell}_t}
		\left[\left|
		    \left(\tilde Z^{\gamma_\ell}_c(t)-\tilde Z^{N,\gamma_{\ell}}_{c}(t) \right) 
		    \phi (\tilde {\mathcal K}^{\gamma_{\ell}}, \tilde X^{N,\gamma_\ell}(t)) 
		 \right|
		 \right]
		  + 
		 \mathbb E_{\tilde {\mathbb Q}^{\gamma_\ell}_t}
		 \left[\left|
		 \tilde Z^{\gamma_{\ell}}_{c}(t)
		 \left(
		\phi\left(\tilde {\mathcal K}^{N,\gamma_\ell}, \tilde X^{N,\gamma_{\ell}}(t)\right) -\phi \left(\tilde {\mathcal K}^{\gamma_{\ell}}, \tilde X^{\gamma_\ell}(t)\right) 
		 \right)
		 \right|
		 \right]\\
		 &\leq 
		\|\phi \|_{\infty} \mathbb  E_{\tilde {\mathbb Q}^{\gamma_\ell}_t}
		\left[\left|
		\left(\tilde Z^{\gamma_\ell}_c(t)-\tilde Z^{N,\gamma_{\ell}}_{c}(t) \right) 
		\right|
		\right]
		 +
		 \mathbb E_{\tilde {\mathbb P}^{\gamma_\ell}_t}
		 \left[\left|
		 \phi\left(\tilde {\mathcal K}^{N,\gamma_\ell}, \tilde X^{N,\gamma_{\ell}}(t)\right) -\phi \left(\tilde {\mathcal K}^{\gamma_{\ell}}, \tilde X^{\gamma_\ell}(t)\right) 
		 \right|
		 \right]
	\end{align*}
	where $\|\phi \|_{\infty}$ is the supremum norm of $\phi$, and the measure $\tilde {\mathbb{ P}}^{\gamma_\ell}_t$ is defined by  $\frac{\dd \tilde {\mathbb{ P}}^{\gamma_\ell}_t}{\dd \tilde {\mathbb{ Q}}^{\gamma_\ell}_t}=\tilde Z^{\gamma_\ell}_c(t)$.
	Note that $\tilde {\mathbb{ P}}^{\gamma_\ell}_t$ and $\tilde {\mathbb Q}^{\gamma_\ell}_t$ are equivalent due to the almost sure positiveness of $\tilde Z^{\gamma_\ell}_c(t)$.
	Finally, terms on the right hand side of the second inequality converge to zero by \Cref{lemma A.1.4}, \Cref{lemma applying Skorokhod's representation theorem}, and the dominant convergence theorem. 
\end{proof}

By \Cref{proposition An-A continuous time}, we can verify conditions \ref{eq. assumption 1.3 of the big frame work}
and \ref{eq. assumption 2.3 of the big frame work}.

\begin{lemma}\label{lem A.1.3}[Verifying \ref{eq. assumption 1.3 of the big frame work} and \ref{eq. assumption 2.3 of the big frame work}]
	~
	\begin{itemize}
		\item For $\ell=1$ (the first time scale), assume that \eqref{eq. assumption initial conditons} and \eqref{eq. assumption infinite explosion time gamma1} hold. Then, for any $t>0$ and any bounded continuous function $\phi$, the random variables and probability space defined in \ref{subsection new random variables} satisfy \ref{eq. assumption 1.3 of the big frame work} and \ref{eq. assumption 2.3 of the big frame work}.
		
		\item For $\ell=2$ (the second time scale), assume that the conditions of the second part of \Cref{thm convergence of artificial filters} hold. Then, for any $t>0$ and any bounded continuous function $\phi$ such that $\phi(\kappa,x)=\phi(\kappa, \Pi_2 x)$ $\forall(\kappa,x)\in \mathbb R^{r} \times \mathbb R^{n}$, the random variables and probability space defined in \ref{subsection new random variables} satisfy \ref{eq. assumption 1.3 of the big frame work} and \ref{eq. assumption 2.3 of the big frame work}.
	
	\end{itemize}
\end{lemma}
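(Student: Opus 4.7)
The plan is to exploit the Kallianpur--Striebel ratio representations derived in \ref{Sec change of measure methods}, which rewrite each filter evaluation as a quotient of two conditional expectations. Concretely, \eqref{eq. f function Kallianpur-Striebel formula full model continuous time} and \eqref{eq. f function Kallianpur-Striebel formula reduced model continuous time} give
\[
\hat f^{N,\gamma_\ell}_{\phi,c,t}\bigl(\tilde Y^{\gamma_\ell}_{c,0:t}\bigr) = \frac{\hat g^{N,\gamma_\ell}_{\phi,c,t}\bigl(\tilde Y^{\gamma_\ell}_{c,0:t}\bigr)}{\hat g^{N,\gamma_\ell}_{1,c,t}\bigl(\tilde Y^{\gamma_\ell}_{c,0:t}\bigr)}, \qquad \hat f^{\gamma_\ell}_{\phi,c,t}\bigl(\tilde Y^{\gamma_\ell}_{c,0:t}\bigr) = \frac{\hat g^{\gamma_\ell}_{\phi,c,t}\bigl(\tilde Y^{\gamma_\ell}_{c,0:t}\bigr)}{\hat g^{\gamma_\ell}_{1,c,t}\bigl(\tilde Y^{\gamma_\ell}_{c,0:t}\bigr)},
\]
with \eqref{eq. f function Kallianpur-Striebel formula full model discrete time}--\eqref{eq. f function Kallianpur-Striebel formula reduced model discrete time} as the discrete-time analogues. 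Reducing to these ratios puts the result within reach of \Cref{proposition An-A continuous time}, which has already done the hard work of establishing convergence in probability of the individual $\hat g$ functionals; all that remains is to upgrade quotient-wise convergence from factor-wise convergence.

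To carry this out, I would apply \Cref{proposition An-A continuous time} with the given $\phi$ to conclude that $A_N := \hat g^{N,\gamma_\ell}_{\phi,c,t}(\tilde Y^{\gamma_\ell}_{c,0:t}) \to A := \hat g^{\gamma_\ell}_{\phi,c,t}(\tilde Y^{\gamma_\ell}_{c,0:t})$ in $\tilde{\mathbb Q}^{\gamma_\ell}_t$-probability, and the second (reciprocal) half of the same proposition to conclude $B_N := 1/\hat g^{N,\gamma_\ell}_{1,c,t}(\tilde Y^{\gamma_\ell}_{c,0:t}) \to B := 1/\hat g^{\gamma_\ell}_{1,c,t}(\tilde Y^{\gamma_\ell}_{c,0:t})$ in $\tilde{\mathbb Q}^{\gamma_\ell}_t$-probability. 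Invoking the standard fact that the product of two sequences converging in probability converges in probability to the product of the limits (a direct consequence of Slutsky combined with the continuous mapping theorem applied to $(x,y)\mapsto xy$), one then obtains $A_N B_N \to A B$, which is precisely \ref{eq. assumption 1.3 of the big frame work}. The discrete-time assertion \ref{eq. assumption 2.3 of the big frame work} follows from the same maneuver applied to the formulas \eqref{eq. f function Kallianpur-Striebel formula full model discrete time}--\eqref{eq. f function Kallianpur-Striebel formula reduced model discrete time} together with the discrete half of \Cref{proposition An-A continuous time}.

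The only delicate point, and the one I expect to be the main (though mild) obstacle, is the tacit appeal to the continuous mapping theorem on $x\mapsto 1/x$ inside the reciprocal half of \Cref{proposition An-A continuous time}: this requires the limiting denominator $\hat g^{\gamma_\ell}_{1,c,t}(\tilde Y^{\gamma_\ell}_{c,0:t})$ (respectively its discrete analogue) to be strictly positive $\tilde{\mathbb Q}^{\gamma_\ell}_t$-a.s. That positivity comes from chaining two observations: first, $\tilde Z^{\gamma_\ell}_c(t)$ is $\tilde{\mathbb Q}^{\gamma_\ell}_t$-a.s. positive, being an exponential whose joint law coincides, by \ref{eq. assumption 1.2 of the big frame work}, with that of the $\mathbb P^{\gamma_\ell}_c$-a.s. positive Girsanov random variable $Z^{\gamma_\ell}_c(t)$; second, $\hat g^{\gamma_\ell}_{1,c,t}(\tilde Y^{\gamma_\ell}_{c,0:t})$ equals a conditional expectation of this a.s. positive random variable and is therefore itself a.s. positive. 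No additional ingredient is needed in the $\ell=2$ case, since the structural hypothesis $\phi(\kappa,x)=\phi(\kappa,\Pi_2 x)$ is already built into the premises of \Cref{proposition An-A continuous time}.
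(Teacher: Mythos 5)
Your proposal is correct and follows essentially the same route as the paper: the paper likewise reduces \ref{eq. assumption 1.3 of the big frame work} and \ref{eq. assumption 2.3 of the big frame work} to the Kallianpur--Striebel ratio representations \eqref{eq. f function Kallianpur-Striebel formula full model continuous time}--\eqref{eq. f function Kallianpur-Striebel formula reduced model discrete time} (transferred to the auxiliary space via \Cref{lem A.1.1 A.1.2} and \Cref{lem A.2.1 A.2.2}) and then invokes \Cref{proposition An-A continuous time} together with the fact that $A_N B_N \to AB$ in probability when $A_N \to A$ and $B_N \to B$ in probability. Your explicit justification of the a.s. positivity of the limiting denominator is a point the paper leaves implicit (it is addressed only in the proof of \Cref{theorem framework first time scale}), but it does not change the argument.
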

\begin{proof}
	By \eqref{eq. f function Kallianpur-Striebel formula full model continuous time} and \eqref{eq. f function Kallianpur-Striebel formula reduced model continuous time} and \Cref{lem A.1.1 A.1.2}, we have that 
	\begin{align*}
	\hat f^{N,\gamma_{\ell}}_{\phi, c,t}(\tilde Y^{N,\gamma_{\ell}}_{c,0:t}) 
	&=
	\frac{\hat g^{N,\gamma_{\ell}}_{\phi, c,t}(\tilde Y^{\gamma_{\ell}}_{c,0:t}) }{\hat g^{N,\gamma_{\ell}}_{1, c,t}(\tilde Y^{\gamma_\ell}_{c,0:t}) } 
	&& \text{for all bounded measurable function $\phi$ and $\ell=1,2$},
	\\
	\hat f^{\gamma_\ell}_{\phi, c,t}(\tilde Y^{\gamma_\ell}_{c,0:t})
	&=
	\frac{\hat g^{\gamma_\ell}_{\phi, c,t}(\tilde Y^{\gamma_\ell}_{c,0:t})}{\hat g^{\gamma_\ell}_{1, c,t}(\tilde Y^{\gamma_\ell}_{c,0:t})}
	&& \text{for all bounded measurable function $\phi$ and $\ell=1,2$}, 
	\end{align*}
	$\mathbb Q^{\gamma_\ell}_{t}$-a.s..
    Therefore, \ref{eq. assumption 1.3 of the big frame work} follows immediately from \Cref{proposition An-A continuous time}.
    Similarly, we can also show \ref{eq. assumption 2.3 of the big frame work}, which proves the lemma.
\end{proof}

\subsection{The proof of the target theorem}

With these preparations, we can finally prove \Cref{thm convergence of artificial filters}.

\begin{proof}[The proof of \Cref{thm convergence of artificial filters}]
	The result follows immediately from \Cref{theorem framework first time scale}, \Cref{lem A.1.1 A.1.2}, \Cref{lem A.2.1 A.2.2}, \Cref{lemma A.1.4}, and \Cref{lem A.1.3}.
\end{proof}

\section{The proof of the continuous-time part of  \Cref{thm convergence of particle filters}}{\label{section convergence of particle filters}}This section contributes to showing the convergence of the proposed continuous-time SIR particle filters, i.e., the continuous-time part of \Cref{thm convergence of particle filters}.
For diffusion systems, the convergence of continuous-time SIR particle filters is shown in \cite[Chapter 9]{bain2008fundamentals} by using backward Zakai's equation.
In this section, we use a similar proof scheme to extend the result to piecewise deterministic Markov processes.
Though we still follow the main idea of backward stochastic differential equations, we present the proof in the forward fashion so that notations can be simplified, and the discussion about the existence of the solution of the backward stochastic differential equation can be avoided.
Throughout this section, we assume that there hold the non-explosive condition \eqref{eq. assumption infinite explosion time gamma1} for the first time scale and \eqref{eq. assumption infinite explosion time gamma2} for the second time scale.
Also, the scaling factor $N$ is fixed in this section, so we drop the argument $N$ for most variables that have a dependence on $N$ to simplify the notation.

In general, the proof follows from the law of large numbers.
We first show the filter generated by these particles can accurately approximate the true filter under some mild conditions (see \Cref{thm help}), which require 
\begin{itemize}[itemindent=0em]
	\item the particle filter to accurately approximate the true filter at the initial time \ref{cc1},
	\item the error between the particle filter and the true one to grow mildly over time \ref{cc2},
	\item the resampling to perturb the particles mildly \ref{cc3}. 
\end{itemize}
Fortunately, these conditions can be easily verified by the law of large numbers and the nature of the residual resampling method, which finally proves the result (see the proof of \Cref{thm main results} at the end of this section). 

Notably, when analyzing the propagation of the particle filter's error over time  (i.e., \ref{cc2}), we need to use particles obtained at an earlier time to predict the particle filter's performance at a later time.
The particle filter at the later time contains some information about the observation that is unavailable at the earlier time. 
As a result, we need to extend the capability of the particle filter so that it can generate an estimate of a random variable that is measurable with respect to a $\sigma$-algebra at a later time. 
This fact leads to some complicated notations, which we introduce as follows.

For particle filters $\bar \pi^{N,\gamma_\ell}_{M,c,t}(\phi)$ $(\ell=1,2)$ (see \Cref{alg continuous time regularized particle filters} and \Cref{def particle filters}), we term variables $\left(\kappa_j^{\gamma_\ell} (t),  x^{\gamma_\ell}_j(t), z^{\gamma_{\ell}}_j(t), w^{\gamma_{\ell}}_j(t)\right)$ (for $t>0$) as particles (the first two variables), Girsanov-like variables, and weights obtained in the sampling step, $\left(\bar \kappa_j^{\gamma_\ell} (i\delta),  \bar x^{\gamma_\ell}_j(i\delta)\right)$ as resampled particles for $i\in\mathbb N_{>0}$ or initial particles for $i=0$,
$\bar {\mathcal F}^{\gamma_\ell}_{c,t}$ ($i\in\mathbb N_{>0}$) as the $\sigma$-algebra generated by random variables $\{\left(\kappa_j^{\gamma_\ell} (t),  x^{\gamma_\ell}_j(t) \right)\}_{j=1,\dots, M}$, 
and $\bar {\mathcal F}^{\gamma_\ell}_{c,(i\delta)^+}$ (for $i\in\mathbb N_{>0}\cup\{0\}$) as the $\sigma$-algebra generated by all random variables $\left\{\left(\bar \kappa_j^{\gamma_\ell} (i\delta),  \bar x^{\gamma_\ell}_j(i\delta)\right)\right\}_{j=1,\dots,M}$.
Also, for any measurable function $\psi_{t,\tau}: \mathbb R^r\times \mathbb R^n \times \mathbb D_{\mathbb R^m}[t,\tau] \to \mathbb R$ where $\tau>t$, we denote
\begin{align*}
	\bar \pi^{N,\gamma_\ell}_{M,c,t,\tau}(\psi_{t,\tau}) 
	&\triangleq 
	\sum_{j=1}^{M} w^{\gamma_{\ell}}_j(t) \psi_{t,\tau}\left( \kappa_j^{\gamma_\ell} (t),  x^{\gamma_\ell}_j(t), Y^{N,\gamma_{\ell}}_{c,t:\tau}\right), 
	&&
	\bar p^{N,\gamma_\ell}_{M,c,t,\tau}(\psi_{t,\tau}) 
	\triangleq 
	\sum_{j=1}^{M} z^{\gamma_{\ell}}_j(t) \psi_{t,\tau}\left( \kappa_j^{\gamma_\ell} (t),  x^{\gamma_\ell}_j(t), Y^{N,\gamma_{\ell}}_{c,t:\tau}\right), \\
	\hat p^{N,\gamma_\ell}_{M,c,i\delta,\tau}(\psi_{i\delta,\tau})
	&\triangleq 
	\sum_{j=1}^{M} \frac{1}{M}\psi_{t,\tau}\left( \kappa_j^{\gamma_\ell} (i\delta),  x^{\gamma_\ell}_j(i\delta), Y^{N,\gamma_{\ell}}_{c,i\delta:\tau}\right),
	~~  i\in\mathbb N_{>0} \cup \{0\}.
\end{align*}
where $Y^{N,\gamma_{\ell}}_{c,t:\tau}$ is the trajectory of $Y^{N,\gamma_{\ell}}_{c}$ from time t to $\tau$.
The quantities $\bar \pi^{N,\gamma_\ell}_{M,c,t,\tau}(\psi_{t,\tau}) $ and $\hat p^{N,\gamma_\ell}_{M,c,i\delta,\tau}(\psi_{i\delta,\tau})$ utilize particles to generate estimates of conditional expectations and are, respectively, $\bar {\mathcal F}^{\gamma_\ell}_{c,t}\bigvee \mathcal Y^{N,\gamma}_{c,\tau}$ and $\bar {\mathcal F}^{\gamma_\ell}_{c,(i\delta)^+}\bigvee \mathcal Y^{N,\gamma}_{c,\tau}$ measurable. 
In this section, we only consider those $\psi_{t,\tau}$ such that 
\begin{equation}{\label{eq. condition psi}}
	b_{\psi_{t,\tau}} \triangleq 
	\sup_{(\kappa,x)\in \mathbb R^{r}\times \mathbb R^n}
    \mathbb E_{\mathbb P} \left[ \left|
    \psi_{t,\tau}\left( \kappa,  x, Y^{N,\gamma_{\ell}}_{c,t:\tau}\right) 
    \right|^2
    \right] < +\infty.
\end{equation}
Note that when $\psi_{t,\tau}(\kappa,x,y)=\phi(\kappa,x)$, the variable $\bar \pi^{N,\gamma_\ell}_{M,c,t,\tau}(\psi_{t,\tau})$ results in the SIR particle filter $\bar \pi^{N,\gamma_\ell}_{M,c,t}(\phi)$, and, obviously, such a $\psi_{t,\tau}$ satisfies \eqref{eq. condition psi} if $\phi$ is uniformly bounded.

For $\ell\in\{1,2\}$, we construct random variables $(\hat {\mathcal K}, \hat{X}^{\gamma_\ell}(\cdot))$ in the natural probability space such that they have the same law as $(\mathcal K, X^{\gamma_\ell}(\cdot))$ and are independent of $Y^{N,\gamma_{\ell}}_c(\cdot)$.
Recall that processes $Y^{\gamma_{\ell}}_c(\cdot)$ and $Y^{\gamma_{N,\ell}}_c(\cdot)$ (under $\mathbb P^{\gamma_{\ell}}_c$ and $\mathbb P^{N,\gamma_{\ell}}_c$, respectively) are m-vectors of independent standard Brownian motions, and, moreover, the former is independent of the underlying system $(\mathcal K, X^{\gamma_\ell}(\cdot))$
(\ref{Sec change of measure methods}).
Then, by \eqref{eq. definition of g reduced mode continuous time}, \eqref{eq. f function Kallianpur-Striebel formula reduced model continuous time} and the equivalence between $\mathbb P$ and $\mathbb P^{N,\gamma_{\ell}}_c$, there holds
\begin{equation}{\label{eq. f hat proof in the particle filter}}
\hat f^{\gamma_{\ell}}_{\phi,c,t}\left(Y^{N,\gamma_{\ell}}_{c,0:t} \right)
={\mathbb E_{\mathbb P} \left[ \hat Z^{\gamma_\ell}_c(t) \phi (\hat{\mathcal K}, \hat{X}^{\gamma_\ell}(t)) \left| \mathcal Y^{N,\gamma_\ell}_{c,t} \right. \right]}
~\Bigg/~
{\mathbb E_{\mathbb P} \left[ \hat Z^{\gamma_\ell}_c(t)  \left| \mathcal Y^{N,\gamma_\ell}_{c,t} \right. \right]} 
\qquad \mathbb P\text{-a.s.} \mathbb,~ \mathbb P^{N,\gamma_{\ell}}_c\text{-a.s.}. 
\end{equation}
where
\begin{align*}
	\hat Z^{\gamma_\ell}_c(t) &\triangleq \exp\left(
	\int_0^t h^{\top}(X^{\gamma_\ell}(s)) \dd Y^{N, \gamma_\ell}_c(s) - \frac{1}{2} \int_0^t \| h(X^{\gamma_\ell}(s))  \|^2 \dd s
	\right)
\end{align*}
Furthermore, by the boundedness of $h(\cdot)$, we can show the moments of $\hat Z^{\gamma_\ell}_c(t_2)/ \hat Z^{\gamma_\ell}_c(t_1)$ ($0\leq t_1\leq t_2$) to be uniformly bounded in the sense that
\begin{equation}{\label{eq. boundedness of moments of Z hat}}
	\mathbb E_{\mathbb P} \left[\left(\hat Z^{\gamma_\ell}_c(t_2)/ \hat Z^{\gamma_\ell}_c(t_1)\right)^{q}
	\left|
	\hat {\mathcal K} = \kappa, \hat X^{\gamma_{\ell}}(t_1)=x
	\right.
	\right]
	\leq e^{{2mq^2\|h\|^2_{\infty}(t_2-t_1)}} 
\end{equation}
for any integer $q$ and any $(\kappa, x) \in\mathbb R^r \times \mathbb R^n$
where $\|\cdot\|_{\infty}$ is the supremum norm.

Also, for any measurable function $\psi_{t,\tau}: \mathbb R^r\times \mathbb R^n \times \mathbb D_{\mathbb R^m}[t,\tau] \to \mathbb R$, we term
\begin{align}
	\tilde p^{N,\gamma_\ell}_{c,t,\tau}(\psi_{t,\tau}) &\triangleq  {\mathbb E_{\mathbb P} \left[ \hat Z^{\gamma_\ell}_c(t) \psi_{t,\tau} (\hat{\mathcal K}, \hat X^{\gamma_\ell}(t), Y^{N,\gamma_{\ell}}_{c,t,\tau}) \left| \mathcal Y^{N,\gamma_\ell}_{c,\tau} \right. \right]}
	\label{eq. p tilde psi}
	\\
	\tilde \pi^{N,\gamma_\ell}_{c,t,\tau}(\psi_{t,\tau}) &\triangleq \frac{\tilde p^{N,\gamma_\ell}_{M,c,t,\tau}(\psi_{t,\tau})}
	{\tilde p^{N,\gamma_\ell}_{M,c,t,\tau}(1)}. {\label{eq. pi tilde psi}}
\end{align}
Moreover, for any measurable $\psi_{t,\tau}$ and any $i\in\mathbb N_{>0}$ such that $i\delta< t$, we define function $\bar \psi^{t}_{i\delta,\tau}: \mathbb R^r\times \mathbb R^n \times \mathbb D_{\mathbb R^m}[i\delta,\tau] \to \mathbb R$ as 
\begin{align}
	&\bar \psi^{t}_{i\delta,\tau}\left(\kappa, x, y\right)= \mathbb E_{\mathbb P} \Bigg[ \frac{\hat Z^{\gamma_\ell}_c(t)}{\hat Z^{\gamma_\ell}_c(i \delta)} \psi_{t,\tau} \left(\hat{\mathcal K},  \hat X^{\gamma_\ell}(t), Y^{N,\gamma_{\ell}}_{c,t:\tau}\right) \label{eq bar psi}
	\left| \hat {\mathcal K} = \kappa, \hat X^{\gamma_{\ell}}(i \delta)=x, Y^{N,\gamma_\ell}_{c,i \delta : \tau}=y\right. \Bigg] 
\end{align}
which, by Jensen's inequality, the law of total expectation, and \eqref{eq. boundedness of moments of Z hat}, satisfies 
\begin{equation}{\label{eq. boundedness bar psi}}
	b_{\bar \psi^{t}_{i\delta,\tau}} \triangleq 
	\sup_{(\kappa,x)\in \mathbb R^{r}\times \mathbb R^n}
	\mathbb E_{\mathbb P} \left[ \left|
	\bar \psi^{t}_{i\delta,\tau}\left( \kappa,  x, Y^{N,\gamma_{\ell}}_{c,i\delta :\tau}\right) 
	\right|^2
	\right] 
	\leq
	e^{{8m\|h\|^2_{\infty}(t-i\delta)}} b_{\psi_{t,\tau}}
\end{equation}
if $\psi_{t,\tau}$ satisfies \eqref{eq. condition psi}.
Moreover, using the law of total expectation and Markov property of $\hat X^{\gamma_{\ell}}(\cdot)$ (inherited from the Markov property of $X^{\gamma_{\ell}}(\cdot)$), one also has
\begin{equation}{\label{eq. tilde p recursive relation}}
		\tilde p^{N,\gamma_\ell}_{c,t,\tau}(\psi_{t,\tau})
		=\tilde p^{N,\gamma_\ell}_{c,i\delta,\tau}\left(\bar \psi^{t}_{i\delta,\tau}\right)
\end{equation}
where $i\delta<t$.
Notably, in cases where $\psi_{t,\tau}(\kappa,x,y)=\phi(\kappa,x)$ and $\phi$ is bounded, $\tilde \pi^{N,\gamma_\ell}_{c,t,\tau}(\psi_{t,\tau}) $ leads to $\hat f^{\gamma_{\ell}}_{\phi,c,t}\left(Y^{N,\gamma_{\ell}}_{c,0:t} \right)$ (see \eqref{eq. f hat proof in the particle filter} and \eqref{eq. pi tilde psi}).
Consequently, our task can be transformed into showing the convergence of $\bar \pi^{N,\gamma_\ell}_{M,c,t,\tau}(\psi_{t,\tau})$ to
$\tilde \pi^{N,\gamma_\ell}_{c,t,\tau}(\psi_{t,\tau})$ as $M\to\infty$.

We first present a theorem adapted from \cite{crisan2001particle}.

\begin{theorem}[adapted from \cite{crisan2001particle}]\label{thm help}
	We assume that $X^{\gamma_{\ell}}$ ( $\ell\in\{1,2\}$) is almost surely non-explosive.
	Then, for any $t>0$ and any bounded measurable function $\phi$, there holds the relation
	\begin{equation*}
		\mathbb E_{\mathbb P}\left[ \left|\bar \pi^{N,\gamma_\ell}_{M,c,t}(\phi)-
		\hat f^{\gamma_{\ell}}_{\phi,c,t}\left(Y^{N,\gamma_{\ell}}_{c,0:t} \right)\right| \right]
		\leq \frac{ \tilde c_t}{\sqrt M} \| \phi \|_{\infty}
	\end{equation*}
   where $\tilde c_t$ is a time dependent number, if the following conditions are satisfied.
   \begin{enumerate}[label=(C.C.\arabic*), itemindent=1em]
   	\item For any $\tau>0$ and any measurable function $\psi_{0,\tau}: \mathbb R^r\times \mathbb R^n \times \mathbb D_{\mathbb R^m}[0,\tau] \to \mathbb R$ satisfying \eqref{eq. condition psi}, there holds 
   	$$
   	\mathbb E_{\mathbb P} \left[\left|
   	\hat p^{N,\gamma_\ell}_{M,c,0,\tau}(\psi_{0,\tau})- 
   	 \psi_{t,\tau}\left( \mathcal K,  X^{\gamma_1}(0), Y^{N,\gamma_{\ell}}_{c,0:\tau}\right) 
   	\right|^2
   	\right]
   	\leq 
   	\frac{b_{\psi_{0,\tau}}}{{M}}
   	$$ \label{cc1}
   	\item 
   	For any $\tau>0$ and any measurable function $\psi_{t,\tau}: \mathbb R^r\times \mathbb R^n \times \mathbb D_{\mathbb R^m}[t,\tau] \to \mathbb R$ satisfying \eqref{eq. condition psi}, there holds
   	$$
   	\mathbb E_{\mathbb P} \left[\left|
   	\bar p^{N,\gamma_\ell}_{M,c,t,\tau}(\psi_{t,\tau}) 
   	-
   	\hat p^{N,\gamma_\ell}_{M,c,\lfloor t/\delta \rfloor\delta,\tau}\left(\bar \psi_{\lfloor t/\delta \rfloor\delta,\tau}^t\right)
   	\right|^2
   	\right]
   	\leq   
    \frac{ c_2 }{M} b_{\psi_{i\delta,\tau}}
   	$$
   	where $\bar \psi_{\lfloor t/\delta \rfloor\delta,\tau}$ is defined by \eqref{eq bar psi} and $c_2$ is a constant. \label{cc2}
   	\item For any $i\in\mathbb N_{>0}$ and any measurable function $\psi_{i\delta,\tau}: \mathbb R^r\times \mathbb R^n \times \mathbb D_{\mathbb R^m}[i\delta,\tau] \to \mathbb R$ satisfying \eqref{eq. condition psi}, there holds 
   	$$ \mathbb E_{\mathbb P} \left[\left| 
   	\bar p^{N,\gamma_\ell}_{M,c,i\delta,\tau}(\psi_{i\delta,\tau})
   	-
   	\hat p^{N,\gamma_\ell}_{M,c,i\delta,\tau}(\psi_{i\delta,\tau})
   	\right|^2\right] \leq \frac{c_3 }{M } b_{\psi_{i\delta,\tau}}
   	$$
   	where $c_3$ is a constant.  \label{cc3}
   \end{enumerate}
\end{theorem}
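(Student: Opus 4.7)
The plan is a two-stage argument: first establish an $L^2$-bound on the error of the \emph{unnormalized} empirical weighted sums $\bar p^{N,\gamma_\ell}_{M,c,t,\tau}(\psi_{t,\tau})$ against the true unnormalized conditional means $\tilde p^{N,\gamma_\ell}_{c,t,\tau}(\psi_{t,\tau})$ for all admissible $\psi$ satisfying \eqref{eq. condition psi}, using the three one-step bounds inductively; then convert this $L^2$-bound on the unnormalized filter into an $L^1$-bound on the normalized ratio. The starting identities are $\hat f^{\gamma_\ell}_{\phi,c,t}(Y^{N,\gamma_\ell}_{c,0:t}) = \tilde p^{N,\gamma_\ell}_{c,t,t}(\phi)/\tilde p^{N,\gamma_\ell}_{c,t,t}(1)$, from \eqref{eq. f hat proof in the particle filter} and the definitions \eqref{eq. p tilde psi}--\eqref{eq. pi tilde psi}, together with $\bar\pi^{N,\gamma_\ell}_{M,c,t}(\phi) = \bar p^{N,\gamma_\ell}_{M,c,t,t}(\phi)/\bar p^{N,\gamma_\ell}_{M,c,t,t}(1)$, since the normalized weights satisfy $w_j \propto z_j$.

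For the unnormalized $L^2$-bound I would induct on $K = \lfloor t/\delta \rfloor$, proving that $\mathbb{E}_{\mathbb P}\bigl[\bigl|\bar p^{N,\gamma_\ell}_{M,c,t,\tau}(\psi_{t,\tau}) - \tilde p^{N,\gamma_\ell}_{c,t,\tau}(\psi_{t,\tau})\bigr|^2\bigr] \leq C_t \, b_{\psi_{t,\tau}}/M$ for a finite constant $C_t$. At $t \in (K\delta,(K{+}1)\delta]$, I would chain four one-step comparisons via the triangle inequality,
$$\bar p_{M,c,t,\tau}(\psi) \stackrel{\mathrm{(CC2)}}{\approx} \hat p_{M,c,K\delta,\tau}\bigl(\bar\psi^{t}_{K\delta,\tau}\bigr) \stackrel{\mathrm{(CC3)}}{\approx} \bar p_{M,c,K\delta,\tau}\bigl(\bar\psi^{t}_{K\delta,\tau}\bigr) \stackrel{\mathrm{IH}}{\approx} \tilde p_{c,K\delta,\tau}\bigl(\bar\psi^{t}_{K\delta,\tau}\bigr) \stackrel{\eqref{eq. tilde p recursive relation}}{=} \tilde p_{c,t,\tau}(\psi),$$
using (CC2) for the sampling step, (CC3) for the resampling step, the induction hypothesis for the earlier segment, and the semigroup-type identity \eqref{eq. tilde p recursive relation} to close the chain. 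Constants are tracked via \eqref{eq. boundedness bar psi}, which bounds $b_{\bar\psi^{t}_{K\delta,\tau}}$ by $e^{8m\|h\|_\infty^{2}\delta}\,b_{\psi_{t,\tau}}$, so each inductive step multiplies $C_t$ by a fixed factor depending only on $\delta$ and $\|h\|_\infty$. The base case $K=0$ replaces the induction hypothesis by (CC1) to compare $\hat p_{M,c,0,\tau}(\bar\psi^{t}_{0,\tau})$ with $\tilde p_{c,0,\tau}(\bar\psi^{t}_{0,\tau})$.

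For the passage to the normalized filter, I would apply the unnormalized $L^2$-bound twice, once with $\psi \equiv \phi$ (bounded, so $b_\psi \leq \|\phi\|_\infty^{2}$) and once with $\psi \equiv 1$, and use the algebraic identity
$$\bar\pi^{N,\gamma_\ell}_{M,c,t}(\phi) - \hat f^{\gamma_\ell}_{\phi,c,t}\bigl(Y^{N,\gamma_\ell}_{c,0:t}\bigr) = \frac{1}{\tilde p^{N,\gamma_\ell}_{c,t,t}(1)} \Bigl[\bigl(\bar p^{N,\gamma_\ell}_{M,c,t,t}(\phi) - \tilde p^{N,\gamma_\ell}_{c,t,t}(\phi)\bigr) - \bar\pi^{N,\gamma_\ell}_{M,c,t}(\phi)\bigl(\bar p^{N,\gamma_\ell}_{M,c,t,t}(1) - \tilde p^{N,\gamma_\ell}_{c,t,t}(1)\bigr)\Bigr],$$
which shifts the random denominator $\bar p(1)$ into the intrinsic quantity $\tilde p^{N,\gamma_\ell}_{c,t,t}(1) = \mathbb{E}_{\mathbb P}\bigl[\hat Z^{\gamma_\ell}_c(t)\mid\mathcal Y^{N,\gamma_\ell}_{c,t}\bigr]$. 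Taking expectations, using $|\bar\pi(\phi)|\leq\|\phi\|_\infty$, Cauchy--Schwarz, and the negative-moment bound obtained by applying Jensen's inequality to $1/\hat Z^{\gamma_\ell}_c(t)$ together with \eqref{eq. boundedness of moments of Z hat}, yields the desired $L^1$-estimate $\tilde c_t\|\phi\|_\infty/\sqrt{M}$.

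The main obstacle is the ratio step, because the empirical denominator $\bar p(1)$ has no uniform lower bound and the conditional normalizer $\tilde p(1)$ only admits moment control, not a deterministic lower bound; this forces the rewriting above so that $1/\tilde p(1)$ (rather than $1/\bar p(1)$) appears, combined with the negative-moment estimate for $\hat Z^{\gamma_\ell}_c(t)$ via Jensen. A secondary technical point is tracking the exponential-in-$t$ growth of $C_t$ through the induction over resampling intervals, which is why the final constant $\tilde c_t$ is allowed to depend on $t$.
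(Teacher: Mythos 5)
Your proposal is correct and follows essentially the same route as the paper, which simply defers to the inductive argument of Crisan's Theorem 2.3.1: \ref{cc1} handles the initial time, \ref{cc2} the sampling step, \ref{cc3} the resampling step, with the identity \eqref{eq. tilde p recursive relation} closing the induction and the standard ratio decomposition (controlled via the negative moments of $\hat Z^{\gamma_\ell}_c(t)$ from \eqref{eq. boundedness of moments of Z hat}) passing from the unnormalized to the normalized filter. You have in fact supplied more detail than the paper, which only sketches this framework and leaves the computations to the reader.
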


\begin{proof}	
	The proof is almost the same as the one of \cite[Theorem 2.3.1]{crisan2001particle}; therefore, we only single out the proof framework and leave the details for readers. 
	
	In principle, the proof is shown by mathematical induction. Specifically, the condition \ref{cc1} guarantees the particle filter to accurately approximate the true filter at the initial time.
	The condition \ref{cc2} suggests the error between the particle filter and the true filter to grow mildly over time.
	The condition \ref{cc3} suggests that resampling perturbs the particles mildly so that the particle filter's accuracy will not be influenced too much. 
	Notably, compared with \cite[Theorem 2.3.1]{crisan2001particle}, this theorem considers the boundedness of $\psi_{t,\tau}$ in the sense of \eqref{eq. condition psi} rather than the supremum norm. 
	Consequently, all the discussions concerning boundedness in the supremum norm in the proof of \cite[Theorem 2.3.1]{crisan2001particle} are replaced by the argument of \eqref{eq. condition psi}.
\end{proof}

Finally, by verifying all the conditions in the above theorem, we can prove the continuous-time part of \Cref{thm convergence of particle filters}.

\begin{proof}[The proof of the continuous-time part of \Cref{thm convergence of particle filters}]
	To prove the result, we need to verify all the conditions in \Cref{thm help}.
	The condition \ref{cc1} follows immediately from the central limit theorem. 
	The condition \ref{cc2} follows from the fact that 
	\begin{align*}
		& \mathbb E_{\mathbb P} \left[\left(
		\bar p^{N,\gamma_\ell}_{M,c,t,\tau}(\psi_{t,\tau}) 
		-
		\hat p^{N,\gamma_\ell}_{M,c,\lfloor t/\delta \rfloor\delta,\tau}\left(\bar \psi_{\lfloor t/\delta \rfloor\delta,\tau}^t\right)
		\right)^2
		\left| \bar {\mathcal F}^{\gamma_\ell}_{c,(\lfloor t/\delta \rfloor\delta)^{+}} \right. 
		\right]\\
		&= \frac{1}{M^2}  \mathbb E_{\mathbb P} \left[ \sum_{j=1}^{M}
	    \left(z^{\gamma_{\ell}}_j(t) \psi_{t,\tau}\left( \kappa_j^{\gamma_\ell} (t),  x^{\gamma_\ell}_j(t), Y^{N,\gamma_{\ell}}_{c,t:\tau}\right)\right)^2
		\left| \bar {\mathcal F}^{\gamma_\ell}_{c,(\lfloor t/\delta \rfloor\delta)^{+}} \right. 
		\right]
		-
		\frac{1}{M} \mathbb E_{\mathbb P} \left[
		\left(\hat p^{N,\gamma_\ell}_{M,c,\lfloor t/\delta \rfloor\delta,\tau}\left(\bar \psi_{\lfloor t/\delta \rfloor\delta,\tau}^t\right)\right)^2
		\left| \bar {\mathcal F}^{\gamma_\ell}_{c,(\lfloor t/\delta \rfloor\delta)^+} \right. 
		\right] \\
		&\leq \frac{	\exp\left({{8m\|h\|^2_{\infty}(t-\lfloor t/\delta \rfloor\delta)}}\right)}{M} b_{\psi_{t,\tau}}
	\end{align*}
	where the equality is due to the independence of the motion of the particles, and the inequality follows from \eqref{eq. boundedness bar psi}.
	Finally, if we use the multinomial resampling, then we can have the relation 
	\begin{align*}
		\mathbb E_{\mathbb P} \left[\left(
		\bar p^{N,\gamma_\ell}_{M,c,i\delta,\tau}(\psi_{i\delta,\tau})
		-
		\hat p^{N,\gamma_\ell}_{M,c,i\delta,\tau}(\psi_{i\delta,\tau})
		\right)^2 
		\left| \bar {\mathcal F}^{\gamma_\ell}_{c,i\delta} \vee \mathcal Y^{N,\gamma_\ell}_{c,i\delta} \right. \right]  
		& \leq  \frac{1}{M^2}
		\mathbb E_{\mathbb P} \left[ \sum_{j=1}^{M} \left(\psi_{i\delta,\tau}(\kappa^{\gamma_\ell}_j(i\delta),x^{\gamma_{\ell}}_j(i\delta),Y^{N,\gamma_{\ell}}_{c,t,\tau})\right)^2  
		\left| \bar {\mathcal F}^{\gamma_\ell}_{c,i\delta} \right.
		\right]
		 \leq \frac{1}{M} b_{\psi_{t,\tau}}
	\end{align*}
	due to the fact that all the resampled particles are independently and identically distributed under the multinomial resampling. 
	Moreover, by \cite[Exerice 9.1]{bain2008fundamentals}, the above relation also works for the residual resampling, and, therefore, the condition \ref{cc3} holds with $c_3=1$.
\end{proof}

\section*{CRediT authorship contribution statement}
\textbf{Zhou Fang}: Conceptualization, Methodology, Software, Formal analysis, Writing - Original Draft, Writing - Review \& Editing.
\textbf{Ankit Gupta}: Conceptualization, Writing - Review \& Editing, Supervision, Funding acquisition.
\textbf{Mustafa Khammash}: Conceptualization, Writing - Review \& Editing, Supervision, Funding acquisition.

\section*{Acknowledgments}
We acknowledge funding from the Swiss National Science Foundation under grant 182653.

\bibliographystyle{model1-num-names}
\bibliography{references}

\end{document}